\newcommand{\be}{\begin{equation}}
\newcommand{\ee}{\end{equation}}
\newcommand{\dd}{{\rm d}}
\newcommand{\e}[1]{\,{\rm e}^{#1}\,}
\newcommand{\bsa}{{\mathbf a}}
\newcommand{\bsb}{{\mathbf b}}
\newcommand{\bse}{{\mathbf e}}
\newcommand{\bsr}{{\mathbf r}}
\newcommand{\bsp}{{\mathbf p}}
\newcommand{\bsv}{{\mathbf v}}
\newcommand{\bsw}{{\mathbf w}}
\newcommand{\bsz}{{\mathbf z}}
\newcommand{\bsA}{{\mathbf A}}
\newcommand{\bsD}{{\mathbf D}}
\newcommand{\bsX}{{\mathbf X}}
\newcommand{\bsZ}{{\mathbf Z}}
\newcommand{\bGa}{{\boldsymbol \Gamma}}
\newcommand{\cF}{{\mathcal F}}
\newcommand{\cL}{{\mathcal L}}
\newcommand{\cN}{{\mathcal N}}
\newcommand{\cO}{{\mathcal O}}
\newcommand{\cP}{{\mathcal P}}
\newcommand{\bsone}{{\mathbf 1}}
\newcommand{\R}{\mathbb R}
\newcommand{\N}{\mathbb N}
\renewcommand{\grad}[1]{{\boldsymbol \nabla}_{#1}}
\renewcommand{\div}[1]{\boldsymbol \nabla_{#1} \cdot}
\title{The overdamped limit of dynamic density functional theory: Rigorous
results\thanks{This work is supported by European Research Council
Advanced Grant No.\ 247031 and EPSRC Grant No.\ EP/H034587/1.}}
\author{B. D. Goddard\thanks{Department of Chemical Engineering, Imperial
College London, London SW7 2AZ, UK (b.goddard@imperial.ac.uk).} \and
G. A. Pavliotis\thanks{Department of Mathematics, Imperial College
London, London SW7 2AZ, UK (g.pavliotis@imperial.ac.uk).} \and S.
Kalliadasis\thanks{Department of Chemical Engineering, Imperial
College London, London SW7 2AZ, UK (s.kalliadasis@imperial.ac.uk).}}
\begin{document}

\maketitle

\begin{abstract}
Consider the overdamped limit for a system of interacting particles
in the presence of hydrodynamic interactions. For two-body
hydrodynamic interactions and one- and two-body potentials, a
Smoluchowski-type evolution equation is rigorously derived for the
one-particle distribution function. This new equation includes a
novel definition of the diffusion tensor.  A comparison with
existing formulations of dynamic density functional theory is also
made.
\end{abstract}

\begin{keywords}
dynamic density functional theory, colloids, overdamped limit, hydrodynamic
interactions, Hilbert expansion, Smoluchowski equation, homogenization.
\end{keywords}

\begin{AMS}
82C22, 76M45, 76M50, 35B40
\end{AMS}

\section{Introduction} \label{S:Introduction}


\subsection{Review of existing dynamic density functional theory}

Several problems in condensed matter physics such as colloidal
suspensions and polymers can often be described as systems of
interacting Brownian particles, either in the presence or absence of
hydrodynamic interactions~\cite{Dhont96,Ottinger96}. Hydrodynamic
interactions are due to forces on the colloid particles caused by
flows in the suspending fluid (referred to as the bath), which are
generated by the motion of the colloidal particles, and can be
thought of as generalised friction forces. Such systems of
interacting Brownian particles can be described either in phase
space, when both positions and momenta of the particles are taken
into account, or in configuration space, when inertial effects are
neglected and only the position of the Brownian particles is taken
into account. The evolution of the phase space distribution function
is described by the Kramers equation~\cite{Klein22, Kramers40}. On
the other hand, the evolution of the distribution function in
configuration space is governed by the Smoluchowski
equation~\cite{Smoluchowski15, Einstein05}.

The Smoluchowski equation can be derived from the Kramers equation
in the overdamped, i.e.\ high friction, limit. In this limit the
velocity of the particles thermalizes quickly, i.e.\ the velocity
distribution converges quickly to a Maxwellian and the momentum
variables can be eliminated through an appropriate adiabatic
elimination procedure. This procedure is now well understood, both
for a single Brownian particle as well as for systems of interacting
particles, and both in finite as well as infinite dimensions
\cite{CerraiFreidlin06a,CerraiFreidlin06b,Nelson67,Risken84,Gardiner85}.

Whilst the derivation of the Smoluchowski equation was already
discussed by both Klein and Kramers, their approach was largely
heuristic and a rigorous theory was not introduced until
later~\cite{Nelson67}. Systematic adiabatic elimination techniques
were introduced in the '70s and applied to the problem of the
rigorous derivation of the Smoluchowski equation for the
Kramers equation, e.g.\ by Wilemski \cite{Wilemski76} and Titulaer
\cite{Titulaer78,Titulaer80}. In particular, Titulaer considered the
fully-interacting $N$-body linear Kramers equation and used a
multiple-time-scale or Chapman-Enskog expansion in the friction
constant to systematically derive the $N$-body Smoluchowski equation
and its corrections. These systematic adiabatic elimination
procedures can be understood in the context of singular perturbation
theory for Markov processes \cite{Papanicolaou76} or, more
generally, in the framework of multiscale methods
(e.g.~\cite{PavliotisStuart08}). A pedagogical discussion of the
multiple-time-scale technique was given by Bocquet \cite{Bocquet97},
for the case of non-interacting particles and for the one-body
reduced distribution function. See also~\cite[Ch.
11]{PavliotisStuart08} and~\cite[Ch. 8]{pavliotis_lecture_notes}. It
is worth noting that in all these derivations, the equations for the
one-body distribution function are linear.


Both the Kramers and Smoluchowski equations describe the full
$N$-body dynamics, and, although they are linear, they are not
well-suited to computation. This is due to the large number of
variables, and whilst the derivation of the Smoluchowski equation
from the Kramers equation is of fundamental interest, the reduction
from $6N$ to $3N$ variables by eliminating the momentum variables is
insignificant in terms of computational complexity. However, a
further simplification arises by integrating out over the positions
(and momenta) of all but one particle, which then allows to obtain
the dynamics of the reduced distribution functions~\cite{Balescu97,
ResiboisDeLeener77}; see also (\ref{fnDefn}). One of the main goals
of statistical mechanics and kinetic theory is the derivation of
closed equations for these reduced distribution functions, and in
particular for the one-body distribution function. If there are no
inter-particle interactions, i.e.\ the only force comes from an
external potential, and hydrodynamic interactions between particles
are neglected, this reduction procedure results in one-particle
versions of the Kramers and Smoluchowski equations.  In the more
general case where such inter-particle effects may not be neglected,
the equations must be closed by choosing a suitable approximation of
the higher-body densities in terms of the one-body distribution
function, e.g.\ a mean field approximation. Such a description is
the ultimate aim of dynamic density functional theory (DDFT).

Consider a system of $N$ interacting particles with $N$-body
distribution function $f^{(N)}(\bsr_1, \bsp_1, \dots,
\bsr_N,\bsp_N,\tau)$, which gives the probability of finding
particles at $\bsr_1, \dots, \bsr_N$ with momenta $\bsp_1, \dots,
\bsp_N$ at time $\tau$. The derivation of a self-consistent DDFT
requires expressing the full $N$-body distribution function $f^{(N)}$ in terms
of the one-body reduced
distribution
$f^{(1)}(\bsr_1,\bsp_1,\tau)$ (see (\ref{fnDefn})) or, if starting
from the Smoluchowski equation, in terms of $\rho(\bsr_1,\tau):=\int
\dd \bsp_1 f^{(1)}(\bsr_1,\bsp_1,\tau)$. Whilst it is known that
$f^{(N)}$ (and thus all properties of the system including all lower
$n$-body distributions) is given by a unique functional of $\rho$,
both in \cite{HohenbergKohn64,Mermin65} and out \cite{ChanFinken05}
of equilibrium, in the general case this functional is unknown.
However, much work has been done for the equilibrium case (density
functional theory, or DFT which allows for an accurate description
of the microscopic properties of a fluid in terms of its density
distribution; see \cite{Evans79,RomanDietrich85} for early work and
e.g.\ \cite{Wu06,WuLi07} for recent overviews), and there exist
accurate functionals e.g.\ Rosenfeld's fundamental measure theory
for hard spheres
\cite{Rosenfeld89,RosenfeldSchmidtLowenTarazona97,RothEvansLangKahl02}
and the mean field approximation
\cite{LikosLangWatzalawekLowen01,Archer05}, mentioned earlier, which
becomes exact for soft interactions at high densities. DFT
represents one of the most widely used methods in condensed matter
physics for the study of the microscopic structure of
non-homogeneous fluids within the framework of equilibrium
statistical mechanics. It offers an increasingly popular compromise
between computationally costly molecular dynamics simulations and
various phenomenological approaches. It has been used to describe a
wide variety of physical settings, ranging from
polymers~\cite{Li05}, liquid crystals~\cite{deGennes93} and
molecular self-assembly~\cite{Tala03,Frink05} to interfacial
phenomena including wetting transitions on
substrates~\cite{Schick90,Berim09,Nold11}.

We now discuss how one moves from DFT to DDFT, where the system lies
away from equilibrium. DDFT is also a popular approach in condensed
matter physics, and has been applied to a wide range of problems
including spherical colloids without hydrodynamic interactions in
both configuration \cite{DietrichFrischMajhofer90,MarconiTarazona99,
MarconiTarazona00, EspanolLowen09} and phase space
\cite{MarconiMelchionna07,Archer09}, dense atomic liquids
\cite{Archer06}, anisotropic colloids \cite{RexWensinkLowen07}, and
inhomogeneous granular fluids \cite{MarconiTarazonaCecconi07}. The
effects of inertia \cite{MarconiTarazona06,MarconiTarazonaCecconiMelchionna07}
and
hydrodynamic interactions \cite{RexLowen09,Rauscher10} have also
been studied. However, none of the formalisms derived so far is
rigorous.  The relationships between different approaches are
summarized in Figure \ref{Fig:NonCommutative}.  We note, in
particular, that the two routes to obtain the one-body Smoluchowski
equation give, in general, different formulations. These derivations
can be divided into four cases, firstly by whether they start from
the Kramers or Smoluchowski equation, and secondly by whether or not
they include hydrodynamic interactions.

\begin{figure}
\begin{center}
\includegraphics[width=0.85\textheight, angle=90]{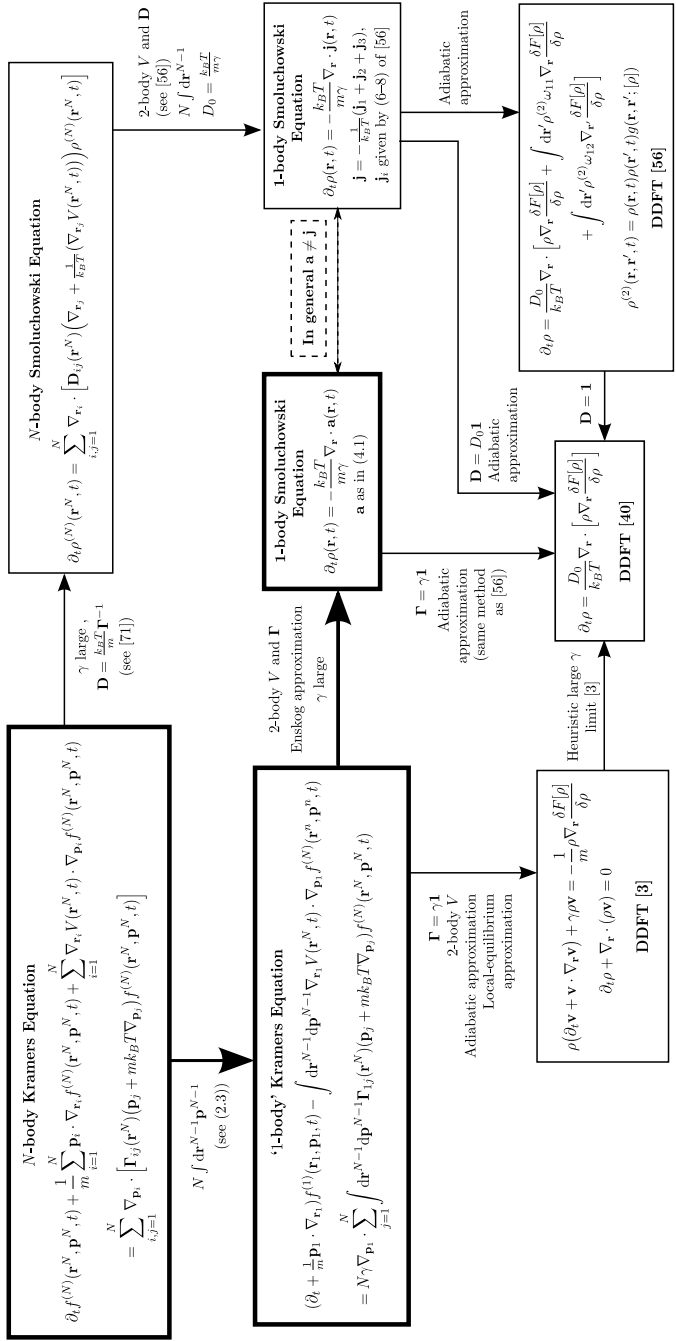}
\end{center}
\caption{Flow diagram of the various approaches used to obtain
one-body evolution equations and DDFTs from the full underlying
dynamics. Arrows indicate the interconnectedness of the different
approaches. Thick boxes/arrows: this work. Thin boxes/arrows:
previous approaches. Dashed boxes/arrows: note that the two routes
produce different one-body Smoluchowski equations when hydrodynamic
interactions are included, although both approaches are accurate to
$\cO(\epsilon^2)$. Text on arrows give brief descriptions of the
approximations made, see references for further details. Note in
particluar that the present formulation is a general one, and all
existing formulations may be derived from it.}
\label{Fig:NonCommutative}
\end{figure}


\subsection{Starting from the Kramers or the Smoluchowski equation}

As mentioned earlier, it is expected that the Smoluchowski equation
is valid in the overdamped limit, whereas for intermediate and small
values of the friction coefficient the Kramers equation should be
used. Perhaps the most common additional approximation is to ignore
the effects of the hydrodynamic interactions between the particles
(for exceptions, see \cite{RexLowen09,Rauscher10}). Whilst this
may be acceptable in a very dilute system, such interactions decay
only polynomially slowly with inter-particle distance, and are thus
long-range and important in many applications \cite{Dhont96}.


When starting from the Smoluchowski equation and neglecting
hydrodynamic interactions, it suffices to employ the adiabatic
approximation, first introduced by Marconi and Tarazona
\cite{MarconiTarazona99,MarconiTarazona00}. At equilibrium, Mermin's
proof~\cite{Mermin65} shows that there exists a unique functional of
$\rho$, $\cF_{\rm ex}[\rho]$, called the excess free energy
functional, which exactly determines the contributions from the
many-body potentials (which \emph{a priori} involve higher-order
reduced distributions). It then remains to determine accurate,
generally empirical, approximations to the unknown functional
$\cF_{\rm ex}[\rho]$.  The adiabatic approximation assumes that the
same relationship holds away from equilibrium.  This is equivalent
to assuming that the non-equilibrium $n$-body distributions are
identical to those in an equilibrium system with the same
instantaneous density $\rho$.  This approximation has proven
accurate in a range of systems
\cite{Archer05,ArcherEvans04,RexLowenLikos05,RoyallDzubiellaSchmidtBlaaderen07}.

If hydrodynamic interactions are included, this approximation is
insufficient.  This is because there are no hydrodynamic effects at
equilibrium.  Instead, at least for 2-body interactions, one uses
the identity
$\rho^{(2)}(\bsr,\bsr',\tau)=\rho(\bsr,t)\rho(\bsr',\tau)g(\bsr,\bsr';[\rho])$,
where $g$ is a pair-distribution function, while the
function $g-1$ is known as the pair correlation function (it
provides a measure of the distance over which particles are
correlated; for an ideal gas $g=1$) and assumes that a good
approximation to $g$ is known \cite{RexLowen09} (often $g$ can be
approximated with different methodologies, such as the BBGKY
hierarchy or the Ornstein-Zernike equation). Note in particular that
$g$ is a functional of $\rho$.


When starting from the Kramers equation an additional problem is
encountered. In this case one obtains an infinite hierarchy of
equations for the evolution of the momentum moments of
$f^{(1)}(\bsr,\bsp,\tau)$, i.e.\ for $\int \dd \bsp
p_1^{a_1}p_2^{a_2}p_3^{a_3} f^{(1)}(\bsr,\bsp,\tau)$ with $a_j \geq
0$, $\sum a_j=n$.  To obtain closure, one must truncate this
hierarchy at a given level, which requires the approximation of
higher moments. For example, the standard truncation at the velocity
($n=1$) level (i.e.\ the same level of description as the
Navier-Stokes equations), one must control terms (in appropriate
units) of the form $\int \dd \bsp \, (\bsp \otimes \bsp -\bsone)
f^{(1)}(\bsr,\bsp,\tau)$, where $\bsone$ is the 3$\times$3 identity
matrix.

At equilibrium, this term vanishes.  However, it is analogous to the
kinetic energy tensor in non-equilibrium thermodynamics
\cite{Kreuzer81} and thus is not negligible in general. Hence, for
atomic liquids \cite{Archer06}, it has been assumed that it can be
approximated by $\nu \partial_t \rho$, where $\nu$ is an arbitrary
collision frequency. Although this resulted in a DDFT analogous to
that previously derived for colloids in the high friction limit
\cite{MarconiTarazona99}, it is not clear that this is the correct
approximation in general.  For colloids with no hydrodynamic
interactions, this term can be dealt with using a local-equilibrium
approximation, or a Taylor expansion close to equilibrium, or
considering the high friction limit \cite{Archer09}. However, the
first two approaches are unsatisfactory for general systems which
may not lie close to (local) equilibrium, whilst the high friction
limit was not analysed rigorously.
This high friction limit is the main objective of the
present study. We will show that, in this limit, the term $\int \dd \bsp \,
(\bsp \otimes \bsp -\bsone) f^{(1)}(\bsr,\bsp,\tau)$ is indeed negligible
compared to $\rho$ and the momentum distribution.

It is worth noting that, if hydrodynamic interactions are neglected, the
heuristic high-friction calculation made by Archer \cite{Archer09} produces the
same DDFT as that derived by Marconi and Tarazona
\cite{MarconiTarazona99,MarconiTarazona00}. We shall demonstrate that this still
holds for the rigorous derivation.  However, when hydrodynamic interactions are
included, the two approaches do not lead to identical equations.  Section
\ref{S:Smoluchowski} discusses in detail these differences.


\subsection{Toward a rigorous derivation of dynamic density
functional theory}

Our main result is that, for a system of $N$ identical, spherically
symmetric colloid particles, up to errors of $\cO(\epsilon^2)$,
where $\epsilon \sim \gamma^{-1}$ with $\gamma$ the friction
constant for an infinitely dilute system (see Section \ref{S:Model}),
the dynamics of the one-body position distribution $\rho$ are given
by
\[
    \partial_\tau \rho(\bsr,\tau) = -\tfrac{k_BT}{m\gamma} \div{\bsr}
\bsa(\bsr,\tau),
\]
where $\bsa$ is the solution to a particular Fredholm integral
equation (Theorem \ref{T:MainTheorem}), $k_B$ is Boltzmann's
constant, $T$ the absolute temperature and $m$ the mass of the
colloid particles. Furthermore, we show that if the one-body phase
space distribution is written as a Hilbert \cite{Hilbert1953} (or
Champman-Enskog \cite{ChapmanCowling90}) expansion
$f^{(1)}(\bsr,\bsp,t)=f_0(\bsr,\bsp,t) + \epsilon f_1(\bsr,\bsp,t) +
\epsilon^2 f_2(\bsr,\bsp,t) + \cdots$ then (in appropriate units)
the first two terms are of the form
$\rho_0(\bsr,\tau)\e{-|\bsp|^2/2}$ and $\bsa(\bsr,\tau)\cdot\bsp
\e{-|\bsp|^2/2}$ respectively. In particular, non-zero terms in the
integral $\int \dd \bsp (\bsp \otimes \bsp - \bsone)
f^{(1)}(\bsr,\bsp,t)$ are at most $\cO(\epsilon^2)$.

We note that the evolution equation takes the form of a continuity
equation. In the framework of standard fluid dynamics one would
expect $\bsa(\bsr,t)=\rho(\bsr,t)\bsv(\bsr,t)$, where $\rho$ is the
fluid density and $\bsv$ is the velocity field.  Using the standard
definition \cite{Archer09} $\bsv(\bsr,t):=\rho^{-1}(\bsr,t)\int \dd
\bsp \, \bsp f^{(1)}(\bsr,\bsp,t)$, the Hilbert expansion
(\ref{fexpansion}), Corollary \ref{C:f0} and Lemma
\ref{L:f1explicit} show that, up to errors of $\cO(\epsilon^2)$,
this interpretation holds.


We now discuss the novelty of our approach and results.  In previous
work, the hydrodynamic interactions have been ignored. In this case,
the leading-order term in the expansion in the inverse of the
friction constant becomes linear (cf. (\ref{eps-2}) where it is
nonlinear), making the analysis significantly easier.  As noted
above, the full $N$-body equations are also linear. In this work, we
will consider two-body hydrodynamic interactions, along with a
two-body inter-particle potential, which require the approximation
of the two-body distribution.  As will be seen, a standard
approximation then leads to quadratic nonlinearities in the one-body
equation, formally analogous to the quadratic nonlinearity of the
collision operator in the Boltzmann equation
\cite{DeGrootMazur62,Cercignani75,ResiboisDeLeener77}. For more
general interactions, the nonlinearities will be of higher order; at
least \emph{a priori}, $n$-body interactions require $n$-body
distributions.  In Section \ref{S:Smoluchowski} we show that this
heuristic argument does not actually hold when starting from the
$N$-body Smoluchowski equation; higher distributions are required.

Let us now contrast the rigorous derivation of hydrodynamics from
the Boltzmann equation (in the limit of small mean free path, or
high collision frequency) -- see e.g. the comprehensive review by
Lebowitz \emph{et al} \cite{EspositoLebowitzMarra99}.  The approach
used therein, where time (and possibly space) are suitably rescaled
and a Hilbert expansion is used to derive an infinite hierarchy of
equations, which may then be solved to arbitrary order, is very
similar in spirit to ours. The collision term is replaced by a term
involving the hydrodynamic interactions, which has been much less
widely studied than the Boltzmann collision operator. Determining
the leading order term in the Hilbert expansion requires finding the
null space of the collision term (see (\ref{eps-2})). The full
friction operator is a complicated integral operator, and
determining its null space is non-trivial (see Lemma
\ref{L:NullSpaceL0+N0}).  In contrast, for the Boltzmann collision
and self-friction operators it is straightforward to show that the
null space contains only Maxwellians. Furthermore, in our situation,
there are additional nonlinear terms due to the inter-particle
potentials. However, due to these terms being independent of $\bsp$,
the momentum variable, and occurring with a higher power of the
small parameter, they do not hinder the analysis in the same way as
the hydrodynamic interaction terms. In addition, these
nonlinearities affect our ability to control the evolution of the
parts of the higher order corrections which lie in the null space of
the operator we need to invert. Sections \ref{S:epsilon1} and
\ref{S:Conclusions} highlight in detail these difficulties.


The structure of the paper is summarised as follows. In Section
\ref{S:Model} we give a description of the model, in both the
original and rescaled timescales, state our assumptions and give an
overview of the main result. In Section \ref{S:Hilbert} we develop
the solvability condition for the Hilbert expansion of the one-body
distribution $f^{(1)}$, which forms the basis for the proof of the
main result stated stated in Section \ref{S:Smoluchowski}, where we
also discuss its relationship with existing formulations of the
one-body Smoluchowski equation. In Section \ref{S:Conclusions} we
discuss the impact of our main result, including its application to
the derivation of DDFT, and also describe a number of associated
open problems. Appendix \ref{A:Proofs} contains proofs of the more
technical lemmas of Section \ref{S:Hilbert}.


\section{Description of the model and statement of main results} \label{S:Model}
We begin by considering the full equations of motion, in both
position and momentum for a large number $N$ of spherically
symmetric colloid particles of mass $m$ in a bath of a much larger
number of much lighter particles.  The interaction between the
colloidal particles and the bath is modelled on the level of
stochastic noise and the interaction between colloidal particles
mediated by the bath is modelled by friction terms.  The magnitude
of these two effects is correlated due to a generalised
fluctuation-dissipation theorem
\cite{Einstein05,MurphyAguirre72,Wilemski76,ErmakMcCammon78}.

The evolution equations are
\[
m \ddot \bsr_i = -\gamma m \sum_{j=1}^N \bGa_{ij} \dot \bsr_j + \bsX_i(\bsr^N) +
\sum_{j=1}^N \sqrt{2\gamma mk_BT} \bsA_{ij} \dot \bsw_j,
\]
where, for $\bsr^N=\bsr_1,\bsr_2,\dots,\bsr_N$, $\bGa_{ij} \in \R^{3\times
3}(\bsr^N)$ and $\bGa =(\bGa_{ij}) \in \R^{3N \times 3N}(\bsr^N)$ is the
friction tensor, which is positive definite, and in particular has a square
root.  $\gamma$ is the friction constant for a single isolated particle, and we
are interested in the regime where $\gamma \gg 1$.  The
$\dot\bsw_j(t)=(\dot w_j^1(t), \dot w_j^2(t), \dot w_j^3(t))^T$ are mean zero,
uncorrelated stochastic white noise terms and satisfy $\langle \dot
w_j^n(t)\rangle=0$ and $\langle \dot w_j^n(t) \dot
w_k^m(t')\rangle=\delta_{jk}\delta_{nm} \delta(t-t')$.  $\bGa$ is related to the
strength of the stochastic white noise terms $\dot \bsw_j$ via a generalized
fluctuation-dissipation theorem, namely $\bsA = \sqrt{\bGa}$. $\bsX_i$
represents the
force on particle $i$ exerted by an external field and interactions with the
other colloid particles.  $T$ is the absolute temperature, and $k_B$ is
Boltzmann's constant.

The motivation for this analysis is that in the high friction
(overdamped, large $\gamma$) limit, the momenta should reach
equilibrium on a much faster timescale than the positions. In
particular, we are interested in times of order $\cO(\gamma^{-1})$
and so begin by rescaling the time variable as $t= k_BT/(m\gamma)
\tau$ (where $t$ and $\tau$ are the new and old times respectively),
set $\bsX_i=-\grad{\bsr_i}U$ and define $\tilde
\bsX_i:=\bsX_i/(k_BT)=-\grad{\bsr_i}V=-\grad{\bsr_i}U/k_BT$ and
$\epsilon=\sqrt{k_BT/m}\gamma^{-1}$.  This rescaling leads to the
following system of equations:
\begin{align*}
\dot \bsr_i &= \frac{1}{\epsilon} \bsp_i \\
\dot \bsp_i &= -\frac{1}{\epsilon^2} \sum_{j=1}^N
\bGa_{ij}(\bsr^N) \bsp_j + \frac{1}{\epsilon} \tilde\bsX_i(\bsr^N) +
\sum_{j=1}^N \sqrt{2\epsilon^{-2}} \bsA_{ij} \dot \bsw_j.
\end{align*}

The constant in the time rescaling corresponds physically to $D_0$, the
diffusion constant for a single isolated particle.  The rescaling of $\bsX$
corresponds to measuring potential energy in units of the temperature.  For
$\epsilon$,
we note that $\sqrt{k_BT/m}$ is the average thermal equilibrium speed of a
particle at temperature $T$, whilst $\gamma^{-1}$ is approximately the time
required for the velocity distribution of the colloids to equilibrate.  Hence
$\epsilon$ has units of length, and in order to produce a non-dimensional
constant, it would be necessary to introduce another length scale.  Such a scale
is highly problem-dependent, and could for example be the typical length over
which the external potential varies, the length of a finite box in which the
particles are contained, a typical separation of colloid particles, or the size
of the colloids.  As such, we retain the dimensional parameter $\epsilon$ and
remark that the existence of a small parameter for applications must be checked
on a case-by-case basis.

The corresponding Fokker-Planck equation for the $N$-body distribution
function is
\begin{align}
&\partial_t f^{(N)}(\bsr^N,\bsp^N,t) +\frac{1}{\epsilon} \sum_{i=1}^N \bsp_i
\cdot \grad{\bsr_i} f^{(N)}(\bsr^N,\bsp^N,t) \notag\\
&\qquad -\frac{1}{\epsilon} \sum_{i=1}^N \grad{\bsr_i} V(\bsr^N) \cdot
\grad{\bsp_i}
f^{(N)}(\bsr^N,\bsp^N,t) \notag \\
&\qquad \qquad \qquad \qquad= \frac{1}{\epsilon^2} \sum_{i,j=1}^N \div{\bsp_i}
\Big[
\bGa_{ij}(\bsr^N) (\bsp_j + \grad{\bsp_j}) f^{(N)}(\bsr^N,\bsp^N,t) \Big].
\label{NBodyKramers}
\end{align}
Here we have used the notation $\bsr^n=(\bsr_1, \dots, \bsr_n)$ and the
analogue for $\bsp^n$.  We also find it convenient to write $\dd \bsr^{N-n}=\dd
\bsr_{n+1} \dots
\dd \bsr_N$ and the analogue for $\bsp$. In the
above, $f^{(N)}(\bsr^N,\bsp^N,t)$ is the probability of finding each particle
$i$ at position $\bsr_i$ with momentum $\bsp_i$ at time $t$.  We note that the
above equation is precisely the $N$-body Kramers equation, see e.g.\
\cite{DeutchOppenheim71,MurphyAguirre72}.

It is clear that $f^{(N)}$ encodes a huge amount of information, and
as such is very computationally demanding. We are not interested in
the distributions of the  positions and momenta of all the identical
particles, but in the distribution of the average values of these
quantities.  For this reason we introduce the reduced probability
distributions \be \label{fnDefn}
    f^{(n)}(\bsr^n,\bsp^n,t):= \frac{N!}{(N-n)!} \int
\dd \bsr^{N-n} \dd \bsp^{N-n} f^{(N)}(\bsr^N,\bsp^n,t).
\ee

Multiplying (\ref{NBodyKramers}) by $N$ and integrating over $\dd
\bsr^{N-1} \dd \bsp^{N-1}$, all terms in the sums with $i \neq 1$
vanish and the evolution of the one-body distribution is given by
\begin{align}
&\big( \partial_t + \frac{1}{\epsilon} \bsp_1 \cdot \grad{\bsr_1}
\big)
f^{(1)}(\bsr_1,\bsp_1,t)
-\frac{N}{\epsilon} \int \dd \bsr^{N-1}
\dd \bsp^{N-1} \grad{\bsr_1} V(\bsr^N) \cdot \grad{\bsp_1}
f^{(N)}(\bsr^N,\bsp^N,t) \notag \\
& \qquad \qquad =\frac{N}{\epsilon^2} \div{\bsp_1} \sum_{j=1}^N \int \dd
\bsr^{N-1}
\dd \bsp^{N-1} \bGa_{1j}(\bsr^N) (\bsp_j +
\grad{\bsp_j})
f^{(N)}(\bsr^N,\bsp^N,t).\label{oneBodyWithfN}
\end{align}

The difficulty in solving this equation lies primarily in the fact
that the last two terms still involve $f^{(N)}$, the full $N$-body
distribution.  In order to remove this dependence, and obtain a
closed equation, it is necessary to make some assumptions.  Firstly
we assume that the potential and friction tensor contain at most
two-body interactions. We will show that this is equivalent  to
requiring knowledge of only $f^{(2)}$.  From previous studies on the
derivation of DDFT (e.g.~\cite{ChanFinken05}), it is known that the
full $N$-body distribution function $f^{(N)}$ can be written as a
functional of the one-body spatial distribution, and therefore so
can $f^{(2)}$. We make the Enskog approximation to the two-body
distribution, in particular assuming that
$f^{(2)}(\bsr_1,\bsp_1,\bsr_2,\bsp_2,t)=f^{(1)}(\bsr_1,\bsp_1,t)f^{(1)}(\bsr_2,
\bsp_2,t) g(\bsr_1,\bsr_2)$, where the pair-distribution function
$g$ is assumed to be independent of $\bsp$ and $\epsilon$.

With this assumption, (\ref{fnDefn}) shows that $\int \dd \bsr \dd \bsp
f^{(1)}(\bsr,\bsp,t) g(\bsr,\bsr') = N-1$.  The role of $g$ is to describe
positional correlations of the particles, such as finite-size exclusion effects.
 It is an intermediate level of approximation between the mean field
approximation ($g \equiv 1$) and the full 2-body distribution
function $f^{(2)}$. We note that if $\bGa_{ij} = \delta_{ij}\bsone$, the
method outlined below allows a Smoluchowski equation to be derived
even if $g$ depends on $\bsp_1$ and $\bsp_2$.  It seems unlikely
than an analogous result holds in general, as
non-trivial momentum correlations on the two-particle level would
prevent the system from thermalising to a Maxwellian momentum
distribution.

The assumption that $g$ is independent of $\epsilon$ is known not to be valid
in general; we have only that $g=g(\bsr,\bsr';[\rho])$ ($g$ is a functional of
$\rho$), where $\rho(\bsr,t) = \int \dd \bsp f^{(1)}(\bsr,\bsp,t)$.  This
assumption does hold if the only contribution to $\rho$ comes from the zeroth
order term in $f^{(1)}$.  Even so, as we will discuss in Section
\ref{S:Conclusions}, if we expand $g$ in a power series in $\epsilon$, $g=g_0 +
\epsilon g_1 + \dots$, then the derivation changes only at the $\epsilon^0$
level, and we recover an analogous Smoluchowski equation. Such an approximation
is standard in the physics literature and, as suggested by the name, was first
proposed by Enskog \cite{Brush72} and revised by Van Beijeren and Ernst
\cite{VanBeijerenErnst73} to ensure consistency with irreversible
thermodynamics.  See also \cite{ResiboisDeLeener77}.

To summarize:
\begin{itemize}
\item {\bf Assumption 1} [Pairwise additive potential]
\[
    V(\bsr^N,t)=\sum_{i=1}^N V_1(\bsr_i,t) + \frac{1}{2}\sum_{i \neq j}
V_2(\bsr_i,\bsr_j).
\]
\item {\bf Assumption 2} [Pairwise additive friction]
\begin{align*}
    \bGa_{ij} = \delta_{ij} \bsone + \tilde \bGa_{ij} &= \delta_{ij} \big(
\bsone + \sum_{\ell \neq i} \bsZ_1(\bsr_i,\bsr_\ell) \big)
+ (1-\delta_{ij}) \bsZ_2(\bsr_i,\bsr_j)\\
&= \delta_{ij} \sum_{\ell \neq i} \big( \tfrac{1}{N-1}\bsone +
\bsZ_1(\bsr_i,\bsr_\ell) \big)
+ (1-\delta_{ij}) \bsZ_2(\bsr_i,\bsr_j),
\end{align*}
with the $\bsZ_j$ symmetric $3 \times 3$ matrices.
\\
\item {\bf Assumption 3} [Enskog approximation]
\be \label{Enskog}
 f^{(2)}(\bsr_1,\bsp_1,\bsr_2,\bsp_2,t)=f^{(1)}(\bsr_1,\bsp_1,t)f^{(1)}(\bsr_2,
\bsp_2,t) g(\bsr_1,\bsr_2).
\ee
\end{itemize}

Making these assumptions, we now calculate the two remaining terms in
(\ref{oneBodyWithfN}).  Using standard symmetry arguments gives:
\begin{align*}
&N\int \dd \bsr^{N-1}
\dd \bsp^{N-1} \grad{\bsr_1} V(\bsr^N) \cdot \grad{\bsp_1}
f^{(N)}(\bsr^N,\bsp^N,t) \\
&=\Big[ \grad{\bsr_1} V_1(\bsr_1,t)
+ \int \dd \bsr_2 \dd \bsp_2 g(\bsr_1,\bsr_2)
f^{(1)}(\bsr_2,\bsp_2,t) \grad{\bsr_1} V_2(\bsr_1,\bsr_2) \Big]
\cdot \grad{\bsp_1} f^{(1)}(\bsr_1,\bsp_1,t).
\end{align*}
and
\begin{align*}
& N\div{\bsp_1} \sum_{j=1}^N \int \dd
\bsr^{N-1} \dd \bsp^{N-1} \bGa_{1j}(\bsr^N) (\bsp_j +
\grad{\bsp_j}) f^{(N)}(\bsr^N,\bsp^N,t) \\
 &= \div{\bsp_1} \Big[ (\bsp_1 + \grad{\bsp_1}) f^{(1)}(\bsr_1,\bsp_1,t) \\
 & \qquad \qquad \quad + \int \dd \bsr_2 \dd \bsp_2
 g(\bsr_1,\bsr_2)f^{(1)}(\bsr_2,\bsp_2,t) \bsZ_1(\bsr_1,\bsr_2)  \times (\bsp_1
+ \grad{\bsp_1}) f^{(1)}(\bsr_1,\bsp_1,t) \\
& \qquad \qquad  \quad +\int \dd \bsr_2 \dd \bsp_2 g(\bsr_1,\bsr_2)
\bsZ_2(\bsr_1,\bsr_2) (\bsp_2 + \grad{\bsp_2}) f^{(1)}(\bsr_2,\bsp_2,t)
\times f^{(1)}(\bsr_1,\bsp_1,t) \Big].\\
\end{align*}

Hence, we have:
\begin{proposition}
Under Assumptions 1, 2 and 3, the evolution of the one-body reduced distribution satisfies
\begin{align}
 \partial_t f^{(1)}(\bsr,\bsp,t) &= \frac{1}{\epsilon} \Big[ -\bsp\cdot
\grad{\bsr} + \grad{\bsr}V_1(\bsr,t) \cdot \grad{\bsp} \notag \\
&\qquad + \int \dd \bsr' \dd
\bsp'
f^{(1)}(\bsr',\bsp',t) g(\bsr,\bsr') \grad{\bsr} V_2(\bsr,\bsr') \cdot
\grad{\bsp} \Big] f^{(1)}(\bsr,\bsp,t) \notag\\
&\hspace*{-12mm} + \frac{1}{\epsilon^2} \div{\bsp} \Big[ (\bsp +
\grad{\bsp}) f^{(1)}(\bsr,\bsp,t) \notag \\
& \hspace*{-12mm} \qquad \qquad \qquad  + \int \dd \bsr' \dd \bsp'
g(\bsr,\bsr') \bsZ_1(\bsr,\bsr') f^{(1)}(\bsr',\bsp',t) \times (\bsp +
\grad{\bsp}) f^{(1)}(\bsr,\bsp,t) \notag \\
&\hspace*{-12mm}  \qquad \qquad  \qquad  +\int \dd \bsr' \dd \bsp' g(\bsr,\bsr')
\bsZ_2(\bsr,\bsr') (\bsp' +  \grad{\bsp'}) f^{(1)}(\bsr',\bsp',t) \times
f^{(1)}(\bsr,\bsp,t) \Big] \notag \\
&=: \frac{1}{\epsilon}\big[\cL_1 f^{(1)} + \cN_1\big(
f^{(1)}, f^{(1)} \big)
\big] + \frac{1}{\epsilon^2}\big[\cL_0 f^{(1)} + \cN_0\big( f^{(1)}, f^{(1)}
\big) \big], \label{dtf1abstract}
\end{align}
where
\begin{subequations}
\begin{align}
    \cL_0 f &= \div{\bsp}(\bsp + \grad{\bsp}) f(\bsr,\bsp,t)
\label{L0}\\
    \cL_1 f &= [-\bsp \cdot \grad{\bsr} +
\grad{\bsr}V_1(\bsr,t)
\cdot \grad{\bsp}] f(\bsr,\bsp,t) \label{L1}\\
    \cN_0\big( f,\tilde f\big) &= \div{\bsp} \int \dd
\bsr' \dd \bsp' g(\bsr,\bsr') \bsZ_1(\bsr,\bsr') f (\bsr',\bsp',t) \times (\bsp
+ \grad{\bsp}) \tilde f(\bsr,\bsp,t)  \label{N0}\\
& \qquad  + \div{\bsp} \int \dd \bsr' \dd \bsp' g(\bsr,\bsr')
\bsZ_2(\bsr,\bsr') (\bsp' + \grad{\bsp'}) f(\bsr',\bsp',t) \times
\tilde f(\bsr,\bsp,t) \notag \\
    \cN_1 \big( f,\tilde f \big) &=  \int \dd \bsr'
\dd \bsp' f(\bsr',\bsp',t) g(\bsr,\bsr') \grad{\bsr} V_2(\bsr,\bsr') \cdot
\grad{\bsp} \tilde f(\bsr,\bsp,t), \label{N1}
\end{align}
\end{subequations}
and for ease of notation we have omitted the $(\bsr,\bsp,t)$ dependence of
the functions on the left hand sides of (\ref{L0})--(\ref{N1}).
\end{proposition}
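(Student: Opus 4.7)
The plan is to start from equation (\ref{oneBodyWithfN}) and, term by term, reduce the right-hand side—currently still involving $f^{(N)}$—into a closed expression in $f^{(1)}$, then simply read off the decomposition into $\cL_0, \cL_1, \cN_0, \cN_1$. The main bookkeeping tools are the permutation symmetry of $f^{(N)}$ (since the particles are identical), the reduction formula (\ref{fnDefn}) which converts partial integrals of $f^{(N)}$ into $f^{(n)}$ with a combinatorial factor $N!/(N-n)!$, and the Enskog closure (\ref{Enskog}) which replaces $f^{(2)}$ by $g \cdot f^{(1)} \cdot f^{(1)}$.

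For the potential term, I would apply Assumption~1 to split $\grad{\bsr_1} V(\bsr^N) = \grad{\bsr_1} V_1(\bsr_1) + \sum_{j \neq 1} \grad{\bsr_1} V_2(\bsr_1,\bsr_j)$. The one-body part pulls out of the integral and, using (\ref{fnDefn}) with $n=1$, produces $\grad{\bsr_1} V_1 \cdot \grad{\bsp_1} f^{(1)}$. The two-body sum collapses by symmetry to $(N-1)$ identical copies of the $j=2$ contribution, after which (\ref{fnDefn}) with $n=2$ yields $f^{(2)}(\bsr_1,\bsp_1,\bsr_2,\bsp_2,t)$, and Assumption~3 converts this to $g(\bsr_1,\bsr_2) f^{(1)}(\bsr_1,\bsp_1,t) f^{(1)}(\bsr_2,\bsp_2,t)$. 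Since the $\bsp_1$-gradient only sees the factor $f^{(1)}(\bsr_1,\bsp_1,t)$, this reproduces exactly $\cN_1(f^{(1)}, f^{(1)})$. Combined with the free-streaming term $-\bsp_1 \cdot \grad{\bsr_1} f^{(1)}$ already present on the left-hand side of (\ref{oneBodyWithfN}), these pieces assemble into $\epsilon^{-1}[\cL_1 f^{(1)} + \cN_1(f^{(1)}, f^{(1)})]$.

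The friction term is treated analogously by using Assumption~2 to split $\bGa_{1j}$ into three pieces: the bare identity on the diagonal, the intra-diagonal $\bsZ_1$ sum at $j=1$, and the off-diagonal $\bsZ_2$ terms for $j \neq 1$. The identity piece integrates directly to $\div{\bsp_1}(\bsp_1 + \grad{\bsp_1}) f^{(1)} = \cL_0 f^{(1)}$. For the $\bsZ_1$ piece, the sum over $\ell \neq 1$ combines with the prefactor $N$ via symmetry to give $N(N-1)$ copies of the $\ell=2$ term, reducing via (\ref{fnDefn}) and (\ref{Enskog}) to the first line of (\ref{N0}); crucially, because the operator $(\bsp_1 + \grad{\bsp_1})$ only touches the factor $f^{(1)}(\bsr_1,\bsp_1,t)$ in the Enskog factorization, it passes through the $\bsr_2,\bsp_2$-integral cleanly. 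The $\bsZ_2$ piece is handled the same way, except now the operator is $(\bsp_2 + \grad{\bsp_2})$ and so acts on the $f^{(1)}(\bsr_2,\bsp_2,t)$ factor, producing the second line of (\ref{N0}). Collecting everything and relabeling $(\bsr_1,\bsp_1) \to (\bsr,\bsp)$, $(\bsr_2,\bsp_2) \to (\bsr',\bsp')$ yields (\ref{dtf1abstract}).

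There is no substantive obstacle here: the proposition is essentially a bookkeeping exercise that repackages the two computations displayed immediately before its statement into the operator notation (\ref{L0})–(\ref{N1}). The one point meriting care is the observation that the momentum differential operators commute past the $(\bsr',\bsp')$ integrals only because the Enskog closure isolates all $\bsp_1$-dependence (respectively all $\bsp_2$-dependence) inside a single $f^{(1)}$ factor; without Assumption~3 one would be forced to retain $f^{(2)}$ explicitly and the quadratic-in-$f^{(1)}$ structure of $\cN_0$ and $\cN_1$ would be lost.
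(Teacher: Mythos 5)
Your proof is correct and follows the same route the paper itself takes: reduce each term on the right-hand side of (\ref{oneBodyWithfN}) using particle-exchange symmetry, the definition (\ref{fnDefn}) to produce $f^{(1)}$ and $f^{(2)}$, and the Enskog closure (\ref{Enskog}) to eliminate $f^{(2)}$, then relabel $(\bsr_1,\bsp_1,\bsr_2,\bsp_2)$ as $(\bsr,\bsp,\bsr',\bsp')$ and read off the decomposition into $\cL_0,\cL_1,\cN_0,\cN_1$. The only small imprecision is your remark that the momentum derivative commutes past the $(\bsr',\bsp')$ integral ``because of'' the Enskog closure: $\grad{\bsp_1}$ commutes with integration over the remaining $N-1$ coordinates in any case; what the Enskog factorization buys is that the operator then acts on a single $f^{(1)}$ factor, which is what produces the bilinear structure of $\cN_0$ and $\cN_1$.
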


It is noteworthy that, although they are quadratic, the non-linear
terms are not symmetric in the two arguments.  The first argument
has been taken to be that inside the integral.

In the following we will assume that $f^{(1)}$ is bounded and
positive\footnote{$f^{(1)}$ is non-negative by definition, but may
be zero, e.g.\ if there are excluded areas of the phase space due to
confining potentials.  It is clear that it may be made positive with
arbitrarily small errors.} and that all functions are sufficiently
regular and have sufficient decay at infinity for operators and
integrals to be defined.

To state our main result, we recall that the position
distribution, which is the object of interest in the single-particle
Smoluchowski regime, is defined by $\rho(\bsr,t)=\int\dd \bsp
f^{(1)}(\bsr,\bsp,t)$.  We will show that its evolution equation is
given by
\begin{theorem}[Smoluchowski equation] Under
suitable assumptions on $f^{(1)}$, $U_j$, $\bsZ_j$, $j=1,2$ (see
Theorem \ref{T:MainTheorem}), up to errors of
$\mathcal{O}(\epsilon^2)$ the dynamics of the one-body position
distribution are given (in the original timescale) by:
\[
     \partial_\tau \rho(\bsr,\tau)= -\tfrac{k_BT}{m\gamma}
\div{\bsr}\bsa(\bsr,\tau),
\]
where $\bsa(\bsr,\tau)$ is the solution to
\begin{align}
     & \bsa(\bsr,\tau) + \int \dd \bsr' g(\bsr,\bsr') \rho(\bsr',\tau)
\bsZ_1(\bsr,\bsr')\times \bsa(\bsr,\tau) + \rho(\bsr,\tau) \int \dd \bsr'
g(\bsr,\bsr')
\bsZ_2(\bsr,\bsr') \bsa(\bsr',\tau) \notag \\
&\qquad  = -\Big[ \grad{\bsr} + \tfrac{1}{k_BT} \Big(
\grad{\bsr}U_1(\bsr,\tau) + \int \dd \bsr' \rho(\bsr',\tau) g(\bsr,\bsr')
\grad{\bsr}U_2(\bsr,\bsr') \Big) \Big] \rho(\bsr,\tau). 
\end{align}
\end{theorem}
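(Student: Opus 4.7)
The plan is to substitute the Hilbert expansion $f^{(1)}=f_0+\epsilon f_1+\epsilon^2 f_2+\cdots$ into (\ref{dtf1abstract}) and match powers of $\epsilon$. At order $\epsilon^{-2}$ this gives $\cL_0 f_0 + \cN_0(f_0,f_0) = 0$; at order $\epsilon^{-1}$
\begin{equation*}
\cL_0 f_1 + \cN_0(f_0,f_1) + \cN_0(f_1,f_0) = -\cL_1 f_0 - \cN_1(f_0,f_0);
\end{equation*}
and at order $\epsilon^0$ an analogous inhomogeneous equation for $f_2$ whose driving term contains $\partial_t f_0 - \cL_1 f_1 - \cN_1(f_0,f_1) - \cN_1(f_1,f_0)$. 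The Smoluchowski equation will emerge as the solvability condition of the last relation, once $f_0$ and the vector field $\bsa$ parametrising $f_1$ have been determined from the first two.

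First I would invoke the null-space characterisation established in Lemma \ref{L:NullSpaceL0+N0} (see also Corollary \ref{C:f0}) to conclude that the normalisable solution of the $\epsilon^{-2}$ equation takes the Maxwellian form $f_0(\bsr,\bsp,\tau)=\rho_0(\bsr,\tau)\,M(\bsp)$ with $M(\bsp):=(2\pi)^{-3/2}\e{-|\bsp|^2/2}$; the point is that every piece of $\cL_0$ and $\cN_0$ acts through the combination $(\bsp+\grad{\bsp})$, which annihilates $M$. Next, the solvability condition at order $\epsilon^{-1}$, obtained by integrating the equation against $d\bsp$, is automatic, since every term on both sides is either a $\grad{\bsp}$-divergence or the integral of a $\bsp$-odd Gaussian. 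A direct computation (essentially Lemma \ref{L:f1explicit}) then shows that the unique solution $f_1$ orthogonal to the null space has the form $f_1(\bsr,\bsp,\tau)=\bsa(\bsr,\tau)\cdot\bsp\,M(\bsp)$. Substituting this ansatz and using the identity $(\bsp+\grad{\bsp})(\bsa\cdot\bsp\,M)=\bsa\,M$, the $\epsilon^{-1}$ equation collapses, after dividing out the common factor $\bsp\,M$ and rewriting $V_j=U_j/(k_BT)$, to precisely the Fredholm integral equation for $\bsa$ stated in the theorem, with $\rho$ replaced by $\rho_0$.

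Third, I would impose the solvability condition on the $\epsilon^0$ equation by integrating against $d\bsp$. All $\cL_0$ and $\cN_0$ contributions vanish because they are $\grad{\bsp}$-divergences, and both $\cN_1$ contributions vanish because they contain an explicit $\grad{\bsp}$ applied to a function decaying in $\bsp$; what survives is
\begin{equation*}
\int d\bsp\,\partial_t f_0 = \partial_t \rho_0, \qquad \int d\bsp\,\cL_1 f_1 = -\div{\bsr}\!\int d\bsp\,\bsp\,f_1 = -\div{\bsr}\bsa,
\end{equation*}
the last equality using $\int d\bsp\, p_i p_j\,M = \delta_{ij}$. This gives $\partial_t\rho_0 = -\div{\bsr}\bsa$ in the rescaled time, and undoing $t=(k_BT/m\gamma)\tau$ produces the stated continuity equation. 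Replacing $\rho_0$ by $\rho$ in the Fredholm equation for $\bsa$ is consistent with the claimed $\cO(\epsilon^2)$ accuracy because $\int d\bsp\,f_1=0$ (it is $\bsp$-odd), so $\rho=\rho_0+\cO(\epsilon^2)$.

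The main obstacle is the functional-analytic analysis of the operator $\cL_0 + \cN_0(f_0,\cdot) + \cN_0(\cdot,f_0)$, whose null space and range must be characterised to justify each step. Unlike the Boltzmann collision operator, this is a genuinely nonlocal integral operator in $\bsr$, so neither the Maxwellian form at $\epsilon^{-2}$ nor the invertibility on the complement of the null space (needed to read off $\bsa$ from the Fredholm equation and, later, to construct $f_2$ with the required orthogonality properties) is immediate; these are the facts I would import from Section \ref{S:Hilbert} and Appendix \ref{A:Proofs}. Once they are in hand, the remainder of the derivation consists of Gaussian moment identities and integration by parts.
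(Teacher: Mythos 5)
Your outline follows the paper's route almost step for step: the Hilbert expansion, the characterisation of $f_0$ via the null space of $\cL_0 + \cN_0(\cdot,\cdot)$, the parametrisation of $f_1$, the Fredholm equation for $\bsa$ from the $\epsilon^{-1}$ level, and the solvability condition at $\epsilon^0$ giving $\partial_t \rho_0 = -\div{\bsr}\bsa$. The Gaussian moment identities and the observation that the $\cN_0$, $\cN_1$ contributions drop out of the $\epsilon^0$ solvability condition are all correct, as is the identity $(\bsp+\grad{\bsp})(\bsa\cdot\bsp\,M)=\bsa\,M$.

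There is, however, a real gap at the end. You fix $f_1$ to be ``the unique solution orthogonal to the null space'' and then deduce $\int \dd\bsp\, f_1 = 0$, hence $\rho = \rho_0 + \cO(\epsilon^2)$. But the Hilbert expansion of $f^{(1)}$ is not a free choice of representative: the true $f_1$ is determined by the dynamics, and the $\epsilon^{-1}$ equation fixes it only modulo the null space, i.e. up to an additive term $\psi(\bsr,t)\,M(\bsp)$ (cf. Lemma~\ref{L:f1explicit}, which keeps $\psi$ explicit). If $\psi\not\equiv 0$, then $\rho = \rho_0 + \epsilon\psi + \cO(\epsilon^2)$ and the theorem's identification of $\rho$ with $\rho_0$ fails at the claimed order. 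The paper closes this gap in Section~\ref{S:epsilon1}: the solvability condition at $\epsilon^1$ yields a continuity-type evolution equation~(\ref{psiEvol}) for $\psi$, and the argument that $\psi\equiv 0$ then requires two hypotheses built into Theorem~\ref{T:MainTheorem} — that the initial condition $f^{(1)}(\bsr,\bsp,0)=f_0(\bsr,\bsp,0)$ is independent of $\epsilon$ (so $\psi(\bsr,0)\equiv 0$), and that (\ref{psiEvol}) is non-negativity preserving (so $\psi$ remains zero). Your proposal never derives the $\psi$-equation nor invokes these assumptions, so the step ``$\int \dd\bsp\,f_1 = 0$ because it is $\bsp$-odd'' is circular: it holds by your choice of particular solution, not because the actual first-order correction lacks a null-space component.
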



\section{The Hilbert Expansion} \label{S:Hilbert}
We now expand $f^{(1)}$ in powers of $\epsilon$ as:
\be \label{fexpansion}
    f^{(1)}(\bsr,\bsp,t)=\sum_{n=0}^\infty \epsilon^n f_n(\bsr,\bsp,t).
\ee Due to the singular nature of the problem, we do not expect such
a regular perturbation expansion to converge uniformly. The
expansion should be valid only for times $t \gg \epsilon$, and not
for shorter times, i.e.\ we expect there to be a boundary layer in
time of size $\cO(\epsilon)$. Since we are interested in times much
larger than $\epsilon$, interest lies in the leading order terms;
one would then hope to be able to truncate the series and prove
suitable bounds on the remainder term, as in
\cite{EspositoLebowitzMarra99}. We also assume that such an
expansion then converges, in particular that the $f_n$ are
sufficiently well-behaved in $\bsr$ and $\bsp$.

Inserting (\ref{fexpansion}) into the evolution equation (\ref{dtf1abstract})
and collecting powers of $\epsilon$ gives the hierarchy of equations
\begin{subequations}
\begin{align}
    \cL_0 f_0 + \cN_0(f_0,f_0) & =0  \label{eps-2}\\
    \cL_0 f_1 + \cL_1 f_0 + \cN_0(f_0,f_1) + \cN_0(f_1,f_0)  +
    \cN_1(f_0,f_0)&=0 \label{eps-1}\\
    \cL_0 f_2 + \cL_1 f_1 + \cN_0(f_2,f_0) + \cN_0(f_0,f_2) +
\cN_0(f_1,f_1) \qquad & \notag\\
     + \cN_1(f_1,f_0) + \cN_1(f_0,f_1) &= \partial_t f_0
    \label{eps0}\\
    \cL_0 f_n + \cL_1 f_{n-1} + \sum_{i+j=n} \cN_0(f_i,f_j) +\sum_{i+j=n-1}
\cN_1(f_i,f_j) &= \partial_t f_{n-2}, \; n\geq 3 \notag
\end{align}
\end{subequations}

We now solve these equations order-by-order.  First, to solve (\ref{eps-2}), we
need to determine the null space of $\cL_0 \cdot + \cN_0(\cdot,\cdot)$.
However, before we do so, the following lemma will be useful.  It is
essentially a result of the positive-definiteness of $\bGa$:
\begin{lemma} \label{L:IntegralPD}
For $\bsv(\bsr,\bsp,t)$ an arbitrary vector such that the integrals below exist,
and $f$ satisfying (\ref{Enskog}), there exists $\delta>0$ such that
\begin{align*}
    &\int \dd \bsr \dd \bsp \dd \bsr' \dd \bsp' f(\bsr,\bsp,t)
f(\bsr',\bsp',t) g(\bsr,\bsr') \\
& \qquad \qquad\times \big[ \bsv(\bsr,\bsp,t) \cdot \big( \tfrac{1}{N-1}
\bsone + \bsZ_1(\bsr,\bsr') \big) \bsv(\bsr,\bsp,t) + \bsv(\bsr,\bsp,t)
\cdot \bsZ_2(\bsr,\bsr') \bsv(\bsr',\bsp',t) \big]\\
&\qquad \geq \delta \int \dd \bsr \dd \bsp f(\bsr,\bsp,t)
|\bsv(\bsr,\bsp,t)|^2.
\end{align*}
In particular, the result holds when $f$ is chosen to be either $f_0$ or
$f^{(1)}$.
\end{lemma}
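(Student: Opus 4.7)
The claim transports the positive-definiteness of the full $N$-body friction tensor $\bGa$ down to a coercivity estimate at the one-body level. My plan is to apply the $\R^{3N}$ coercivity bound for $\bGa$ pointwise to a test vector $\bsw$ whose $i$-th block is $\bsv(\bsr_i,\bsp_i,t)$, then integrate against the symmetric $N$-body density $f^{(N)}$, use exchangeability to collapse the resulting sums to pair-integrals, and finally invoke the Enskog factorization (\ref{Enskog}) to rewrite $f^{(2)}$ in terms of $f$ and $g$.

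\textbf{Step 1 (pointwise coercivity).} Since $\bGa(\bsr^N)$ is symmetric and positive definite, I assume (as a standing model hypothesis, underwritten by regularity and decay of $\bsZ_1,\bsZ_2$) a uniform spectral lower bound
\[
\sum_{i,j=1}^N \bsw_i \cdot \bGa_{ij}(\bsr^N)\, \bsw_j \;\geq\; \tilde\delta \sum_{i=1}^N |\bsw_i|^2, \qquad \tilde\delta>0,
\]
for every $\bsw \in \R^{3N}$ and a.e.\ $\bsr^N$. Substituting the pairwise decomposition of Assumption~2 on the left and specialising to $\bsw_i := \bsv(\bsr_i,\bsp_i,t)$ recasts this as a pointwise inequality whose left-hand side is a sum of pair terms of exactly the shape that appears inside the bracket of the lemma.

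\textbf{Step 2 (integrate and symmetrise).} I then multiply the resulting inequality by $f^{(N)}(\bsr^N,\bsp^N,t)$ and integrate over all phase space. Each summand on the left depends on at most two of the $N$ particles, and $f^{(N)}$ is exchangeable, so every one of the $N(N-1)$ ordered pairs contributes equally. Combining this with the marginal identities $\int f^{(N)}\dd\bsr^{N-1}\dd\bsp^{N-1} = f^{(1)}/N$ and $\int f^{(N)}\dd\bsr^{N-2}\dd\bsp^{N-2} = f^{(2)}/[N(N-1)]$ from (\ref{fnDefn}) collapses the right-hand side to $\tilde\delta \int f^{(1)}(\bsr,\bsp) |\bsv(\bsr,\bsp,t)|^2 \dd\bsr\dd\bsp$ and the left-hand side to a single integral against $f^{(2)}(\bsr,\bsp,\bsr',\bsp',t)$ of precisely the bracket stated in the lemma. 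Applying the Enskog factorization (\ref{Enskog}) to rewrite $f^{(2)} = f(\bsr,\bsp,t) f(\bsr',\bsp',t) g(\bsr,\bsr')$, and identifying $f$ with $f^{(1)}$ on the right, yields the claimed inequality with $\delta = \tilde\delta$. The generic-$f$ phrasing in the statement is then natural: the argument uses only the pair-factorisation structure, so the bound applies equally to $f=f_0$ (once $f_0$ has the expected Maxwellian-in-$\bsp$ form coming out of (\ref{eps-2})) and to $f=f^{(1)}$ itself.

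I do not expect a genuine analytic obstacle; the reduction from $N$-body to two-body is essentially algebraic/combinatorial. The one delicate point is the uniform spectral lower bound $\tilde\delta$ in Step~1, which is strictly stronger than the pointwise positive-definiteness of $\bGa$ recorded in Section~\ref{S:Model} and must be guaranteed by decay/regularity properties of the hydrodynamic tensors $\bsZ_1,\bsZ_2$ that prevent the perturbation of the identity from degenerating on the configuration space.
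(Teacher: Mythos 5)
Your argument is essentially the paper's own proof: apply the $\R^{3N}$ coercivity of $\bGa$ to the block vector $\bsw_i=\bsv(\bsr_i,\bsp_i,t)$, multiply by $f^{(N)}\geq 0$ and integrate, collapse to pair marginals using exchangeability and (\ref{fnDefn}), and finally substitute the Enskog closure (\ref{Enskog}). Your remark that a \emph{uniform-in-$\bsr^N$} spectral lower bound $\tilde\delta$ is needed (rather than merely pointwise positive-definiteness of $\bGa$) is a genuine subtlety that the paper's proof also implicitly relies on but does not flag.
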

\begin{proof} See Appendix \ref{A:IntegralPD} \end{proof}

We now proceed with the analysis of (\ref{eps-2})--(\ref{eps0}), beginning by
determining the solution to (\ref{eps-2}):
\subsection{Solution of the $\epsilon^{-2}$ equation}
In this section we find the solution $f_0$ of (\ref{eps-2}):
\begin{lemma} \label{L:NullSpaceL0+N0}
For $f$ satisfying (\ref{Enskog}), the null space of $\cL_0 f + \cN_0( f, f)$
consists of functions of the
form $f(\bsr,\bsp,t)=\exp\big({-}|\bsp|^2/2 \big) \phi(\bsr,t)$.
\end{lemma}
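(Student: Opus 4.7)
The plan is to establish both inclusions in the description of the null space. The easy direction---that any $f(\bsr,\bsp,t) = e^{-|\bsp|^2/2}\phi(\bsr,t)$ annihilates $\cL_0(\cdot) + \cN_0(\cdot,\cdot)$---follows by direct inspection: such an $f$ satisfies $(\bsp + \grad{\bsp})f \equiv 0$, which kills $\cL_0 f$ as well as both terms of $\cN_0(f,f)$ (the first contains the factor $(\bsp + \grad{\bsp})f$, the second the factor $(\bsp' + \grad{\bsp'})f(\bsr',\bsp',t)$). The substance lies in the converse, which I propose to extract via a coercivity argument in the spirit of the Boltzmann $H$-theorem, with Lemma \ref{L:IntegralPD} providing the positive-definiteness needed to close the estimate.

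Concretely, I would set $h := f\,e^{|\bsp|^2/2}$ and $\bsv := \grad{\bsp}\log h = (\bsp + \grad{\bsp})f / f$, so that $(\bsp + \grad{\bsp})f = f\bsv$ and, analogously, $(\bsp' + \grad{\bsp'})f(\bsr',\bsp',t) = f(\bsr',\bsp',t)\bsv(\bsr',\bsp',t)$. The equation $\cL_0 f + \cN_0(f,f) = 0$ then takes the divergence form $\div{\bsp}\bsM = 0$, where the flux $\bsM$ depends linearly on $\bsv$ evaluated at $(\bsr,\bsp)$ and at $(\bsr',\bsp')$. Testing this equation against $\log h$ and integrating by parts in $\bsp$ (boundary terms discarded using the assumed decay at infinity) converts both the $\cL_0$ and $\cN_0$ contributions into quadratic forms in $\bsv$, because $\grad{\bsp}\log h = \bsv$.

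Using the normalization $\int d\bsr'\, d\bsp'\, f(\bsr',\bsp',t)\,g(\bsr,\bsr') = N-1$ to inflate the local contribution $\int f|\bsv|^2\, d\bsr\, d\bsp$ into the doubled integral $(N-1)^{-1}\int ff'g|\bsv|^2$, the resulting identity reads
\begin{equation*}
\int d\bsr\, d\bsp\, d\bsr'\, d\bsp'\, f(\bsr,\bsp,t) f(\bsr',\bsp',t) g(\bsr,\bsr') \Big[ \bsv \cdot \big(\tfrac{1}{N-1}\bsone + \bsZ_1(\bsr,\bsr')\big) \bsv + \bsv \cdot \bsZ_2(\bsr,\bsr')\bsv' \Big] = 0,
\end{equation*}
with $\bsv := \bsv(\bsr,\bsp,t)$ and $\bsv' := \bsv(\bsr',\bsp',t)$ inside the integrand. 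This is precisely the quantity bounded below in Lemma \ref{L:IntegralPD} by $\delta \int f|\bsv|^2\, d\bsr\, d\bsp$, so $\int f|\bsv|^2 = 0$; positivity of $f$ then forces $\bsv \equiv 0$, i.e., $\grad{\bsp}\log h = 0$, giving $h = \phi(\bsr,t)$ and hence $f(\bsr,\bsp,t) = e^{-|\bsp|^2/2}\phi(\bsr,t)$. The main obstacle is identifying the correct multiplier: once $\log h$ is chosen---the natural analogue of the Boltzmann $H$-functional---the algebra aligns exactly with the hypotheses of Lemma \ref{L:IntegralPD} and no further structural input is needed.
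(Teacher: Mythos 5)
Your argument is correct and is essentially the paper's own proof: you set $h = f\,e^{|\bsp|^2/2}$ (the paper's $\phi$), test against $\log h$, use the Enskog normalization $\int d\bsr'\,d\bsp'\,f'g = N-1$ to put all contributions into the doubled-integral form, and invoke Lemma~\ref{L:IntegralPD} to force $\bsv = \grad{\bsp}\log h \equiv 0$. The only cosmetic differences are that you inflate the $\cL_0$ contribution after integrating by parts rather than rewriting $\cL_0 f$ via $f = \tfrac{1}{N-1}\int f^{(2)}$ beforehand, and that you spell out the (trivial) converse inclusion, which the paper leaves implicit.
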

\begin{proof}
We begin by assuming that $f$ satisfies (\ref{Enskog}) and
$\cL_0 f + \cN_0( f, f)=0$.  We define $\phi$ by $f(\bsr,\bsp,t)=:
\e{-|\bsp|^2/2} \phi(\bsr,\bsp,t)$ and note that $\phi$ is positive.
We therefore have $\ln \phi [ \cL_0 f + \cN_0( f, f)]=0$. Note that
$f(\bsr,\bsp,t) = \tfrac{1}{N-1} \int \dd
\bsr' \dd \bsp' f^{(2)}(\bsr,\bsp,\bsr',\bsp',t)$ and so
\[
    \cL_0 f(\bsr,\bsp,t) = \tfrac{1}{N-1} \div{\bsp}
\int \dd \bsr' \dd \bsp' g(\bsr,\bsr') f(\bsr',\bsp',t)(\bsp +
\grad{\bsp})
f(\bsr,\bsp,t),
\]
Using this reformulation along with the definition (\ref{N0}), the
notation $\tilde \bsZ_1(\bsr,\bsr'):=\tfrac{1}{N-1} \bsone +
\bsZ_1(\bsr,\bsr')$ and integrating over $\bsr$ and $\bsp$ gives
\begin{align*}
0&= \int \dd \bsr \dd \bsp  \ln \phi(\bsr,\bsp,t)
\div{\bsp} \int \dd \bsr'
\dd \bsp' g(\bsr,\bsr')  \Big[ f(\bsr',\bsp') \tilde \bsZ_1(\bsr,\bsr')
(\bsp +\grad{\bsp}) f(\bsr,\bsp,t)\\
& \qquad \qquad \qquad \qquad + f(\bsr,\bsp,t) \bsZ_2(\bsr,\bsr') (\bsp' +
\grad{\bsp'}) f(\bsr',\bsp',t) \Big]\\
&= -\int \dd \bsr \dd \bsp \dd \bsr'
\dd \bsp'  \e{-|\bsp|^2/2} \e{-|\bsp'|^2/2} g(\bsr,\bsr')
\frac{\grad{\bsp}\phi(\bsr,\bsp,t)}{\phi(\bsr,\bsp,t)
} \cdot \\
& \qquad \qquad \qquad \times \Big[ \phi(\bsr',\bsp',t) \tilde
\bsZ_1(\bsr,\bsr')
\grad{\bsp} \phi(\bsr,\bsp,t) +\phi(\bsr,\bsp,t) \bsZ_2(\bsr,\bsr') \grad{\bsp'}
\phi(\bsr',\bsp',t) \Big]
\end{align*}
where we have used integration by parts and Fubini's theorem, along with the
identity $(\bsp + \grad{\bsp})[\e{-|\bsp|^2/2} \phi(\bsr,\bsp,t)] =
\e{-|\bsp|^2/2} \grad{\bsp} \phi(\bsr,\bsp,t)$.

Letting $\bsv(\bsr,\bsp,t):=\grad{\bsp}\phi(\bsr,\bsp,t)/\phi(\bsr,\bsp,t)$ we
have
\begin{align*}
0&= -\int \dd \bsr \dd \bsp \dd \bsr'
\dd \bsp'  \e{-|\bsp|^2/2} \e{-|\bsp'|^2/2}
\phi(\bsr,\bsp,t) \phi(\bsr',\bsp',t) g(\bsr,\bsr')
\\
& \qquad \qquad \qquad \times \Big[ \bsv(\bsr,\bsp,t) \cdot
\tilde\bsZ_1(\bsr,\bsr') \bsv(\bsr,\bsp,t) + \bsv(\bsr,\bsp,t)
\cdot \bsZ_2(\bsr,\bsr') \bsv(\bsr',\bsp',t) \Big].
\end{align*}
Since $f(\bsr,\bsp,t)=\exp\big({-}|\bsp|^2/2\big) \phi(\bsr,\bsp,t)>0$, we may
apply Lemma \ref{L:IntegralPD}, which shows that
\[
    0= \int \dd \bsr \dd \bsp f(\bsr,\bsp,t) |\bsv(\bsr,\bsp,t)|^2 =
    \int \dd \bsr \dd \bsp \e{-|\bsp|^2/2} \phi^{-1}(\bsr,\bsp,t)
|\grad{\bsp}\phi(\bsr,\bsp,t)|^2.
\]
Hence, since $\phi$ is bounded, the integrand is zero if and only if
$\grad{\bsp}\phi(\bsr,\bsp,t)\equiv 0$ and the result holds.
\end{proof}

This result gives an explicit form for the $\bsp$-dependence of $f_0$.
\begin{corollary}
\label{C:f0}
The zeroth order term in the $\epsilon$-expansion of $f^{(1)}$ is given by
\[
    f_0(\bsr,\bsp,t)=\frac{1}{(2 \pi)^{3/2}}\e{-\frac{|\bsp|^2}{2}}
\rho_0(\bsr,t)=:Z^{-1} \e{-\frac{|\bsp|^2}{2}}
\rho_0(\bsr,t).
\]
\end{corollary}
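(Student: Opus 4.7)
The corollary is essentially a direct consequence of Lemma \ref{L:NullSpaceL0+N0} together with a normalization. The plan is to apply the lemma to the $\epsilon^{-2}$ equation (\ref{eps-2}), then identify the unknown spatial factor with $\rho_0$ by integrating out $\bsp$.

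First I would check that Lemma \ref{L:NullSpaceL0+N0} applies to $f_0$. Equation (\ref{eps-2}) reads $\cL_0 f_0 + \cN_0(f_0, f_0) = 0$, placing $f_0$ in the null space of the operator $\cL_0 \cdot + \cN_0(\cdot,\cdot)$. The only hypothesis to verify is the Enskog-type factorization (\ref{Enskog}); substituting the expansion (\ref{fexpansion}) into (\ref{Enskog}) and collecting the $\cO(1)$ terms gives $f_0(\bsr,\bsp,t)\,f_0(\bsr',\bsp',t)\,g(\bsr,\bsr')$ at leading order, which is exactly the form needed for the proof of Lemma \ref{L:NullSpaceL0+N0} to carry through verbatim with $f_0$ in place of $f^{(1)}$.

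Having invoked the lemma, I conclude that
\[
    f_0(\bsr,\bsp,t) = \e{-|\bsp|^2/2}\,\phi(\bsr,t)
\]
for some function $\phi$ independent of $\bsp$. To identify $\phi$, I would define $\rho_0(\bsr,t) := \int \dd \bsp\, f_0(\bsr,\bsp,t)$, which is consistent with extracting the zeroth-order term of the expansion of $\rho(\bsr,t) = \int \dd \bsp\, f^{(1)}(\bsr,\bsp,t)$ in powers of $\epsilon$. Using the standard Gaussian integral $\int \dd \bsp\, \e{-|\bsp|^2/2} = (2\pi)^{3/2}$ gives $\rho_0(\bsr,t) = (2\pi)^{3/2}\phi(\bsr,t)$, so $\phi(\bsr,t) = Z^{-1}\rho_0(\bsr,t)$ with $Z := (2\pi)^{3/2}$, which is the claimed formula.

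There is no real obstacle here; all of the analytical work has been absorbed into Lemma \ref{L:NullSpaceL0+N0} (and into Lemma \ref{L:IntegralPD} which underlies it). The only mild subtlety is the compatibility of the Enskog hypothesis with the Hilbert expansion, but this is immediate at leading order. The corollary should be stated and proved in a couple of lines.
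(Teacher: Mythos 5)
Your proof is correct and follows essentially the same route as the paper, which simply states that the corollary ``follows immediately from (\ref{eps-2}) and Lemma \ref{L:NullSpaceL0+N0}.'' The extra care you take to check that the Enskog hypothesis carries over to $f_0$ at leading order is a reasonable detail to spell out (the paper makes the same observation explicitly in the proof of Lemma \ref{L:IntegralPD} in Appendix \ref{A:IntegralPD}), and the identification of $Z = (2\pi)^{3/2}$ via the Gaussian normalization is exactly what is intended.
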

\begin{proof}
Follows immediately from (\ref{eps-2}) and Lemma \ref{L:NullSpaceL0+N0}.
\end{proof}

\subsection{Solution of the $\epsilon^{-1}$ equation}
In order to find $f_1$ from (\ref{eps-1}), we rewrite it as
\be
\cL_0 f_1 + \cN_0(f_0,f_1) + \cN_0(f_1,f_0) = - \cL_1 f_0
    -\cN_1(f_0,f_0). \label{eps-1a}
\ee The first point to note is that for known $f_0$,  this is a
linear-operator equation for $f_1$.  Although this can be
seen in an abstract sense from the non-linearities being of a
quadratic nature, we will require the explicit form of the operator:
\begin{lemma} \label{L:N0}
For $\cN_0$ as in (\ref{N0}), $f_0(\bsr,\bsp,t)$ as given by Corollary
\ref{C:f0}, and arbitrary $\tilde f(\bsr,\bsp,t)$,
\begin{align*}
    \cN_0(f_0, \tilde f) &= \div{\bsp} \Big[ \int \dd
\bsr' \dd \bsp' g(\bsr,\bsr') f_0 (\bsr',\bsp',t) \bsZ_1(\bsr,\bsr')  \times
(\bsp + \grad{\bsp}) \tilde f(\bsr,\bsp,t) \Big] \\
    \cN_0(\tilde f, f_0) &= - f_0(\bsr,\bsp,t) \bsp \cdot \int \dd \bsr'
\dd \bsp' g(\bsr,\bsr') \bsZ_2(\bsr,\bsr') (\bsp' + \grad{\bsp'})
\tilde f(\bsr',\bsp',t).
\end{align*}
\end{lemma}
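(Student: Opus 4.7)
The plan is direct substitution of the Gaussian form from Corollary \ref{C:f0} into the definition (\ref{N0}) of $\cN_0$, exploiting one fundamental identity. Writing $f_0(\bsr,\bsp,t) = Z^{-1} e^{-|\bsp|^2/2} \rho_0(\bsr,t)$ and noting that $\rho_0$ does not depend on $\bsp$, I immediately obtain
\[
 (\bsp + \grad{\bsp}) f_0(\bsr,\bsp,t) = 0,
\]
since $\grad{\bsp} e^{-|\bsp|^2/2} = -\bsp\, e^{-|\bsp|^2/2}$. This single cancellation drives everything.

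For $\cN_0(f_0,\tilde f)$, I simply substitute $f = f_0$ into (\ref{N0}). The first term (involving $\bsZ_1$) contains $f(\bsr',\bsp',t)$ purely as a multiplier inside the integral, so it is unchanged beyond relabelling. The second term (involving $\bsZ_2$) contains $(\bsp' + \grad{\bsp'}) f(\bsr',\bsp',t)$, which vanishes identically when $f = f_0$ by the above identity. This leaves exactly the stated expression.

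For $\cN_0(\tilde f, f_0)$, I substitute $\tilde f = f_0$ into the second slot. The first term of (\ref{N0}) now contains $(\bsp + \grad{\bsp}) f_0(\bsr,\bsp,t)$, which again vanishes, eliminating that contribution. For the surviving second term, I observe that the vector
\[
 \bsV(\bsr,t) := \int \dd \bsr' \dd \bsp' \, g(\bsr,\bsr') \bsZ_2(\bsr,\bsr') (\bsp' + \grad{\bsp'}) \tilde f(\bsr',\bsp',t)
\]
depends only on $\bsr$ and $t$, not on $\bsp$. Hence pulling the divergence through gives $\div{\bsp}[\bsV(\bsr,t) f_0(\bsr,\bsp,t)] = \bsV(\bsr,t) \cdot \grad{\bsp} f_0 = -\bsp \cdot \bsV(\bsr,t) f_0(\bsr,\bsp,t)$, which matches the claimed formula.

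There is really no obstacle here beyond careful bookkeeping of which argument of $\cN_0$ is which; the non-symmetry of $\cN_0$ in its two arguments (noted just before the lemma) means that the roles of the $\bsZ_1$ and $\bsZ_2$ terms swap between the two formulas, which is why $\cN_0(f_0,\tilde f)$ retains only the $\bsZ_1$ contribution while $\cN_0(\tilde f,f_0)$ retains only the $\bsZ_2$ contribution. The entire calculation reduces to the Gaussian annihilation identity and the commutation of an $\bsp$-independent factor with $\div{\bsp}$.
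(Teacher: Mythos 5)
Your proof is correct and follows essentially the same route as the paper: both hinge on the Gaussian annihilation identity $(\bsp + \grad{\bsp}) f_0 = 0$ to kill one of the two terms in \eqref{N0}, and then for $\cN_0(\tilde f, f_0)$ both pull the $\bsp$-independent vector through the divergence and apply $\grad{\bsp} f_0 = -\bsp f_0$.
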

\begin{proof}
For $\cN_0(f_0, \tilde f)$ note that $(\bsp + \grad{\bsp})
\e{-|\bsp|^2/2}=0$ and hence the second line in (\ref{N0}) gives
zero and the first result follows. For $\cN_0(\tilde f, f_0)$, the
same argument shows that the first line in (\ref{N0}) is zero.  For
the remaining term, the result of the integral is a vector which
depends only on $\bsr$, and for any such vector $\bsz$,
$\div{\bsp}[\bsz(\bsr) f(\bsr,\bsp,t)]= \bsz(\bsr) \cdot
\grad{\bsp}f(\bsr,\bsp,t)$.  The result then follows from the
identity $\grad{\bsp} f_0(\bsr,\bsp,t) = -\bsp f_0(\bsr,\bsp,t)$.
\end{proof}
\begin{corollary}
For known $f_0$, and arbitrary $f$, $\tilde \cL f:=\cL_0 f +
\cN_0(f_0,f) + \cN_0(f,f_0)$ is a linear operator on $f$.
\end{corollary}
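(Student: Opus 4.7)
The plan is essentially to unpack definitions; there is little substantive work to do, which is why the statement is recorded as a corollary. Linearity of $\cL_0$ is immediate, since $\cL_0 f = \div{\bsp}(\bsp + \grad{\bsp}) f$ is a sum of first- and second-order differential operators in $\bsp$ with $f$-independent coefficients. It therefore remains to argue that $f \mapsto \cN_0(f_0, f)$ and $f \mapsto \cN_0(f, f_0)$ are each linear for fixed $f_0$.

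The cleanest route is to read linearity directly off the definition (\ref{N0}): each of the two terms on the right-hand side of (\ref{N0}) is built as the product of one factor depending linearly (and only) on the first argument with one factor depending linearly on the second, followed by the $\bsp$-divergence, which is linear. Hence $(f, \tilde f) \mapsto \cN_0(f, \tilde f)$ is bilinear, so freezing either slot yields a linear operator in the remaining argument. Summing the three linear operators $\cL_0$, $\cN_0(f_0, \cdot)$ and $\cN_0(\cdot, f_0)$ gives linearity of $\tilde \cL$.

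If one prefers a more concrete verification, the explicit reformulations of Lemma \ref{L:N0} make the structure transparent: with $f_0 = Z^{-1} \e{-|\bsp|^2/2} \rho_0(\bsr,t)$ fixed, $\cN_0(f_0, f)$ becomes a local linear differential operator in $(\bsp + \grad{\bsp}) f$ whose matrix coefficient $\int \dd \bsr' g(\bsr,\bsr') \rho_0(\bsr',t) \bsZ_1(\bsr,\bsr')$ depends only on $\bsr$, while $\cN_0(f, f_0)$ is the non-local linear integro-differential operator sending $f$ to $-f_0(\bsr,\bsp,t)\, \bsp \cdot \int \dd \bsr' \dd \bsp' g(\bsr,\bsr') \bsZ_2(\bsr,\bsr')(\bsp' + \grad{\bsp'}) f(\bsr',\bsp',t)$. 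Both are manifestly linear in $f$.

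There is no real obstacle; the substance of the corollary lies not in its proof but in its consequence, namely that the $\epsilon^{-1}$ equation (\ref{eps-1a}), whose left-hand side looked quadratic, is in fact a linear equation $\tilde \cL f_1 = -\cL_1 f_0 - \cN_1(f_0, f_0)$ for the unknown $f_1$. This is precisely what opens the door to the Fredholm-type solvability analysis carried out in the next subsection.
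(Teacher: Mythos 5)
Your proof is correct and takes essentially the same route as the paper, which also observes that the quadratic nonlinearity makes $\cN_0$ bilinear (so freezing one slot yields a linear operator) and points to Lemma \ref{L:N0} for the explicit frozen-argument forms. Nothing further is needed.
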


As is customary for problems of this form, it is convenient to work
in the $L^2$ space weighted by the inverse of the invariant measure
of $\tilde \cL$. In this case, the inner product is defined as
\[
\langle f(\bsr,\bsp,t),\tilde
f(\bsr,\bsp,t)\rangle_{f_0^{-1}}:= \int \dd
\bsr \dd \bsp f_0^{-1}(\bsr,\bsp,t) f(\bsr,\bsp,t) \tilde f(\bsr,\bsp,t).
\]
We denote this space by $L^2_{f_0^{-1}}$.  In this weighted space,
$\tilde \cL$ is self-adjoint and has compact resolvent, allowing us
to apply Fredholm's theory:
\begin{lemma} \label{L:SelfAdjoint}
$\tilde \cL$ is self-adjoint in $L^2_{f_0^{-1}}$.
\end{lemma}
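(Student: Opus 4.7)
The plan is to perform the weighted change of unknown $h := f/f_0$, $\tilde h := \tilde f/f_0$, which simplifies the inner product into $\langle f,\tilde f\rangle_{f_0^{-1}} = \int f_0\, h\,\tilde h$ and, crucially, collapses every occurrence of the combination $\bsp + \grad{\bsp}$ via the identity
\[
(\bsp + \grad{\bsp}) f = f_0 \grad{\bsp} h,
\]
which is immediate from $\grad{\bsp} f_0 = -\bsp f_0$ (since $f_0 \propto \e{-|\bsp|^2/2}$). I would then dispatch the three contributions to $\tilde\cL$, namely $\cL_0$, $\cN_0(f_0,\cdot)$ and $\cN_0(\cdot,f_0)$, separately and show that each produces a symmetric bilinear form on $(h,\tilde h)$ after integration by parts in $\bsp$, under the paper's standing assumption that all functions have enough decay for the boundary terms to vanish.

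For $\cL_0 f = \div{\bsp}(f_0 \grad{\bsp} h)$, one integration by parts gives the manifestly symmetric Dirichlet form $-\int \dd\bsr\,\dd\bsp\, f_0\, \grad{\bsp} h \cdot \grad{\bsp} \tilde h$. For $\cN_0(f_0,f)$, which by Lemma \ref{L:N0} has the same divergence-in-$\bsp$ structure but with the extra non-local factor $\int \dd\bsr'\dd\bsp'\, g(\bsr,\bsr') f_0(\bsr',\bsp')\bsZ_1(\bsr,\bsr')$, the same integration by parts produces a bilinear form whose only nontrivial ingredient is $\grad{\bsp} \tilde h \cdot \bsZ_1(\bsr,\bsr') \grad{\bsp} h$ evaluated at a common $(\bsr,\bsp)$, and this is symmetric in $(h,\tilde h)$ because $\bsZ_1$ is a symmetric $3\times 3$ matrix (Assumption 2).

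The main obstacle is $\cN_0(\cdot,f_0)$. From Lemma \ref{L:N0} its form is structurally very different from the other two: an outer factor $\bsp\,\cdot$ (rather than $\div{\bsp}$), coupled non-locally to $(\bsr',\bsp')$ through $\bsZ_2(\bsr,\bsr')$. To put it into symmetric form I would substitute $(\bsp'+\grad{\bsp'}) f = f_0(\bsr',\bsp')\grad{\bsp'} h$ on the inside, and then integrate by parts in $\bsp$ using $\bsp f_0 = -\grad{\bsp} f_0$ to move the outer $\bsp$-factor onto $\tilde h$. This converts the contribution into
\[
-\iiiint g(\bsr,\bsr') f_0(\bsr,\bsp) f_0(\bsr',\bsp')\, \grad{\bsp} \tilde h(\bsr,\bsp) \cdot \bsZ_2(\bsr,\bsr')\, \grad{\bsp'} h(\bsr',\bsp'),
\]
and symmetry in $(h,\tilde h)$ follows by relabelling $(\bsr,\bsp) \leftrightarrow (\bsr',\bsp')$ and invoking three separate symmetries: (i) $g(\bsr,\bsr') = g(\bsr',\bsr)$ (symmetry of the pair distribution function); (ii) $\bsZ_2(\bsr,\bsr') = \bsZ_2(\bsr',\bsr)$ (the exchange reciprocity of the two-body friction tensor, inherited from the symmetry of $\bGa$ under particle exchange); and (iii) $\bsZ_2^T = \bsZ_2$ as a matrix (Assumption 2). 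It is precisely the need to combine these three distinct symmetries, one for the outer $\bsp$-factor, one for the spatial arguments, and one for the matrix indices, that makes this piece the main technical hurdle; the other two pieces require only the matrix symmetry. Summing the three symmetric bilinear forms gives self-adjointness of $\tilde\cL$ in $L^2_{f_0^{-1}}$.
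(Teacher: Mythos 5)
Your proof is correct and is essentially the same as the paper's: both treat the three pieces $\cL_0$, $\cN_0(f_0,\cdot)$, $\cN_0(\cdot,f_0)$ separately, reduce each to a bilinear form by a single integration by parts in $\bsp$ (your change of unknown $h=f/f_0$ is just a re-parameterization of the same manipulations, since $(\bsp+\grad{\bsp})f = f_0\grad{\bsp}h$), and in the $\cN_0(\cdot,f_0)$ case close the argument by relabelling $(\bsr,\bsp)\leftrightarrow(\bsr',\bsp')$. You are, if anything, slightly more explicit than the paper in stating that this relabelling requires not only the matrix symmetry $\bsZ_2^T=\bsZ_2$ but also $g(\bsr,\bsr')=g(\bsr',\bsr)$ and the exchange symmetry $\bsZ_2(\bsr,\bsr')=\bsZ_2(\bsr',\bsr)$, which the paper invokes only implicitly under the phrase ``interchanging the pairs of dummy variables.''
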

\begin{proof} See Appendix \ref{A:SelfAdjoint} \end{proof}
\begin{lemma} \label{L:resolvent}
The resolvent of $\tilde \cL$ is compact in $L^2_{f_0^{-1}}$.
\end{lemma}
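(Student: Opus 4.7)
The plan is to exhibit $\tilde\cL$ as the self-adjoint operator associated with a coercive quadratic form and then deduce compactness of the inverse on $(\ker\tilde\cL)^\perp$ via a Hermite-polynomial-based compact embedding of the form domain.

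I would first rewrite the operator in the variable $h = f/f_0$. Using Lemma \ref{L:N0} and the identity $(\bsp + \grad{\bsp})(f_0 h) = f_0 \grad{\bsp} h$, one obtains
\begin{align*}
\cL_0 f + \cN_0(f_0, f) &= \div{\bsp}\big[(\bsone + M_1(\bsr)) f_0 \grad{\bsp} h\big], \\
\cN_0(f, f_0) &= -f_0\,\bsp \cdot \int \dd \bsr'\, g(\bsr,\bsr')\bsZ_2(\bsr,\bsr')\bsu(\bsr'),
\end{align*}
where $M_1(\bsr) = \int \dd \bsr'\, g(\bsr,\bsr')\rho_0(\bsr')\bsZ_1(\bsr,\bsr')$ and $\bsu(\bsr) = \int \dd \bsp\, f_0(\bsr,\bsp)\grad{\bsp} h(\bsr,\bsp)$. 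Integrating by parts in $\bsp$ (using $f_0 \bsp = -\grad{\bsp} f_0$ to rewrite $\int f_0 \bsp h \,\dd \bsp = \bsu(\bsr)$), and using the normalisation $\int g \rho_0 \dd \bsr' = N-1$ to absorb the constant term into the matrix $\tilde\bsZ_1 = \tfrac{1}{N-1}\bsone + \bsZ_1$, the quadratic form takes the precise shape
\[
-\langle \tilde\cL f, f\rangle_{f_0^{-1}} = \int \dd \bsr\, \dd \bsp\, \dd \bsr'\, \dd \bsp'\, f_0(\bsr,\bsp) f_0(\bsr',\bsp') g(\bsr,\bsr') \big[\bsv\cdot\tilde\bsZ_1\bsv + \bsv\cdot\bsZ_2\bsv'\big],
\]
with $\bsv = \grad{\bsp} h(\bsr,\bsp)$ and $\bsv' = \grad{\bsp'} h(\bsr',\bsp')$. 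Lemma \ref{L:IntegralPD} applied with $f = f_0$ then delivers the coercivity
\[
-\langle \tilde\cL f, f\rangle_{f_0^{-1}} \geq \delta \|\grad{\bsp}(f/f_0)\|_{L^2_{f_0}}^2.
\]

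Next, I would expand $h$ in the multivariate Hermite basis in $\bsp$, writing $h = \sum_\alpha h_\alpha(\bsr) H_\alpha(\bsp)$. By Lemma \ref{L:NullSpaceL0+N0} the null space $\ker\tilde\cL$ coincides with the $\alpha = 0$ subspace (functions with no $\bsp$-dependence times $f_0$). Parseval's identity gives $\|f\|_{L^2_{f_0^{-1}}}^2 = \sum_\alpha \|h_\alpha\|_{L^2_{\rho_0}}^2$ and $\|\grad{\bsp}(f/f_0)\|_{L^2_{f_0}}^2 = \sum_\alpha |\alpha| \|h_\alpha\|_{L^2_{\rho_0}}^2$, so together with the coercivity the Gaussian Poincar\'e inequality $\|f\|_{L^2_{f_0^{-1}}} \leq \|\grad{\bsp}(f/f_0)\|_{L^2_{f_0}}$ holds on $(\ker\tilde\cL)^\perp$, establishing a spectral gap. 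Define the form domain $V := \{f \in (\ker\tilde\cL)^\perp : \grad{\bsp}(f/f_0) \in L^2_{f_0}\}$ with Hilbertian norm $\|f\|_V = \|\grad{\bsp}(f/f_0)\|_{L^2_{f_0}}$. Lax--Milgram applied to the symmetric coercive form produces a bounded inverse $\tilde\cL^{-1}: (\ker\tilde\cL)^\perp \to V$. The injection $V \hookrightarrow L^2_{f_0^{-1}}|_{(\ker\tilde\cL)^\perp}$ is compact: in the $\bsp$-direction the Hermite coefficients are controlled with decaying weights $|\alpha|^{-1}$ (Hilbert--Schmidt-type summability), while the decay of $f_0$ in $\bsr$ (from the standing regularity/decay hypotheses on $f^{(1)}$) provides the tightness needed for a Rellich-type embedding. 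Composing with $\tilde\cL^{-1}$ yields compactness of the resolvent on $(\ker\tilde\cL)^\perp$.

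The main obstacle is the nonlocal integral operator $\cN_0(\cdot, f_0)$, which couples distant spatial points $\bsr$ and $\bsr'$ and thus forbids a strictly fiberwise Hermite reduction. It is handled by absorbing its contribution directly into the quadratic-form computation above, where Lemma \ref{L:IntegralPD} shows that its positivity combines with that of the local part rather than destroying it. A secondary subtlety is that $\ker\tilde\cL$ is infinite-dimensional, so "compact resolvent" must be read as compactness of the pseudo-inverse on $(\ker\tilde\cL)^\perp$ -- which is all that is actually required for the Fredholm-alternative used to solve (\ref{eps-1a}) and its higher-order analogues -- and that compactness in the unbounded $\bsr$-direction relies on implicit confinement through $f_0$, without which the operator would retain essential spectrum.
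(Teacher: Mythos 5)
Your argument and the paper's share the same core ingredients (Lemma \ref{L:IntegralPD}, Lemma \ref{L:preservesN}, the Hermite basis, the change of variable $h=f/f_0$ together with the identity $(\bsp+\grad{\bsp})(f_0 h)=f_0\grad{\bsp}h$), but they proceed along genuinely different routes. The paper constructs eigenfunctions of fixed Hermite degree $n$, shows by direct computation and Lemma \ref{L:IntegralPD} that the associated eigenvalues satisfy $\lambda_{n,j}\geq n\delta$, and then appeals to the spectral characterisation of operators with compact resolvent. You instead derive the coercivity estimate $-\langle\tilde\cL f,f\rangle_{f_0^{-1}}\geq\delta\|\grad{\bsp}(f/f_0)\|^2_{L^2_{f_0}}$, combine it with the Gaussian Poincar\'e inequality in $\bsp$ to get a spectral gap on $(\ker\tilde\cL)^\perp$, and invoke Lax--Milgram plus a compact form-domain embedding. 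Your coercivity and spectral-gap derivation is correct and reproduces, in variational language, exactly the estimate that drives the paper's eigenvalue bound; it is also precisely the input needed for the Fredholm-alternative solvability arguments of equations (\ref{eps-1a}) and (\ref{eps0a}).

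There is, however, a genuine gap in the compactness step --- and it is worth noting that the paper's proof suffers from the same issue, which your remarks partly anticipate. The form norm $\|\grad{\bsp}(f/f_0)\|_{L^2_{f_0}}$ controls only $\bsp$-derivatives; it gives no regularity or decay in $\bsr$ beyond what is already present in $L^2_{f_0^{-1}}$. Your appeal to ``decay of $f_0$ in $\bsr$'' for a Rellich-type embedding does not help, because both $V$ and $L^2_{f_0^{-1}}$ carry the same $\bsr$-weight, so the candidate compactness-producing factor cancels. Concretely, on the Hermite degree-one subspace, $\tilde\cL$ reduces to a (bounded, nonlocal) matrix multiplication in $\bsr$, whose inverse cannot be compact on an infinite-dimensional space. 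The paper's proof encounters the same difficulty: the stated eigenfunction count $j=1,\dots,T(n+1)$ implicitly treats $\bsr$ as a fixed parameter, so each degree-$n$ eigenvalue is in fact infinitely degenerate and $|\lambda_j|\to\infty$ cannot hold along a complete orthonormal basis. You correctly flag that the kernel is infinite-dimensional and that what is really used downstream is invertibility of $\tilde\cL$ on $(\ker\tilde\cL)^\perp$; but you should strengthen this observation: the honest conclusion available here is a spectral gap (and hence closed range plus the Fredholm alternative), not compactness of the resolvent, which would require either confinement in $\bsr$ or an additional regularising mechanism that the operator does not possess.
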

\begin{proof} See Appendix \ref{A:resolvent} \end{proof}

Since (\ref{eps-1a}) may be rewritten as $\tilde \cL f_1 = -\tilde \cL_1 f_0 -
\cN_1(f_0,f_0)$, to determine the solvability condition we must
determine the null space of $\tilde \cL^*$.
\begin{lemma}\label{L:nullLtildestar}
    The null space of $\tilde \cL$ (and of $\tilde \cL^*$) contains only
functions of the form $f(\bsr,\bsp,t)=\exp\big({-}|\bsp|^2/2 \big)
\phi(\bsr,t)$.
\end{lemma}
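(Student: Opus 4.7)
The plan is to exploit Lemma \ref{L:SelfAdjoint}, which identifies $\ker\tilde\cL^*$ with $\ker\tilde\cL$ in $L^2_{f_0^{-1}}$, so that it suffices to characterise $\ker\tilde\cL$ directly. Fix $f$ with $\tilde\cL f = 0$. Since $f_0 > 0$, I would set $\psi := f/f_0$; the lemma then reduces to showing $\grad{\bsp}\psi \equiv 0$, because in that case Corollary \ref{C:f0} gives $f = \e{-|\bsp|^2/2}\bigl(Z^{-1}\rho_0(\bsr,t)\psi(\bsr,t)\bigr)$, which is of the claimed form.

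The computational step is to rewrite $\tilde\cL f$ in terms of $\grad{\bsp}\psi$. Lemma \ref{L:N0} handles $\cN_0(f_0,f)$ and $\cN_0(f,f_0)$, and the elementary identity $(\bsp + \grad{\bsp})(f_0\psi) = f_0\grad{\bsp}\psi$, which follows from $\grad{\bsp}f_0 = -\bsp f_0$, collapses the drift-plus-gradient factors. To put the diffusive term $\cL_0 f = \div{\bsp}(f_0\grad{\bsp}\psi)$ into the same non-local form as the $\bsZ_1$-piece of $\cN_0$, I would insert the Enskog normalisation $\int \dd\bsr'\,\dd\bsp'\, g(\bsr,\bsr') f_0(\bsr',\bsp') = N-1$, exactly as in the proof of Lemma \ref{L:NullSpaceL0+N0}. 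This presents $\tilde\cL f$ as a single expression with a non-local kernel proportional to $\tilde\bsZ_1(\bsr,\bsr') := \tfrac{1}{N-1}\bsone + \bsZ_1(\bsr,\bsr')$ acting on $\grad{\bsp}\psi(\bsr,\bsp,t)$, coupled via $\bsZ_2$ to $\grad{\bsp'}\psi(\bsr',\bsp',t)$.

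I then pair $\tilde\cL f = 0$ with $\psi$ in $L^2$, which coincides with the $L^2_{f_0^{-1}}$-pairing with $f$, and integrate by parts in $\bsp$ in the divergence term; in the remaining term I use $\bsp f_0 = -\grad{\bsp}f_0$ to cast it in divergence form and then integrate by parts once more. Writing $\bsv := \grad{\bsp}\psi$, the outcome is precisely the quadratic form appearing in Lemma \ref{L:IntegralPD}, weighted by the measure $g(\bsr,\bsr')f_0(\bsr,\bsp)f_0(\bsr',\bsp')$. Applying that lemma with $f = f_0$ bounds the form below by $\delta\int\dd\bsr\,\dd\bsp\, f_0|\bsv|^2$, so the vanishing of the form forces $\bsv \equiv 0$ on the set $\{f_0 > 0\}$. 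Hence $\psi$ is independent of $\bsp$ and the lemma follows.

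The main obstacle is algebraic rather than conceptual: one must align the diffusive and non-local pieces of $\tilde\cL$ under a single positive-definite kernel so that Lemma \ref{L:IntegralPD} applies verbatim, and this is exactly the role of the Enskog rewriting of $\cL_0$. The required integrations by parts are routine under the standing regularity and decay hypotheses on $f$; the only subtlety is that the argument proves nullity of $\grad{\bsp}\psi$ only on $\{f_0 > 0\}$, which is sufficient because $f = f_0\psi$ vanishes where $f_0$ does.
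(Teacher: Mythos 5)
Your proposal is correct and follows essentially the same route as the paper's proof: invoke self-adjointness to reduce to $\ker\tilde\cL$, use the Enskog normalisation to fold $\cL_0$ into the nonlocal kernel, pair with $f$ in $L^2_{f_0^{-1}}$, integrate by parts, and apply Lemma \ref{L:IntegralPD} with $f=f_0$ to conclude $(\bsp+\grad{\bsp})f\equiv 0$. The only difference is notational — you work with $\psi=f/f_0$ where the paper factors out the Maxwellian via $\phi$, and your test vector $\bsv=\grad{\bsp}\psi$ coincides with the paper's $\bsv=f_0^{-1}(\bsp+\grad{\bsp})f$.
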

\begin{proof}
The proof is analogous to that of Lemma \ref{L:NullSpaceL0+N0}, but
is more straightforward as the operator is now linear and we are
already working in the appropriate weighted space.  As in the proof
of Lemma \ref{L:NullSpaceL0+N0}, we have
\[
    \cL_0 f(\bsr,\bsp,t) = \tfrac{1}{N-1} \div{\bsp}
\int \dd \bsr' \dd \bsp' g(\bsr,\bsr') f_0(\bsr',\bsp',t)(\bsp +
\grad{\bsp}) f(\bsr,\bsp,t),
\]
and hence, setting $\tilde \bsZ_1(\bsr,\bsr'):=\tfrac{1}{N-1}\bsone +
\bsZ_1(\bsr,\bsr')$ we have
\begin{align*}
    \tilde \cL f(\bsr,\bsp,t) &= \div{\bsp} \int \dd \bsr' \dd \bsp'
    g(\bsr,\bsr') f_0(\bsr',\bsp',t) \tilde \bsZ_1(\bsr,\bsr') (\bsp +
\grad{\bsp}) f(\bsr,\bsp,t)\\
 &\qquad \qquad \qquad \qquad + g(\bsr,\bsr') f_0(\bsr,\bsp,t)
\bsZ_2(\bsr,\bsr') (\bsp' + \grad{\bsp'}) f(\bsr',\bsp',t) =0.
\end{align*}

Taking the $L^2_{f_0^{-1}}$ inner product with $f$ gives
\begin{align*}
0 &=   \int \dd \bsr \dd \bsp f_0^{-1}(\bsr,\bsp,t) f(\bsr,\bsp,t) \\
& \qquad \times \div{\bsp}
\int \dd \bsr' \dd \bsp' \Big[
    g(\bsr,\bsr') f_0(\bsr',\bsp',t) \tilde \bsZ_1(\bsr,\bsr') (\bsp +
\grad{\bsp}) f(\bsr,\bsp,t)\\
 &\qquad \qquad \qquad \qquad \qquad \qquad  +
g(\bsr,\bsr') f_0(\bsr,\bsp,t)
\bsZ_2(\bsr,\bsr') (\bsp' + \grad{\bsp'}) f(\bsr',\bsp',t) \Big].
\end{align*}
Integrating by parts, using the identity $\grad{\bsp}f_0^{-1}(\bsr,\bsp,t) =
\bsp f_0^{-1}(\bsr,\bsp,t)$ and Fubini's theorem gives
\begin{align*}
0 &=   \int \dd \bsr \dd \bsp \dd \bsr' \dd \bsp' f_0^{-1}(\bsr,\bsp,t)
(\bsp + \grad{\bsp})f(\bsr,\bsp,t) \\
& \qquad \qquad \qquad \cdot \Big[ g(\bsr,\bsr') f_0(\bsr',\bsp',t)
\tilde \bsZ_1(\bsr,\bsr') (\bsp + \grad{\bsp}) f(\bsr,\bsp,t)\\
 &\qquad \qquad \qquad \qquad \qquad \qquad +
g(\bsr,\bsr') f_0(\bsr,\bsp,t)
\bsZ_2(\bsr,\bsr') (\bsp' + \grad{\bsp'}) f(\bsr',\bsp',t) \Big].
\end{align*}
Setting
$\bsv(\bsr,\bsp,t):=f_0^{-1}(\bsr,\bsp,t) (\bsp+\grad{\bsp})f(\bsr,\bsp,t) $
then gives
\begin{align*}
0=\int \dd \bsr \dd \bsp \dd \bsr' \dd \bsp'
f_0(\bsr,\bsp,t)&f_0(\bsr',\bsp',t) g(\bsr,\bsr') \bsv(\bsr,\bsp,t)\\
&  \cdot
\big[\tilde \bsZ_1(\bsr,\bsr') \bsv(\bsr,\bsp,t) +
\bsZ_2(\bsr,\bsr')\bsv(\bsr',\bsp')\big].
\end{align*}
Hence, by Lemma \ref{L:IntegralPD}, we have
\[
    0 = \int \dd \bsr \dd \bsp f_0(\bsr,\bsp,t) |\bsv(\bsr,\bsp,t)|^2
    = \int \dd \bsr \dd \bsp f_0^{-1}(\bsr,\bsp,t) |(\bsp +
\grad{\bsp})f(\bsr,\bsp,t)|^2.
\]
Since $f_0$ is bounded and positive, we must have $(\bsp +
\grad{\bsp}) f(\bsr,\bsp,t)\equiv0$.  Writing $f(\bsr,\bsp,t)=Z^{-1}
\exp(-|\bsp|^2/2) \phi(\bsr,\bsp,t)$ then gives $\grad{\bsp}
\phi(\bsr,\bsp,t)\equiv 0$ and the result follows.
\end{proof}

Determining the explicit solvability condition finally requires the explicit
calculation of the right hand side of (\ref{eps-1a}):
\begin{lemma}\label{L:L1N1}
For $\cL_1$ and $\cN_1$ as in (\ref{L1}) and (\ref{N1}),
and $f_0$ as in Corollary \ref{C:f0}, we have
\[
    \cL_1 f_0 = -Z^{-1}\e{-\frac{|\bsp|^2}{2}} \bsp \cdot
\big[\grad{\bsr} + \grad{\bsr}V_1(\bsr,t) \big] \rho_0(\bsr,t)
\]
and
\begin{align*}
    \cN_1(f_0,f_0)(\bsr,\bsp,t) &= -
Z^{-1} \e{-\frac{|\bsp|^2}{2}} \rho_0(\bsr,t) \bsp
\cdot \int \dd \bsr' \rho_0(\bsr',t) g(\bsr,\bsr') \grad{\bsr}V_2(\bsr,\bsr')
\end{align*}
\end{lemma}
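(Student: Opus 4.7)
The plan is direct substitution followed by elementary manipulation; there is no real analytic obstacle, only bookkeeping of factors of $\bsp$, $Z$, and the Gaussian. I would keep the two identities separate and use the explicit form of $f_0$ from Corollary \ref{C:f0}, namely $f_0(\bsr,\bsp,t)=Z^{-1}\e{-|\bsp|^2/2}\rho_0(\bsr,t)$, along with the single observation that the $\bsr$ dependence of $f_0$ sits entirely in $\rho_0$ and the $\bsp$ dependence entirely in the Gaussian, so $\grad{\bsr}f_0 = Z^{-1}\e{-|\bsp|^2/2}\grad{\bsr}\rho_0$ and $\grad{\bsp}f_0 = -Z^{-1}\e{-|\bsp|^2/2}\bsp\,\rho_0(\bsr,t)$.

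For the first identity, I would substitute these two gradients into the definition (\ref{L1}) of $\cL_1$. The first term $-\bsp\cdot\grad{\bsr}f_0$ contributes $-Z^{-1}\e{-|\bsp|^2/2}\bsp\cdot\grad{\bsr}\rho_0(\bsr,t)$, while the second term $\grad{\bsr}V_1\cdot\grad{\bsp}f_0$ contributes $-Z^{-1}\e{-|\bsp|^2/2}\bsp\cdot\grad{\bsr}V_1(\bsr,t)\,\rho_0(\bsr,t)$. Factoring the common $-Z^{-1}\e{-|\bsp|^2/2}\bsp$ out of both terms yields the claimed formula.

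For the second identity I would substitute $f_0(\bsr',\bsp',t)=Z^{-1}\e{-|\bsp'|^2/2}\rho_0(\bsr',t)$ into the integrand in (\ref{N1}) and note that $\int\dd\bsp'\,Z^{-1}\e{-|\bsp'|^2/2}=1$ by the definition $Z=(2\pi)^{3/2}$, so the $\bsp'$ integration is trivial and the remaining integral collapses to $\int\dd\bsr'\,g(\bsr,\bsr')\rho_0(\bsr',t)\grad{\bsr}V_2(\bsr,\bsr')$. Acting with $\grad{\bsp}$ on the outer factor $f_0(\bsr,\bsp,t)$ and again using $\grad{\bsp}f_0=-Z^{-1}\e{-|\bsp|^2/2}\bsp\,\rho_0(\bsr,t)$ gives precisely the stated expression. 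The only step requiring any care is keeping the order of the vector factors consistent so that $\bsp$ sits to the left of the spatial integral, which is just a matter of notation.
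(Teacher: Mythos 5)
Your proposal is correct and follows essentially the same route as the paper: direct substitution of the explicit form of $f_0$, the two gradient identities $\grad{\bsr}f_0 = Z^{-1}\e{-|\bsp|^2/2}\grad{\bsr}\rho_0$ and $\grad{\bsp}f_0 = -Z^{-1}\e{-|\bsp|^2/2}\bsp\,\rho_0$, and for $\cN_1$ the observation that $\int\dd\bsp'\,f_0(\bsr',\bsp',t)=\rho_0(\bsr',t)$. The paper's proof does exactly this, so there is nothing to add.
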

\begin{proof}
This is a simple calculation:  For $\cL_1$ we have
\begin{align*}
    \cL_1 f_0(\bsr,\bsp,t) &= \big[-\bsp \cdot \grad{\bsr} +
\grad{\bsr}
V_1(\bsr,t) \cdot \grad{\bsp}\big] \Big[Z^{-1}\e{-\frac{|\bsp|^2}{2}}
\rho_0(\bsr,t) \Big] \\
&= Z^{-1}\e{-\frac{|\bsp|^2}{2}} \big[-\bsp \cdot \grad{\bsr}
\rho_0(\bsr,t)
\big] + \grad{\bsr} V_1(\bsr,t)\cdot \big[-\bsp \big]
Z^{-1}\e{-\frac{|\bsp|^2}{2}}\rho_0(\bsr,t),
\end{align*}
and the result follows.

The result for $\cN_1$ follows from the two identities $\int \dd
\bsp f_0(\bsr,\bsp,t) = \rho_0(\bsr,t)$ and $\grad{\bsp} f_0(\bsr,\bsp,t) =
-\bsp \e{-|\bsp|^2/2} \rho_0(\bsr,t)$.
\end{proof}

Recall that we are trying to solve (\ref{eps-1a}), and have shown
that in $L^2_{f_0^{-1}}$, $\tilde \cL$ is self adjoint with compact
resolvent, and has null space elements $\exp(-|\bsp|^2/2)
\phi(\bsr,t)$. In order for (\ref{eps-1a}) to be soluble, we
therefore require, by the Fredholm alternative, that its inner
product (in $L^2_{f_0^{-1}}$) with any element of the null space of
$\tilde \cL^*$ is zero.

Note that $\langle \e{-|\bsp|^2/2}\phi(\bsr,t),f \rangle_{f_0^{-1}} =
\langle \rho_0(\bsr,t) \phi(\bsr,t),f \rangle$ and we therefore require that the
integral with respect to $\bsp$ of the right hand side of (\ref{eps-1a}) is
zero. This is an easy corollary of Lemma \ref{L:L1N1} since the $\bsp$
dependence of both terms is of the form $\exp(-|\bsp|^2/2) \bsp$, which
integrates to zero.
\begin{corollary}
    $-\cL_1 f_0-\cN_1(f_0,f_0)$ is orthogonal to the null space of $\tilde
\cL^*$ and thus (\ref{eps-1a}) always has a solution.
\end{corollary}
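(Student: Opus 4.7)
The plan is to verify the Fredholm solvability condition directly. By Lemma \ref{L:resolvent} $\tilde\cL$ has compact resolvent on $L^2_{f_0^{-1}}$, and by Lemma \ref{L:SelfAdjoint} it is self-adjoint there, so the Fredholm alternative applies with $\mathrm{null}(\tilde\cL^*) = \mathrm{null}(\tilde\cL)$. By Lemma \ref{L:nullLtildestar} this null space consists precisely of functions of the form $h(\bsr,\bsp,t) = \e{-|\bsp|^2/2}\phi(\bsr,t)$ for arbitrary (sufficiently regular) $\phi$. It therefore suffices to show that $\langle h, -\cL_1 f_0 - \cN_1(f_0,f_0)\rangle_{f_0^{-1}} = 0$ for every such $h$.

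The key observation is that the weight $f_0^{-1}(\bsr,\bsp,t) = Z\e{|\bsp|^2/2}\rho_0^{-1}(\bsr,t)$ in the inner product exactly cancels the Gaussian factor in $h$, reducing $\langle h, F\rangle_{f_0^{-1}}$ to $Z \int \dd\bsr\, \rho_0^{-1}(\bsr,t)\phi(\bsr,t) \int\dd\bsp\, F(\bsr,\bsp,t)$. Hence the claim reduces to showing that the $\bsp$-integral of $-\cL_1 f_0 - \cN_1(f_0,f_0)$ vanishes pointwise in $\bsr$. But by Lemma \ref{L:L1N1} both contributions have the form $\e{-|\bsp|^2/2}\,\bsp\cdot\bsb(\bsr,t)$ for a $\bsp$-independent vector field $\bsb$, so the $\bsp$-integral reduces to $\bsb(\bsr,t)\cdot\int\dd\bsp\,\e{-|\bsp|^2/2}\bsp = 0$ by odd symmetry against the Gaussian. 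This holds for every $\phi$, so the solvability condition is satisfied and (\ref{eps-1a}) admits a solution.

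There is essentially no obstacle here; the Corollary is a bookkeeping statement that packages together Lemmas \ref{L:SelfAdjoint}, \ref{L:resolvent}, \ref{L:nullLtildestar} and \ref{L:L1N1}, with the only genuine input being the odd parity in $\bsp$ of the source terms computed in Lemma \ref{L:L1N1}. The only mild point worth flagging is that the $f_1$ obtained this way is only determined up to an element of $\mathrm{null}(\tilde\cL)$, i.e.\ up to a term of the form $\e{-|\bsp|^2/2}\rho_1(\bsr,t)$; this freedom will be fixed at the next order in the Hilbert expansion via the solvability condition for (\ref{eps0}), as is standard.
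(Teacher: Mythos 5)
Your proof is correct and follows essentially the same route as the paper: reduce the weighted inner product with an element $\e{-|\bsp|^2/2}\phi(\bsr,t)$ of $\mathrm{null}(\tilde\cL^*)$ to a $\bsp$-integral of the source, then invoke Lemma \ref{L:L1N1} to see that both terms carry a $\bsp\,\e{-|\bsp|^2/2}$ prefactor and so integrate to zero by odd parity. (Your explicit form of the reduced inner product, with $\rho_0^{-1}$, is the right one; the remark about $f_1$ being determined only up to a null-space element $\e{-|\bsp|^2/2}\psi(\bsr,t)$ is also accurate and matches the role of $\psi$ in Lemma \ref{L:f1explicit}.)
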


Since we now know that (\ref{eps-1a}) is soluble, we can invert its
left hand side.  The standard approach would be to expand in a basis
of the eigenfunctions of $\tilde \cL$.  However, since $\bsZ_1$ and
$\bsZ_2$ (which enter $\tilde \cL$ through $\cN_0$) are unknown, we
expand in a basis of products of generalized Hermite polynomials
multiplied by a Maxwellian, which are eigenfunctions for the case
$\bsZ_1=\bsZ_2=0$.  This turns out to be sufficient as we do not
need to explicitly invert $\tilde \cL$.
\begin{definition}
We define the basis of $L^2(\R^3,\e{-|\bsp|^2/2})$, for
$\bsp=(p_1,p_2,p_3)^{\rm T}$,
\[
    P_{n,\bsa}(\bsp):=H_{a_1}(p_1)H_{a_2}(p_2) H_{a_3}(p_3),
\]
where $n \in \N$, $\bsa=(a_1,a_2,a_3)^{\rm T}$, $a_i \in \N$, $|\bsa|:=\sum a_i
= n$,
and $H_n$ are the standard one-dimensional Hermite polynomials.
\end{definition}

Since $\R^3$ is a product space and the Hermite polynomials form an orthogonal
basis of $L^2(\R,\e{-p^2/2})$, it is clear that the
$P_{n,\bsa}$ form an orthogonal basis of $L^2(\R^3, \e{-|\bsp|^2/2})$ (see
e.g.\ \cite{Dunkl01}).  We now show that $\tilde \cL$ preserves $n$, the degree
of the Hermite polynomial, when applied to $\e{-|\bsp|^2/2}P_{n,\bsa}(\bsp)$.
Hence, with a slight abuse of notation, we find
\[
\tilde \cL^{-1}
[\e{-|\bsp|^2/2}P_{n,\bsa}(\bsp)] \in \mbox{Span}\big\{
\{\e{-|\bsp|^2/2}P_{n,\bsb}(\bsp) \, | \,|\bsb|=n\} \cup \e{-|\bsp|^2/2}
\big\},
\]
where the span runs over coefficients in $\bsr$ and $t$, and we note that
$\e{-|\bsp|^2/2}$ is the $\bsp$-dependent part of the kernel of $\cL$.
\begin{lemma} \label{L:preservesN}
Denote the expansions of $\tilde{\cL} f$ and $f$ by
\begin{align*}
\tilde{\cL} f &=:
\e{-|\bsp|^2/2}\sum_{n=0}^\infty \sum_{|\bsa|=n}
\tilde\gamma_{n,\bsa}(\bsr,t) P_{n,\bsa}(\bsp)\\
f &=:
\e{-|\bsp|^2/2}\sum_{n=0}^\infty \sum_{|\bsa|=n} \gamma_{n,\bsa}(\bsr,t)
P_{n,\bsa}(\bsp).
\end{align*}
Then, for each fixed $n \geq 1$,
$\gamma_{n,\bsa}(\bsr,t) \equiv 0$ for all $\bsa$ if and only if $\tilde
\gamma_{n,\bsa}(\bsr,t) \equiv 0$ for all $\bsa$.
\end{lemma}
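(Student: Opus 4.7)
The plan is to show that $\tilde\cL$ preserves the grading of the ambient space by Hermite degree in $\bsp$, and then to use the null-space characterisation of Lemma~\ref{L:nullLtildestar} to obtain injectivity on each graded piece of positive degree. Write $W_n$ for the space of functions of the form $\e{-|\bsp|^2/2}\sum_{|\bsa|=n}\gamma_{\bsa}(\bsr,t)P_{n,\bsa}(\bsp)$. The two directions of the lemma reduce respectively to $\tilde\cL(W_n)\subset W_n$ (only if) and injectivity of $\tilde\cL\big|_{W_n}$ for $n\geq 1$ (if).

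To establish the grading property I would treat each of the three summands of $\tilde\cL=\cL_0+\cN_0(f_0,\cdot)+\cN_0(\cdot,f_0)$ separately, using in each case the identity $(\bsp+\grad{\bsp})[\e{-|\bsp|^2/2}P(\bsp)]=\e{-|\bsp|^2/2}\grad{\bsp}P(\bsp)$. This reduces $\cL_0$ to multiplication by $\e{-|\bsp|^2/2}$ and action by the number operator $Q_{\bsone}:=-\bsp\cdot\grad{\bsp}+\Delta_{\bsp}$, which has eigenvalue $-n$ on $P_{n,\bsa}$. Similarly, invoking Lemma~\ref{L:N0} reduces $\cN_0(f_0,\cdot)$ to $\e{-|\bsp|^2/2}Q_{\mathbf{M}}$ with $\mathbf{M}(\bsr,t):=\int\dd\bsr'\,g(\bsr,\bsr')\rho_0(\bsr',t)\bsZ_1(\bsr,\bsr')$ and
\[
Q_{\mathbf{M}}:=\sum_{i,j}M_{ij}\bigl(-p_i\partial_{p_j}+\partial_{p_i}\partial_{p_j}\bigr)=-\sum_{i,j}M_{ij}a_i^{*}a_j,
\]
where $a_i:=\partial_{p_i}$ and $a_i^{*}:=p_i-\partial_{p_i}$ are the standard annihilation and creation operators for the Hermite basis: since $a_j$ lowers and $a_i^{*}$ raises the Hermite degree by one, each $a_i^{*}a_j$, and hence $Q_{\mathbf{M}}$, preserves it. Finally, using Lemma~\ref{L:N0} again, $\cN_0(\cdot,f_0)$ contains the factor $\int\dd\bsp'\,\e{-|\bsp'|^2/2}\grad{\bsp'}P_{n,\bsa}(\bsp')$; since $\grad{\bsp'}P_{n,\bsa}$ sits in Hermite degree $n-1$, orthogonality of the Hermite basis against the constant $1$ forces this to vanish for $n\neq 1$, while for $n=1$ the output is a linear function of $\bsp$ times $f_0$, i.e.\ an element of $W_1$. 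Summing the three contributions gives $\tilde\cL(W_n)\subset W_n$.

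For the injectivity on $W_n$ with $n\geq 1$, Lemma~\ref{L:nullLtildestar} identifies the null space of $\tilde\cL$ with $W_0$; since $W_n\cap W_0=\{0\}$ for $n\geq 1$, any $f\in W_n$ with $\tilde\cL f=0$ must vanish. Combining the grading property with this injectivity yields both implications of the lemma. The main obstacle is the degree-preservation of $\cN_0(f_0,\cdot)$: the $M_{ij}\partial_{p_i}\partial_{p_j}$ piece naively lowers polynomial degree by two, and its cancellation against the degree-$(n-2)$ part of $-p_iM_{ij}\partial_{p_j}$ is not transparent in the polynomial basis for general symmetric $\mathbf{M}$; it is precisely the creation/annihilation reformulation that makes this cancellation structural rather than a coincidence tied to the isotropic case $\mathbf{M}=\bsone$.
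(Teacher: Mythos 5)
Your proof is correct and follows the same strategy as the paper's: the same three-term decomposition of $\tilde\cL$, the same reduction via Lemma \ref{L:N0}, the same observation that $\cN_0(\cdot,f_0)$ contributes only for $n=1$, and the same appeal to Lemma \ref{L:nullLtildestar} for the reverse implication. The only cosmetic difference is that the paper establishes degree preservation of $\cL_0+\cN_0(f_0,\cdot)$ directly via the weighted-Hermite recurrences $(p+\partial_p)[H_a\,\e{-p^2/2}]=aH_{a-1}\,\e{-p^2/2}$ and $\partial_p[H_a\,\e{-p^2/2}]=-H_{a+1}\,\e{-p^2/2}$, whereas you conjugate by the Gaussian and repackage the operator as $-\sum_{i,j}M_{ij}\,a_i^*a_j$; these are equivalent phrasings of the same algebra, so the closing concern that the cancellation is opaque in the polynomial basis is moot for the paper's route, which never passes through bare polynomials.
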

\begin{proof} See Appendix \ref{A:preservesN} \end{proof}

We are now in the position to determine the explicit form of $f_1$ and thus
solve (\ref{eps-1}).
\begin{lemma}\label{L:f1explicit}
   $f_1(\bsr,\bsp,t) = [\bsa(\bsr,t) \cdot \bsp + \psi(\bsr,t)] Z^{-1}
\exp(-|\bsp|^2/2)$, where $\bsa(\bsr,t)$ is given by the solution of
\begin{align*}
     &\bsa(\bsr,t) + \int \dd \bsr' g(\bsr,\bsr') \rho_0(\bsr',t)
\bsZ_1(\bsr,\bsr') \times \bsa(\bsr,t) + \rho_0(\bsr,t) \int \dd \bsr'
g(\bsr,\bsr')
\bsZ_2(\bsr,\bsr') \bsa(\bsr',t) \\
&\qquad \qquad  = -\Big[ \grad{\bsr} +
\grad{\bsr}V_1(\bsr,t) + \int \dd \bsr' \rho_0(\bsr',t) g(\bsr,\bsr')
\grad{\bsr}V_2(\bsr,\bsr')\Big] \rho_0(\bsr,t).
\end{align*}
\end{lemma}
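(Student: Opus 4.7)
The plan is to exploit Lemma \ref{L:preservesN}: since $\tilde\cL$ preserves the degree of the Hermite polynomial (modulo the null space, which sits in degree zero), and since Lemma \ref{L:L1N1} shows that $-\cL_1 f_0-\cN_1(f_0,f_0)$ has $\bsp$-dependence exactly of the form $\e{-|\bsp|^2/2}\bsp\cdot(\cdot)$, i.e.\ purely degree one in the Hermite basis, the solution $f_1$ must (up to a null space element $\psi(\bsr,t)Z^{-1}\e{-|\bsp|^2/2}$) be of the form $(\bsa(\bsr,t)\cdot\bsp)Z^{-1}\e{-|\bsp|^2/2}$. Solubility is already guaranteed by the corollary preceding the lemma, so existence is not an issue; the task is to identify $\bsa$ explicitly.

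The first concrete step is to compute $\tilde\cL f_1$ on this ansatz. The key simplification is the elementary identity $(\bsp+\grad{\bsp})\bigl[(\bsa(\bsr,t)\cdot\bsp)Z^{-1}\e{-|\bsp|^2/2}\bigr]=\bsa(\bsr,t)Z^{-1}\e{-|\bsp|^2/2}$, which collapses each of the three pieces of $\tilde\cL$ into something manageable. Applying $\cL_0$ then yields $-(\bsa\cdot\bsp)Z^{-1}\e{-|\bsp|^2/2}$. For $\cN_0(f_0,f_1)$, I use Lemma \ref{L:N0} and integrate $f_0(\bsr',\bsp',t)$ over $\bsp'$ to produce $\rho_0(\bsr',t)$, after which $\div{\bsp}$ acting on the Maxwellian factor produces the term
\[
-\bsp\cdot\Bigl[\int\dd\bsr'\,g(\bsr,\bsr')\rho_0(\bsr',t)\bsZ_1(\bsr,\bsr')\bsa(\bsr,t)\Bigr]Z^{-1}\e{-|\bsp|^2/2}.
\]
For $\cN_0(f_1,f_0)$, Lemma \ref{L:N0} gives $-f_0(\bsr,\bsp,t)\,\bsp\cdot\int\dd\bsr'\dd\bsp'\,g(\bsr,\bsr')\bsZ_2(\bsr,\bsr')(\bsp'+\grad{\bsp'})f_1(\bsr',\bsp',t)$, and using the identity above together with $\int Z^{-1}\e{-|\bsp'|^2/2}\dd\bsp'=1$ reduces this to
\[
-\rho_0(\bsr,t)\,\bsp\cdot\Bigl[\int\dd\bsr'\,g(\bsr,\bsr')\bsZ_2(\bsr,\bsr')\bsa(\bsr',t)\Bigr]Z^{-1}\e{-|\bsp|^2/2}.
\]

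The second step is to equate $\tilde\cL f_1$ with the right-hand side $-\cL_1 f_0-\cN_1(f_0,f_0)$ supplied by Lemma \ref{L:L1N1}, which I rewrite as $Z^{-1}\e{-|\bsp|^2/2}\bsp\cdot\bigl[\grad{\bsr}+\grad{\bsr}V_1(\bsr,t)+\int\dd\bsr'\rho_0(\bsr',t)g(\bsr,\bsr')\grad{\bsr}V_2(\bsr,\bsr')\bigr]\rho_0(\bsr,t)$. Matching the coefficients of $-Z^{-1}\e{-|\bsp|^2/2}\bsp$ on both sides (which is legitimate because both sides lie in the degree-one subspace of the Hermite expansion, and $\psi(\bsr,t)Z^{-1}\e{-|\bsp|^2/2}$ is annihilated by $\tilde\cL$) yields exactly the Fredholm integral equation for $\bsa(\bsr,t)$ stated in the lemma.

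I do not expect a serious obstacle: Lemmas \ref{L:N0}, \ref{L:L1N1}, and \ref{L:preservesN} have already done the conceptual work, and the remainder is bookkeeping. The subtlest point, and the one I would flag in the write-up, is that the ansatz leaves $\psi(\bsr,t)$ undetermined at this order because it lies in $\ker\tilde\cL$; it will be fixed only later, via the solvability condition for the $\epsilon^0$ equation (\ref{eps0}). A second minor point is well-posedness of the Fredholm equation for $\bsa$: the integral operator appearing on its left-hand side is precisely the reduction of $\tilde\cL$ to the degree-one Hermite sector, and Lemma \ref{L:IntegralPD} shows this reduction is coercive, so the equation has a unique solution in an appropriate weighted $L^2$ space.
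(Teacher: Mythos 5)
Your proposal is correct and follows essentially the same route as the paper's own proof: use Lemma~\ref{L:preservesN} together with the degree-one form of the right-hand side (Lemma~\ref{L:L1N1}) to justify the ansatz $f_1=[\bsa\cdot\bsp+\psi]Z^{-1}\e{-|\bsp|^2/2}$, then evaluate $\cL_0 f_1$, $\cN_0(f_0,f_1)$, $\cN_0(f_1,f_0)$ term by term via the identity $(\bsp+\grad{\bsp})[(\bsa\cdot\bsp)Z^{-1}\e{-|\bsp|^2/2}]=\bsa Z^{-1}\e{-|\bsp|^2/2}$ and Lemma~\ref{L:N0}, and match coefficients of $\bsp\,Z^{-1}\e{-|\bsp|^2/2}$. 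The only addition you make beyond the paper is the closing remark tying well-posedness of the Fredholm equation to coercivity from Lemma~\ref{L:IntegralPD}; that is a sensible observation, though the paper postpones such existence considerations to the hypotheses of Theorem~\ref{T:MainTheorem}.
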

\begin{proof}
By (\ref{eps-1a}) and Lemma \ref{L:L1N1} we have
\begin{align}
    \tilde \cL f_1 &= -\cL f_0 - \cN_1(f_0,f_0) \notag \\
    &=\Big[ \grad{\bsr} +
\grad{\bsr}V_1(\bsr,t) + \int \dd \bsr' \rho_0(\bsr',t) g(\bsr,\bsr')
\grad{\bsr}V_2(\bsr,\bsr')\Big] \rho_0(\bsr,t) \cdot \bsp Z^{-1}
\e{-|\bsp|^2/2}
\notag \\
&=: \tilde \bsa(\bsr,t) \cdot \bsp Z^{-1}  \e{-|\bsp|^2/2}. \label{atildeDefn}
\end{align}
Hence, by Lemma \ref{L:preservesN} and the definitions $P_{1,\bse_j}=p_j$ for
$\bse_j$ the standard unit vectors, it follows that
$f_1(\bsr,\bsp,t) = [\bsa(\bsr,t) \cdot \bsp + \psi(\bsr,t)] Z^{-1}
\exp(-|\bsp|^2/2)$ for some $\bsa$.  Evaluating each of the terms of $\tilde \cL
f_1$ then gives, firstly from (\ref{L0}),
\begin{align*}
    \cL_0 f_1 &= \div{\bsp}\big[(\bsp + \grad{\bsp}) [\bsa(\bsr,t)
\cdot \bsp + \psi(\bsr,t)] Z^{-1} \e{-|\bsp|^2/2} \big]\\
&= \div{\bsp}\big[Z^{-1} \e{-|\bsp|^2/2} \grad{\bsp}
[\bsa(\bsr,t)
\cdot \bsp + \psi(\bsr,t)]  \big]\\
&= \div{\bsp}\big[Z^{-1} \e{-|\bsp|^2/2} \bsa(\bsr,t)] = - Z^{-1}
\e{-|\bsp|^2/2} \bsp \cdot \bsa(\bsr,t).
\end{align*}

Using the explicit form of $\cN_0(f_0,f_1)$, as given by Lemma \ref{L:N0}, gives
\begin{align*}
    \cN_0(f_0,f_1) &= \div{\bsp} \Big[ \int \dd \bsr' g(\bsr,\bsr')
\rho_0(\bsr',t) \bsZ_1(\bsr,\bsr')\\
&\qquad \qquad \qquad \qquad \times (\bsp +
\grad{\bsp})\big[ [\bsa(\bsr,t)
\cdot \bsp + \psi(\bsr,t)] Z^{-1} \e{-|\bsp|^2/2}\big] \Big]\\
&= \div{\bsp}\Big[ \int \dd \bsr' g(\bsr,\bsr') \rho_0(\bsr',t)
\bsZ_1(\bsr,\bsr') \times \bsa(\bsr,t) Z^{-1}
\e{-|\bsp|^2/2} \Big].
\end{align*}
For a general matrix $\bsZ(\bsr)$, we have
\begin{align*}
    &\div{\bsp} \big[ \bsZ(\bsr) \bsa(\bsr,t) Z^{-1}
\e{-|\bsp|^2/2} \big] = \sum_{i,j=1}^3 \partial_{p_i}
\big[Z_{ij}(\bsr,t) a_j Z^{-1} \e{-|\bsp|^2/2} \big] \\
& \qquad =
-\sum_{i,j=1}^3 Z_{ij}(\bsr,t) p_i a_j Z^{-1} \e{-|\bsp|^2/2}
= - \bsp \cdot \bsZ(\bsr,t) \bsa(\bsr,t)Z^{-1} \e{-|\bsp|^2/2},
\end{align*}
and hence
\[
    \cN_0(f_0,f_1) = - Z^{-1} \e{-|\bsp|^2/2} \bsp \cdot \int \dd \bsr'
g(\bsr,\bsr') \rho_0(\bsr',t) \bsZ_1(\bsr,\bsr') \bsa(\bsr,t).
\]

For the third term, we take $\cN(f_1,f_0)$ as given by Lemma \ref{L:N0} and
note that $(\bsp' + \grad{\bsp'})f_1(\bsr',\bsp',t) = \bsa(\bsr',t)
\e{-|\bsp'|^2/2}$, giving
\begin{align*}
    \cN_0(f_1,f_0) &= - Z^{-1} \e{-|\bsp|^2/2} \rho_0(\bsr,t) \bsp
\cdot
\int \dd \bsr' \dd \bsp' g(\bsr,\bsr') \bsZ_2(\bsr,\bsr') \bsa(\bsr',t)
Z^{-1} \e{-|\bsp|^2/2}\\
& =- Z^{-1} \e{-|\bsp|^2/2} \rho_0(\bsr,t) \bsp \cdot
\int \dd \bsr' g(\bsr,\bsr') \bsZ_2(\bsr,\bsr') \bsa(\bsr',t).
\end{align*}
Collecting the three terms gives
\begin{align*}
    \tilde \bsa &= -\bsa(\bsr,t) - \int \dd \bsr' g(\bsr,\bsr') \rho_0(\bsr',t)
\bsZ_1(\bsr,\bsr') \times \bsa(\bsr,t) \\
&\qquad - \rho_0(\bsr,t) \int \dd \bsr'
g(\bsr,\bsr') \bsZ_2(\bsr,\bsr') \bsa(\bsr',t),
\end{align*}
and the result follows by (\ref{atildeDefn}).
\end{proof}

\subsection{Solution of the $\epsilon^0$ equation}
We have, by Corollary \ref{C:f0} and Lemma \ref{L:f1explicit}, that
\begin{align*}
    f_0(\bsr,\bsp,t) &=Z^{-1} \exp(-|\bsp|^2/2) \rho_0(\bsr, t), \\
    f_1(\bsr,\bsp,t) &= [\bsa(\bsr,t) \cdot \bsp + \psi(\bsr,t)]
Z^{-1} \exp(-|\bsp|^2/2).
\end{align*}
We now show that the evolution equation for $\rho_0$
is given by the solvability condition for the equation corresponding to
$\epsilon^0$, namely (\ref{eps0}).  We begin by rewriting (\ref{eps0}) as
\be \label{eps0a}
    -\tilde \cL f_2 = \cL_1 f_1 + \cN_0(f_1,f_1) + \cN_1(f_1,f_0) +
\cN_1(f_0,f_1) - \partial_t f_0.
\ee
We then require that the right hand side is orthogonal to the null space of
$\tilde \cL^*$ in $L^2_{f_0^{-1}}$ which, by Lemma \ref{L:nullLtildestar},
is equivalent to it being orthogonal to constants
in $\bsp$ in the unweighted space $L^2$.  Hence, using the divergence theorem
and the explicit forms
of $\cN_0$ and $\cN_1$ as given by (\ref{N0}) and (\ref{N1}), the requirement
reduces to $\int \dd \bsp (\cL_1 f_1 - \partial_t f_0)=0$, and we need only
calculate these two terms.  However, for completeness and later use, we also
calculate the $\cN_0$ and $\cN_1$ terms:
\begin{lemma} \label{L:eps0terms}
    For $\cL_1$, $\cN_0$ and $\cN_1$ as in (\ref{L1})--(\ref{N1}), $f_0$ as in
Corollary \ref{C:f0} and $f_1$ as in Lemma \ref{L:f1explicit}, we have
\begin{align*}
    \cL_1 f_1 &= Z^{-1}\e{-|\bsp|^2/2} \Big[ -
    \bsp\cdot \grad{\bsr}\bsa(\bsr,t) \bsp
    - \bsp \cdot \grad{\bsr} \psi(\bsr,t)  \\
    & \qquad \qquad \qquad\qquad +
\grad{\bsr}V_1(\bsr,t) \cdot \big[\bsa(\bsr,t) - \bsp \big(\bsp
\cdot \bsa(\bsr,t) + \psi(\bsr,t) \big)\big] \Big]\\
    \cN_1(f_0,f_1) &= Z^{-1}\e{-|\bsp|^2/2} \int \dd \bsr'
\rho_0(\bsr',t) g(\bsr,\bsr') \grad{\bsr}V_2(\bsr,\bsr') \\
& \qquad\qquad\qquad\qquad \cdot \Big[
\bsa(\bsr,t) -  \bsp ( \bsp\cdot \bsa(\bsr,t) +
\psi(\bsr,t))\Big]\\
    \cN_1(f_1,f_0) &= -Z^{-1}\e{-|\bsp|^2/2} \rho_0(
\bsr,t)  \bsp \cdot\int \dd \bsr'
\psi(\bsr',t) g(\bsr,\bsr') \grad{\bsr}V_2(\bsr,\bsr') \\
    \cN_0(f_1,f_1)&=-Z^{-1} \e{-|\bsp|^2/2}\bsp \cdot \Big[ \int \dd
\bsr' g(\bsr,\bsr') \psi(\bsr',t)
\bsZ_1(\bsr,\bsr')\Big]\bsa(\bsr,t) \\
& \qquad +  \e{-|\bsp|^2/2} \Big[ \int \dd \bsr'
g(\bsr,\bsr') \bsZ_2(\bsr,\bsr') \bsa(\bsr',t)\Big]\\
& \qquad\qquad\qquad\qquad\qquad\qquad \cdot \Big[
\bsa(\bsr,t) - \bsp ( \bsp\cdot \bsa(\bsr,t) +
\psi(\bsr,t))\Big]
\end{align*}
\end{lemma}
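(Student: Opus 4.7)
The proof is a direct computation: we have explicit formulas for $f_0$ and $f_1$, and each of the four expressions in the lemma is obtained by substituting these into the definitions \eqref{L1}, \eqref{N0}, \eqref{N1} and simplifying via a small set of Gaussian moment identities. First I would assemble the building blocks that will be reused throughout:
\begin{align*}
\grad{\bsp} f_0 &= -\bsp f_0, \qquad \grad{\bsp} f_1 = Z^{-1}\e{-|\bsp|^2/2}\bigl[\bsa(\bsr,t) - \bsp\bigl(\bsp\cdot\bsa(\bsr,t) + \psi(\bsr,t)\bigr)\bigr], \\
\int\dd\bsp\, f_1 &= \psi(\bsr,t), \qquad \int\dd\bsp\,(\bsp + \grad{\bsp})f_1 = \bsa(\bsr,t),
\end{align*}
the first moment identities following from $\int \bsp\,Z^{-1}\e{-|\bsp|^2/2}\dd\bsp = 0$ and $\int \bsp(\bsp\cdot\bsa)\,Z^{-1}\e{-|\bsp|^2/2}\dd\bsp = \bsa$. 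I would also note the trivial identity $\div{\bsp}[\bsz(\bsr,t)\,h(\bsr,\bsp,t)] = \bsz(\bsr,t)\cdot\grad{\bsp} h(\bsr,\bsp,t)$, which handles the outer divergences whenever the integral over $\bsr',\bsp'$ has already been performed.

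For $\cL_1 f_1$ I would apply $-\bsp\cdot\grad{\bsr} + \grad{\bsr}V_1\cdot\grad{\bsp}$ term by term: the streaming part gives $-Z^{-1}\e{-|\bsp|^2/2}[\bsp\cdot\grad{\bsr}\bsa\,\bsp + \bsp\cdot\grad{\bsr}\psi]$ (the gradient of the Maxwellian produces nothing in $\bsr$), and the potential part uses the explicit $\grad{\bsp}f_1$ above. For $\cN_1(f_0,f_1)$ and $\cN_1(f_1,f_0)$ I would integrate out $\bsp'$ inside the $\dd\bsr'\dd\bsp'$ integral: in the first, $\int f_0\dd\bsp' = \rho_0(\bsr',t)$ and the remaining $\grad{\bsp}f_1$ is already in hand; in the second, $\int f_1\dd\bsp' = \psi(\bsr',t)$ while $\grad{\bsp}f_0 = -\bsp f_0$.

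For $\cN_0(f_1,f_1)$ I would separately treat the two lines in \eqref{N0}. In the first, the $\bsp'$-integral reduces $f_1(\bsr',\bsp',t)$ to $\psi(\bsr',t)$, and the crucial observation is the telescoping $(\bsp + \grad{\bsp})f_1 = Z^{-1}\e{-|\bsp|^2/2}\bsa(\bsr,t)$ (the Maxwellian prefactor is annihilated); the outer $\div{\bsp}$ then acts on a vector that is independent of $\bsp$ apart from the Maxwellian, producing the claimed $-Z^{-1}\e{-|\bsp|^2/2}\bsp\cdot[\,\cdot\,]\bsa(\bsr,t)$. In the second, the inner integral $\int\dd\bsp'(\bsp'+\grad{\bsp'})f_1(\bsr',\bsp',t) = \bsa(\bsr',t)$ is $\bsp$-independent, so the outer $\div{\bsp}$ becomes $[\int g\bsZ_2\bsa(\bsr',t)\dd\bsr']\cdot\grad{\bsp}f_1$, at which point substituting the explicit form of $\grad{\bsp}f_1$ yields the stated expression.

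There is no genuine obstacle here; the work is entirely bookkeeping. The only step that requires care is keeping track of which $3\times 3$ matrices act on which vectors (in particular the distinction between $\bsZ_1(\bsr,\bsr')\bsa(\bsr,t)$, which is a pointwise matrix--vector product inside the $\bsr'$-integral, and $\bsZ_2(\bsr,\bsr')\bsa(\bsr',t)$, where the vector on the right depends on the integration variable), together with the Gaussian moment computations noted at the start. Once these are written out, each of the four formulas in the lemma drops out in a few lines.
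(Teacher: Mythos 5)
Your proposal is correct and follows essentially the same route as the paper: you assemble the same small kit of identities ($\grad{\bsp}f_0=-\bsp f_0$, the explicit $\grad{\bsp}f_1$, the telescoping $(\bsp+\grad{\bsp})f_1=Z^{-1}\e{-|\bsp|^2/2}\bsa$, the $\bsp'$-marginals of $f_0,f_1$, and $\div{\bsp}[\bsz(\bsr,t)\,h]=\bsz\cdot\grad{\bsp}h$) and then substitute into each of the four operators. The only minor remark is that your computation of the $\bsZ_2$ term of $\cN_0(f_1,f_1)$ would, as written out, carry a $Z^{-1}$ prefactor via $\grad{\bsp}f_1$ that is in fact missing from the displayed formula in the lemma statement (a typo in the paper), so "yields the stated expression" is slightly off — your derivation actually gives the corrected version.
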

\begin{proof}
We begin with $\cL_1 f_1$, which is given by
\begin{align*}
    \cL_1 f_1(\bsr,\bsp,t) &= [-\bsp \cdot \grad{\bsr} +
\grad{\bsr}V_1(\bsr,t)
\cdot \grad{\bsp}] \big[ Z^{-1} \e{-|\bsp|^2/2} \big( \bsa(\bsr,t)
\cdot \bsp + \psi(\bsr,t) \big) \big]
\end{align*}
Simple calculations show that
\begin{align}
\grad{\bsp} [Z^{-1}\e{-|\bsp|^2/2} (\bsp \cdot \bsa(\bsr,t) +
\psi(\bsr,t))]
&= Z^{-1}\e{-|\bsp|^2/2}  [- \bsp + \grad{\bsp}]
(\bsp \cdot \bsa(\bsr,t) + \psi(\bsr,t)) \notag\\
&\hspace*{-20mm}= Z^{-1}\e{-|\bsp|^2/2} \big[\bsa(\bsr,t) - \bsp
\big(\bsp \cdot \bsa(\bsr,t) + \psi(\bsr,t) \big)\big] \label{gradpf1} \\
-\bsp \cdot \grad{\bsr} (Z^{-1} \e{-|\bsp|^2/2}
\psi(\bsr,t)) &= -Z^{-1}
\e{-|\bsp|^2/2} \bsp \cdot \grad{\bsr}  \psi(\bsr,t) \notag
\end{align}
This gives three of the terms.  The calculation of the fourth is easier in
coordinates.  Using that $\bsp \cdot \grad{\bsr} = \sum_{i=1}^3 p_i
\partial_{r_i}$ and $\bsp \cdot \bsa = \sum_{j=1}^3 p_j a_j$ gives
\begin{align*}
    -\bsp \cdot \grad{\bsr}[ Z^{-1} \e{-|\bsp|^2/2} \bsp
\cdot
\bsa(\bsr,t)] &=
- Z^{-1} \e{-|\bsp|^2/2}  \bsp \cdot \grad{\bsr}[ \bsp
\cdot
\bsa(\bsr,t)] \\
&=-Z^{-1} \e{-|\bsp|^2/2} \sum_{i,j=1}^3 p_i \partial_{r_i}
[p_j
a_j(\bsr,t)]\\
&=-Z^{-1} \e{-|\bsp|^2/2} \bsp \cdot \grad{\bsr}
\bsa(\bsr,t) \bsp,
\end{align*}
which gives the result for $\cL_1
f_1$.

The expressions for $\cN_1(f_0,f_1)$ and $\cN_1(f_1,f_0)$ result from the
trivial identities $\int \dd \bsp'
f_0(\bsr',\bsp',t) = \rho_0(\bsr,t)$, $\int \dd \bsp'
f_1(\bsr',\bsp',t) = \psi(\bsr,t)$, $\grad{\bsp} f_0(\bsr,\bsp,t) =
-\bsp f_0(\bsr,\bsp,t)$ and (\ref{gradpf1}).

Finally, for $\cN_0(f_1,f_1)$ we use the trivial identity $\int \dd
\bsp' f_1(\bsr',\bsp',t) = \psi(\bsr,t)$, and that $(\bsp +
\grad{\bsp})f_1(\bsr,\bsp,t)=
Z^{-1} \e{-|\bsp|^2/2} \grad{\bsp}[ \bsp \cdot \bsa(\bsr,t) +
\psi(\bsr,t)] = Z^{-1} \e{-|\bsp|^2/2} \bsa(\bsr,t)$, and so  $ \int
\dd \bsp' (\bsp' + \grad{\bsp'})f_1(\bsr,\bsp',t) = \bsa(\bsr',t)$.
Then
\begin{align*}
    \cN_0(f_1,f_1)&=\div{\bsp} \Big[ \int \dd \bsr' g(\bsr,\bsr')
\psi(\bsr',t)
\bsZ_1(\bsr,\bsr') \times \bsa(\bsr,t) Z^{-1} \e{-|\bsp|^2/2} \Big] \\
& + \div{\bsp} \Big[ \int \dd \bsr' g(\bsr,\bsr')
\bsZ_2(\bsr,\bsr') \bsa(\bsr',t) \times
Z^{-1} \e{-|\bsp|^2/2}\big(\bsp\cdot \bsa(\bsr,t) + \psi(\bsr,t)\big)
\Big],
\end{align*}
both terms of which are of the form $\div{\bsp} \bsv(\bsr,t)
\phi(\bsr,\bsp,t)$, where $\bsv$ is a vector.  Using $\div{\bsp}
\bsv(\bsr,t) \phi(\bsr,\bsp,t)= \bsv(\bsr,t) \cdot
\grad{\bsp}\phi(\bsr,\bsp,t)$, $\grad{\bsp}
\e{-|\bsp|^2/2} = - \bsp
\e{-|\bsp|^2/2}$ and (\ref{gradpf1}) completes the proof.
\end{proof}

Recall that we wish to solve (\ref{eps0a}), and require that $\int \dd \bsp
(\cL_1 f_1 - \partial_t f_0)=0$.
The identities $\int \dd \bsp Z^{-1} \e{-|\bsp|^2/2}=1$, $\int \dd
\bsp Z^{-1} \e{-|\bsp|^2/2} p_i p_j = \delta_{ij}$ show that
\begin{align*}
    \int \dd \bsp Z^{-1} \e{-|\bsp|^2/2} [\bsa(\bsr,t) -
\bsp (\bsp\cdot \bsa)] &= \bsa(\bsr,t) - \int
\dd \bsp Z^{-1} \e{-|\bsp|^2/2} \bsp \sum_{i=1}^3 p_j a_j(\bsr,t)\\
&= \bsa(\bsr,t)-\bsa(\bsr,t)=0
\end{align*}
(which also follows from these terms resulting from the $\grad{\bsr} V_1\cdot
\grad{\bsp}$ term in $\cL_1$ and the divergence theorem) and
\[
    \int \dd \bsp Z^{-1} \e{-|\bsp|^2/2} \bsp \cdot \grad{\bsr}
\bsa(\bsr,t) \bsp = \int \dd \bsp Z^{-1} \e{-|\bsp|^2/2} \sum_{i,j=1}^3
p_i p_j \partial_{r_j} a_j(\bsr,t) = \div{\bsr} \bsa(\bsr,t),
\]
and hence $\int \dd \bsp \cL_1 f_1 = - \div{\bsr} \bsa(\bsr,t)$.
Since $\int \dd \bsp \partial_t
f_0(\bsr,\bsp,t) = \partial_t \rho_0(\bsr,t)$, the solvability condition becomes
\be \label{rhoEvol}
     \partial_t \rho_0(\bsr,t)= -\div{\bsr}\bsa(\bsr,t),
\ee which is precisely the equation describing the one-body position
distribution evolution for the Smoluchowski equation, as given in
Theorem \ref{T:MainTheorem}.

\subsection{Solution of the $\epsilon^1$ equation}  \label{S:epsilon1}
We now demonstrate that $\psi(\bsr,t) \equiv 0$ if
$\psi(\bsr,0)\equiv 0$. This should result from the solvability
condition for the $\epsilon^1$ equation, which has the form
\[
 - \tilde \cL f_3 = \cL_1 f_2 + \cN_0(f_2,f_1) + \cN_0(f_1,f_2) +
\cN_1(f_2,f_0) + \cN_1(f_0,f_2) + \cN_1(f_1,f_1) - \partial_t f_1.
\]
Since once again the $\cN_0$ and $\cN_1$ terms do not contribute to
the Fredholm alternative calculation, we have
\[
    \int \dd \bsp (\cL_1 f_2 - \partial_t f_1) =0.
\]
From (\ref{eps0a}), we have that
\be \label{f2L-1}
    f_2 = (-\tilde \cL)^{-1} [ \cL_1 f_1 + \cN_0(f_1,f_1) + \cN_1(f_1,f_0) +
    \cN_1(f_0,f_1) - \partial_t f_0],
\ee
with $\partial_t f_1 = Z^{-1} \e{-|\bsp|^2/2}[\bsp \cdot\partial_t
\bsa(\bsr,t) + \partial_t\psi(\bsr,t)]$ and the remaining terms given by Lemma
\ref{L:eps0terms}.

For the $\cL_1 f_2$ term, we have $\cL_1 f_2 = [-\bsp \cdot
\grad{\bsr} + \grad{\bsr}V_1(\bsr,t) \cdot \grad{\bsp}] f_2(\bsr,\bsp,t)$, and,
by
the divergence theorem, the second term vanishes upon integration.  Hence we are
interested only in
\[
    \int \dd \bsp \big[ Z^{-1} \e{-|\bsp|^2/2} \partial_t
\psi(\bsr,t)+ \bsp \cdot \grad{\bsr} f_2(\bsr,\bsp,t)  \big] =
\partial_t \psi(\bsr,t) + \int \dd \bsp \, \bsp \cdot \grad{\bsr}
f_2(\bsr,\bsp,t).
\]
Since $H_{1,\bse_j}(p) = p_j$, the only terms from $f_2$ which contribute
to the integral are of the form $\bsp \cdot \bsa_2(\bsr,t)
Z^{-1} \e{-|\bsp|^2/2}$, i.e $\bsp \cdot \cP_1(f_2)$ where $\cP_1$ is the
projection onto $\bsp$, i.e.\ $\cP_1 f= \int \dd \bsp \, \bsp
f(\bsr,\bsp,t)$. By (\ref{f2L-1}) and Lemma
\ref{L:preservesN}, it therefore suffices to consider only terms of the form
$-\bsp \cdot \tilde \bsa_2(\bsr,t) Z^{-1} \e{-|\bsp|^2/2}$ in
$ \cL_1 f_1 + \cN_0(f_1,f_1) + \cN_1(f_1,f_0) + \cN_1(f_0,f_1) - \partial_t
f_0$, i.e.\ $\tilde \bsa_2(\bsr,t)= -\cP_1 \big(\cL_1 f_1 + \cN_0(f_1,f_1) +
\cN_1(f_1,f_0) + \cN_1(f_0,f_1) - \partial_t f_0 \big)$.

By Lemma \ref{L:eps0terms} and $\partial_t f_0(\bsr,\bsp,t)=Z^{-1}
\e{-|\bsp|^2/2}\partial_t\rho_0(\bsr,t)$, we have
\[
    \cP_1(-\tilde \cL f_2) = \cP_1\big( \cL_1 f_1 + \cN_0(f_1,f_1) +
\cN_1(f_1,f_0) + \cN_1(f_0,f_1) - \partial_t f_0 \big) =:
-\tilde\bsa_2(\bsr,t),
\]
with
\begin{align*}
    \tilde \bsa_2(\bsr,t):&=\Big[ \grad{\bsr} \psi(\bsr,t)
+
\psi(\bsr,t) \grad{\bsr} V_1(\bsr,t) + \int
\dd \bsr' \rho_0(\bsr,t) g(\bsr,\bsr') \grad{\bsr}  V_2(\bsr,\bsr')
\psi(\bsr,t)\\
& \qquad  \qquad + \rho_0(\bsr,t) \Big( \int \dd \bsr'
\psi(\bsr',t)
g(\bsr,\bsr') \grad{\bsr} V_2(\bsr,\bsr') \Big)\\
& \qquad  \qquad  + \Big( \int \dd \bsr'
\psi(\bsr',t) g(\bsr,\bsr') \bsZ_1(\bsr,\bsr') \Big) \bsa(\bsr,t) \\
& \qquad \qquad +
\Big( \int \dd \bsr'  g(\bsr,\bsr') \bsZ_2(\bsr,\bsr')\bsa(\bsr',t) \Big)
\psi(\bsr,t) \Big],
\end{align*}
where $\bsa(\bsr,t)$ is given by Lemma \ref{L:f1explicit}.
Thus, by (the proof of) Lemma \ref{L:f1explicit},
\[
   \cP_1 f_2(\bsr,\bsp,t) =\bsa_2(\bsr,t)
\]
with $\bsa_2$ the solution of
\begin{align*}
     -\tilde \bsa_2(\bsr,t) &= \bsa_2(\bsr,t) + \int \dd \bsr' g(\bsr,\bsr')
\rho_0(\bsr',t)
\bsZ_1(\bsr,\bsr') \times \bsa_2(\bsr,t) \\
& \qquad + \rho_0(\bsr,t)\int \dd \bsr'
g(\bsr,\bsr')
\bsZ_2(\bsr,\bsr') \bsa_2(\bsr',t)
\end{align*}
Hence, for the $\epsilon^1$ equation to be solvable,
\begin{align*}
    0 &=  \partial_t \psi(\bsr,t) + \int \dd \bsp \, \bsp \cdot
\grad{\bsr} \cP_1f_2(\bsr,\bsp,t)\\
& = \partial_t \psi(\bsr,t) +
\int \dd \bsp Z^{-1} \e{-|\bsp|^2/2}  \bsp \cdot
\grad{\bsr}[\bsp \cdot \bsa_2(\bsr,t)]= \partial_t \psi(\bsr,t) + \div{\bsr}
\bsa_2(\bsr,t)
\end{align*}
or
\be
    \partial_t \psi(\bsr,t)=-\div{\bsr} \bsa_2(\bsr,t).
    \label{psiEvol}
\ee

To ensure that $\psi(\bsr,t)\equiv0$, we first note that
$\psi(\bsr,0)\equiv 0$ is equivalent to assuming that the initial
condition $f^{(1)}(\bsr,\bsp,0)$ is independent of $\epsilon$.  For
this to hold for all $t$, it is necessary to show that
(\ref{psiEvol}) is dissipative (or, since $\partial_t \int \dd \bsr
\psi(\bsr,t)=0$, that (\ref{psiEvol}) is non-negativity preserving).

The proof in the linear ($V_2$, $\bsZ_1$, $\bsZ_2$ all zero) case is
trivial, as it turns out that $\psi$ and $\rho$ satisfy the same
equation.  Thus, since the Smoluchowski equation for $\rho$ must be
non-negativity preserving, so must the equation for $\psi$. The
proof in the general case is complicated both by the equations for
$\rho$ and $\psi$ not being identical (due to the non-linear terms)
and by needing to prove dissipativity results for the resulting
non-linear operators.  In general (for $\bsZ_2 \neq 0$), the
equations are not even explicit as one needs to solve the Fredholm
integral equations for $\bsa$ and $\bsa_2$. However, since the full
friction tensor is positive-definite, one would expect
(\ref{psiEvol}) to be a parabolic PDE and so, for potentials and
hydrodynamic interaction terms with sufficient bounded derivatives,
the result should follow from standard PDE theory, see e.g.\
\cite{Walter86}. We therefore assume that $V_1$, $V_2$, $\bsZ_1$ and
$\bsZ_2$ are such that that if $\psi(\bsr,0)\equiv 0$ then
$\psi(\bsr,t)\equiv 0$ for all $t\geq 0$.

\section{The Smoluchowski equation} \label{S:Smoluchowski}
We are now in a position to state our rigorously derived
Smoluchowski equation. For ease of comparison with existing results,
we return to the original scalings of time and potentials.
\begin{theorem}[Smoluchowski equation] \label{T:MainTheorem}
Suppose $f^{(1)}(\bsr,\bsp,0)=f_0(\bsr,\bsp,0)=Z^{-1} \e{-|\bsp|^2/
2}\rho_0(\bsr,0)$ is independent of $\epsilon$.  Suppose further
that $\rho_0(\bsr,0)$, $U_j$ and $\bsZ_j$, $j=1,2$ are such that the solutions
of (\ref{dtf1abstract}), (\ref{rhoEvol}) and (\ref{psiEvol}) exist for times
$[0,t_0]$ and that (\ref{psiEvol}) is non-negativity preserving. Then,
up to errors of $\mathcal{O}(\epsilon^2)$, the dynamics of the one-body
position distribution are given (in the original timescale) for $\tau \in [0,
m\gamma/(k_BT) t_0]$ by
\[
     \partial_\tau \rho(\bsr,\tau)= -\tfrac{k_BT}{m\gamma}
\div{\bsr}\bsa(\bsr,\tau),
\]
where $\bsa(\bsr,\tau)$ is the solution to
\begin{align}
     & \bsa(\bsr,\tau) + \int \dd \bsr' g(\bsr,\bsr') \rho(\bsr',\tau)
\bsZ_1(\bsr,\bsr')\times \bsa(\bsr,\tau) + \rho(\bsr,\tau) \int \dd \bsr'
g(\bsr,\bsr')
\bsZ_2(\bsr,\bsr') \bsa(\bsr',\tau) \notag \\
&\qquad \qquad  = -\Big[ \grad{\bsr} + \tfrac{1}{k_BT} \Big(
\grad{\bsr}U_1(\bsr,\tau) + \int \dd \bsr' \rho(\bsr',\tau) g(\bsr,\bsr')
\grad{\bsr}U_2(\bsr,\bsr') \Big) \Big] \rho(\bsr,\tau). \label{aFredholm}
\end{align}
\end{theorem}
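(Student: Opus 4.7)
The plan is to assemble the order-by-order results already established in Section~\ref{S:Hilbert} and then translate them back to the original physical units. Concretely: Corollary~\ref{C:f0} pins down $f_0(\bsr,\bsp,t) = Z^{-1}\e{-|\bsp|^2/2}\rho_0(\bsr,t)$; Lemma~\ref{L:f1explicit} gives $f_1 = [\bsa(\bsr,t)\cdot\bsp + \psi(\bsr,t)]Z^{-1}\e{-|\bsp|^2/2}$ with $\bsa$ determined by the rescaled Fredholm equation; the $\epsilon^0$ solvability condition yields the continuity equation (\ref{rhoEvol}) for $\rho_0$; and the $\epsilon^1$ solvability condition produces the evolution equation (\ref{psiEvol}) for $\psi$. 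Invoking the hypothesis on the initial data ($f^{(1)}(\cdot,\cdot,0) = f_0(\cdot,\cdot,0)$, so $\psi(\bsr,0) \equiv 0$) together with the non-negativity-preserving assumption on (\ref{psiEvol}) and the observation that $\partial_t\int\dd\bsr\,\psi = 0$ forces $\psi(\bsr,t) \equiv 0$ throughout $[0,t_0]$.

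With $\psi \equiv 0$, the zeroth $\bsp$-moment of the truncated Hilbert expansion collapses neatly: because $\int\dd\bsp\,\bsp\,\e{-|\bsp|^2/2} = 0$,
\[
\rho(\bsr,t) = \int\dd\bsp\,f^{(1)}(\bsr,\bsp,t) = \rho_0(\bsr,t) + \epsilon\int\dd\bsp\,f_1(\bsr,\bsp,t) + \cO(\epsilon^2) = \rho_0(\bsr,t) + \cO(\epsilon^2).
\]
Consequently (\ref{rhoEvol}) reads $\partial_t \rho = -\div{\bsr}\bsa + \cO(\epsilon^2)$, and the Fredholm equation of Lemma~\ref{L:f1explicit} may be written with $\rho$ in place of $\rho_0$ at the cost of only $\cO(\epsilon^2)$ corrections (the $\bsa$ equation is linear in its driving density). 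The rescaled version of the theorem is therefore immediate from the analysis of Section~\ref{S:Hilbert}.

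It remains to undo the change of variables introduced at the start of Section~\ref{S:Model}. Since $t = (k_BT/m\gamma)\tau$, the chain rule gives $\partial_\tau = (k_BT/m\gamma)\partial_t$, and since $V_j = U_j/k_BT$ we have $\grad{\bsr}V_j = (k_BT)^{-1}\grad{\bsr}U_j$. Substituting into (\ref{rhoEvol}) produces $\partial_\tau\rho = -(k_BT/m\gamma)\div{\bsr}\bsa$, and substituting into the Fredholm equation of Lemma~\ref{L:f1explicit} (with $\rho$ replacing $\rho_0$) produces exactly (\ref{aFredholm}). The time interval on which the conclusion holds is $[0, (m\gamma/k_BT)t_0]$, matching the statement.

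The genuine difficulty in this proof is not algebraic but analytic: one must justify that the truncated Hilbert expansion really does approximate $f^{(1)}$ with $\cO(\epsilon^2)$ error in a suitable norm, uniformly on $[0,t_0]$. This requires controlling the remainder $f^{(1)} - f_0 - \epsilon f_1$, which solves a perturbed version of (\ref{dtf1abstract}) with source terms built from the solvability corrections, and is the standard obstacle in Hilbert-expansion arguments of the type used in \cite{EspositoLebowitzMarra99}. In the present setting this amounts to exhibiting an energy estimate compatible with the nonlinear operator $\cL_0 \cdot + \cN_0(f_0,\cdot) + \cN_0(\cdot,f_0)$, whose dissipativity is the content of Lemma~\ref{L:IntegralPD}; the inter-particle potential terms $\cL_1$, $\cN_1$ enter at higher order in $\epsilon$ and can be absorbed via a Gronwall argument once existence on $[0,t_0]$ of all the relevant objects has been assumed. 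We therefore take this existence and regularity as a hypothesis, and the remaining assembly is the bookkeeping above.
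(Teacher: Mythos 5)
Your proposal is correct and follows essentially the same route as the paper's own proof: it cites the solvability condition (\ref{rhoEvol}), Lemma \ref{L:f1explicit} for $\bsa$, the $\psi \equiv 0$ argument from the $\epsilon^1$ solvability condition combined with the non-negativity-preserving hypothesis, and then unwinds the time and potential rescalings. Your observations that $\rho = \rho_0 + \cO(\epsilon^2)$ follows from $\psi\equiv0$ together with $\int\dd\bsp\,\bsp\,\e{-|\bsp|^2/2}=0$, and that substituting $\rho$ for $\rho_0$ in the Fredholm equation introduces only $\cO(\epsilon^2)$ corrections, make the bookkeeping slightly more explicit than the paper's terse version, and your closing paragraph correctly identifies that a rigorous remainder estimate (left open in the paper) would be the genuine analytic content.
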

\begin{proof}
The evolution equation is given by (\ref{rhoEvol}), which is the
solvability condition for (\ref{eps0}), and $\bsa(\bsr,t)$ is given
by Lemma \ref{L:f1explicit}.  Returning to the original timescale
introduces the factor of $\mu=k_BT/(m\gamma)$ in the right hand
side. We also replace $V$ by its original value of $U/(k_BT)$ where
$\bsX_i=-\grad{\bsr_i}U(\bsr^N)$. The conditions on (\ref{psiEvol})
and the initial condition ensure that, using the notation of Lemma
\ref{L:f1explicit}, $\psi(\bsr,t)\equiv0$ for all times.  Hence
$\rho(\bsr,\tau)=\rho_0(\bsr,\tau) + \cO(\epsilon^2)$.
\end{proof}

We note here that the assumptions on the initial condition and on
the existence of solutions are analogous to those made for the
Boltzmann equation, see e.g.\ \cite{EspositoLebowitzMarra99}.  We
expect that proving such assumptions hold for a physically
interesting range of potentials and friction tensors would be a
formidable problem in its own right, and is beyond the scope of the
present study.  An analysis of the corresponding problem for the
Boltzmann equation is given in~\cite{DiPernaLions89}.

To demonstrate the connection to existing formulations, we assume
that $\bsZ_2 \equiv0$, which allows us to find $\bsa$ explicitly. We
then have:
\begin{corollary} \label{C:Z2Zero}
    Under the same assumptions as in Theorem \ref{T:MainTheorem}, if
$\bsZ_2\equiv 0$, the one-body position dynamics are, up to errors
of $\mathcal{O}(\epsilon^2)$, governed by
\begin{align}
 \partial_\tau \rho(\bsr,\tau) &= \div{\bsr}\Big(\bsD(\bsr,\tau)
\Big[\grad{\bsr} \rho(\bsr,\tau) + \tfrac{1}{k_BT}
\rho(\bsr,\tau) \grad{\bsr}V_1(\bsr,\tau)
\notag \\
&\qquad \qquad \qquad \qquad \qquad + \tfrac{1}{k_BT} \int \dd \bsr'
\rho^{(2)}(\bsr, \bsr',\tau) \grad{\bsr}V_2(\bsr,\bsr')\Big] \Big),
\label{Smoluchowski}
\end{align}
where we have defined
$\rho^{(2)}(\bsr,\bsr',\tau):=\rho(\bsr,\tau)\rho(\bsr',\tau)g(\bsr,\bsr',\tau)$,
as it would be for the Enskog approximation, and the $3 \times 3$
diffusion tensor $\bsD$ is given by
\[
    \bsD(\bsr,\tau) = \frac{k_BT}{m\gamma} \Big[\bsone + \int \dd \bsr'
g(\bsr,\bsr') \rho(\bsr',\tau)
\bsZ_1(\bsr,\bsr') \Big]^{-1}
\]
\end{corollary}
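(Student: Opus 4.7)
The plan is to observe that when $\bsZ_2 \equiv 0$, the Fredholm integral equation (\ref{aFredholm}) for $\bsa$ becomes algebraic pointwise in $\bsr$, so we can solve for $\bsa$ explicitly and then substitute into the continuity equation of Theorem \ref{T:MainTheorem}.

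First, I would set $\bsZ_2 \equiv 0$ in (\ref{aFredholm}). The third term on the left-hand side then vanishes, and the remaining two terms on the left both contain $\bsa(\bsr,\tau)$ evaluated at the same point $\bsr$ (the integral is only over $\bsr'$ via $g$, $\rho$ and $\bsZ_1$). We can therefore factor $\bsa(\bsr,\tau)$ out to obtain
\[
\Big[\bsone + \int \dd \bsr' g(\bsr,\bsr') \rho(\bsr',\tau) \bsZ_1(\bsr,\bsr')\Big]\bsa(\bsr,\tau) = -\Big[\grad{\bsr} + \tfrac{1}{k_BT}\big(\grad{\bsr}U_1(\bsr,\tau) + \int \dd \bsr' \rho(\bsr',\tau) g(\bsr,\bsr')\grad{\bsr}U_2(\bsr,\bsr')\big)\Big]\rho(\bsr,\tau).
\]
The matrix in square brackets on the left is $\tfrac{m\gamma}{k_BT}\bsD(\bsr,\tau)$ by definition, and it is invertible because $\bGa$ (and hence $\bsone + \int g \rho \bsZ_1$, which is the one-body reduction of its $(1,1)$-block under Assumption 2) is positive definite. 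Inverting yields
\[
\bsa(\bsr,\tau) = -\tfrac{m\gamma}{k_BT}\bsD(\bsr,\tau)\Big[\grad{\bsr}\rho(\bsr,\tau) + \tfrac{1}{k_BT}\rho(\bsr,\tau)\grad{\bsr}U_1(\bsr,\tau) + \tfrac{1}{k_BT}\int \dd \bsr' \rho(\bsr,\tau)\rho(\bsr',\tau)g(\bsr,\bsr')\grad{\bsr}U_2(\bsr,\bsr')\Big],
\]
where I have distributed $\rho(\bsr,\tau)$ across the gradient bracket on the right of the Fredholm equation.

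Substituting this expression for $\bsa$ into $\partial_\tau \rho = -\tfrac{k_BT}{m\gamma}\div{\bsr}\bsa$ gives exactly (\ref{Smoluchowski}), once the product $\rho(\bsr,\tau)\rho(\bsr',\tau)g(\bsr,\bsr')$ in the last term is identified with $\rho^{(2)}(\bsr,\bsr',\tau)$ as defined in the statement. The $\tfrac{k_BT}{m\gamma}$ factors cancel against $\bsD$ to leave $\bsD$ itself multiplying the bracket, and the overall minus signs combine to yield the stated $+\div{\bsr}$. The error is $\cO(\epsilon^2)$ inherited directly from Theorem \ref{T:MainTheorem}.

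There is no serious obstacle here: the only nontrivial point is invertibility of $\bsone + \int g \rho \bsZ_1$, and this is a direct consequence of the positive-definiteness of the full friction tensor $\bGa$ that was used throughout Section \ref{S:Hilbert} (in particular in Lemma \ref{L:IntegralPD}), since the integral term is the one-body reduction of the diagonal block $\tilde\bGa_{ii}$ under Assumptions 2 and 3. The remainder of the argument is algebraic bookkeeping.
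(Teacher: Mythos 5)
Your proposal is correct and takes essentially the same (and only reasonable) route as the paper: set $\bsZ_2 \equiv 0$ in (\ref{aFredholm}), factor $\bsa(\bsr,\tau)$ out of the remaining left-hand side, invert the resulting pointwise-in-$\bsr$ matrix, and substitute into the continuity equation, with positive-definiteness of $\bGa$ supplying invertibility (the paper spells this last point out in the paragraph immediately following the corollary, using $\int \dd\bsr'\,\rho(\bsr',\tau)g(\bsr,\bsr')=N-1$ to pass from the principal minor $\bsone+\sum_{j\neq 1}\bsZ_1(\bsr_1,\bsr_j)$ to the integral form). One small slip: you identify the factored matrix $\bsone+\int\dd\bsr'\,g\,\rho\,\bsZ_1$ with $\tfrac{m\gamma}{k_BT}\bsD$, whereas by the definition of $\bsD$ it equals $\tfrac{k_BT}{m\gamma}\bsD^{-1}$; your subsequent inversion $\bsa=-\tfrac{m\gamma}{k_BT}\bsD[\cdots]$ is nevertheless the correct result, so the two errors in that intermediate sentence cancel and the rest of the bookkeeping goes through as you say.
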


Corollary \ref{C:Z2Zero} gives a one-body Smoluchowski equation with a
novel form for the diffusion tensor $\bsD$.  As is clear from the notation,
$\bsD(\bsr,\tau)$ not only depends on the position but also on the time.  This
time-dependence is present through the time-dependence of $\rho$, against which
the two-body terms must be averaged.

One obvious question is whether $\bsD$ is positive definite.  A simple
calculation shows that this is indeed the case.  Note that $\bsone + \sum_{j\neq
1} \bsZ_1(\bsr_1,\bsr_j)$ is positive definite (since it is a principal minor of
$\bGa$, which is positive definite).  Hence, for any $\bsv(\bsr,\tau)$, we
have, for some $\delta>0$,
\begin{align*}
& \bsv(\bsr_1,\tau) \cdot [\bsone + \sum_{j\neq 1}
\bsZ_1(\bsr_1,\bsr_j)] \bsv(\bsr_1,\tau) \geq \delta |\bsv(\bsr_1,\tau)|^2 \\
& \Rightarrow \bsv(\bsr_1,\tau) \cdot \int \dd \bsr_2
\rho(\bsr_2,\tau)g(\bsr_1,\bsr_2) [\tfrac{1}{N-1} \bsone +
\bsZ_1(\bsr_1,\bsr_2)] \bsv(\bsr_1,\tau) \geq \delta |\bsv(\bsr_1,\tau)|^2,
\end{align*}
where the proof is virtually identical to that of Lemma \ref{L:IntegralPD},
except we do not integrate over $\bsr_1$.  Since $\int \dd \bsr_2
\rho(\bsr_2,\tau) g(\bsr,\bsr_2)= N-1$, and $k_BT$, $m$ and $\gamma$ are
positive, this is equivalent to $\bsD^{-1}$, and hence $\bsD$, being positive
definite.

We now compare our result with that derived by Rex and L\"owen
\cite[(5)--(8)]{RexLowen09}. As demonstrated in Figure
\ref{Fig:NonCommutative}, their Smoluchowski equation is derived
from the $N$-body Smoluchowski equation for pairwise additivity of
both the potential (our Assumption 1) and diffusion tensor.  The
second assumption is analogous to our Assumption 2, but not
equivalent, as the inverse of a matrix (recall $\bGa \bsD= k_BT/m
\bsone$) with pairwise terms need not contain only pairwise terms.
However, there are situations where the two assumptions are essentially
equivalent, such as in a diffuse colloid system.
The
underlying assumption then is that there exists an additional small
parameter, say $\lambda$, with $1 \gg \lambda \gg \epsilon$ and such
that $\bsZ_1 = \cO(\lambda)$. Then, up to errors of
$\cO(\lambda^2)$, $\bsD(\bsr,\tau)=\tfrac{k_BT}{m\gamma}[ \bsone
-\int \dd \bsr' g(\bsr,\bsr') \rho(\bsr',\tau) \bsZ_1(\bsr,\bsr')]$.
and thus the diffusion tensor is a two-body one. We note that the
analogue of Assumption 3 is
$\rho^{(2)}(\bsr_1,\bsr_2,t)=\rho(\bsr_1,t)\rho(\bsr_2,t)
g(\bsr_1,\bsr_2)$, which can be seen by integrating out the momentum
dependence.

The simplest case is that in which both $\bGa:=\gamma \bsone$ and $\bsD:=D_0
\bsone$ are proportional to the identity matrix, when we have the standard
definition $D_0=k_BT/(m\gamma)$.  In this case it is easy to check that the two
formulations agree (see also \cite{Archer09}).  This is unsurprising as both
the difficulties and interest in this analysis lie with the non-uniform terms
in the friction tensor.

To demonstrate that the two formulations differ in general, we consider the
simple example used in Corollary \ref{C:Z2Zero}.  In addition, we assume
the existence of a parameter $\lambda$, as described above.  Then, by the block
diagonal form of $\bGa$, $\bsD$ is also block diagonal with blocks $D_0 (1 -
\sum_{\ell \neq i} \bsZ_1(\bsr_i,\bsr_\ell))$, i.e.\ in the notation of
\cite{RexLowen09} $\bsw_{11}=-\bsZ_1$.  The result to compare with
(\ref{Smoluchowski}) is (see \cite{RexLowen09})
\begin{align*}
    \partial_\tau \rho(\bsr,\tau) &= D_0 \div{\bsr} \Big[
    \grad{\bsr} \rho(\bsr,\tau) + \tfrac{1}{k_BT}
\rho(\bsr,\tau) \grad{\bsr}V_1(\bsr,\tau)  \\
& \quad + \tfrac{1}{k_BT} \int \dd \bsr'
\rho^{(2)}(\bsr, \bsr',\tau) \grad{\bsr}V_2(\bsr,\bsr')\\
&\quad - \int \dd \bsr' \bsZ_1(\bsr,\bsr') \Big( \grad{\bsr}
\rho^{(2)}(\bsr,\bsr',\tau) + \tfrac{1}{k_BT} \grad{\bsr}[V_1(\bsr,\tau) +
V_2(\bsr,\bsr')] \rho^{(2)}(\bsr,\bsr',\tau) \\
& \qquad \qquad \qquad \qquad \qquad + \tfrac{1}{k_BT} \int \dd \bsr''
\rho^{(3)}(\bsr,\bsr',\bsr'',\tau) \grad{\bsr}V_2(\bsr,\bsr'') \Big) \Big].
\end{align*}
Using the approximate two-body form of $\bsD$ in (\ref{Smoluchowski}) gives
\begin{align*}
\partial_\tau \rho(\bsr,\tau) &= D_0 \div{\bsr}
\Big[\grad{\bsr}
\rho(\bsr,\tau) + \tfrac{1}{k_BT}
\rho(\bsr,\tau) \grad{\bsr}V_1(\bsr,\tau) \\
&\quad + \tfrac{1}{k_BT} \int \dd \bsr'
\rho^{(2)}(\bsr, \bsr',\tau) \grad{\bsr}V_2(\bsr,\bsr')
\\
& \quad -\int \dd \bsr' \bsZ_1(\bsr,\bsr') \Big( \rho(\bsr',\tau)
g(\bsr,\bsr') \grad{\bsr}\rho(\bsr,\tau) + \tfrac{1}{k_BT}
\rho^{(2)}(\bsr,\bsr',\tau) \grad{\bsr} V_1(\bsr,\tau)\\
& \qquad \qquad \qquad \qquad \quad + \tfrac{1}{k_BT}
\rho(\bsr',t)g(\bsr,\bsr') \int \dd \bsr'' \rho^{(2)}(\bsr,\bsr'',\tau)
\grad{\bsr}V_2(\bsr,\bsr'') \Big) \Big],
\end{align*}
and it is clear that the two formulations are not, in general,
equivalent.
See Figure \ref{Fig:NonCommutative} for a diagrammatic
representation of the difference in the formalisms.

Interestingly, despite their obvious differences, in the overdamped
limit both formulations are accurate to $\cO(\epsilon^2)$. There
does not seem to be any mathematical or physical justification to
say that one of them is `more correct' than the other. However,
these differences make it clear that the two processes, (i)
adiabatically eliminating the fast momentum variable and (ii)
integrating over all but one particle's coordinates, do not commute.
It is worth noting that the need for knowledge of $\rho^{(3)}$ in
the first case stems from the explicit coupling of the two-body
diffusion tensor and potential in the $N$-body Smoluchowski
equation. In contrast, the potential and friction tensor are not
explicitly coupled in the $N$-body Kramers equation, and thus only
$\rho^{(2)}$ is required.  This is a partial explanation of why the
resulting equations must be different. 

It would be interesting to
perform numerical studies to see if one can quantify the
differences, i.e.\ if one can determine the magnitude of the
difference in the $\cO(\epsilon^2)$ terms.
The first form above has been implemented numerically as a DDFT by making
the further approximation that the term involving the
many-body potential is given by its value in an equilibrium system
with the same one-body density \cite{RexLowen09}. This introduces additional,
uncontrolled errors and as such a direct comparison with the new formulation
presented here, which requires no further approximations, would likely be
uninformative. For further numerical studies, including comparison with the full
underlying stochastic dynamics, demonstrations of the large qualitative and
quantitative effects of hydrodynamic interactions, and a novel DDFT
including inertial effects, see \cite{GNSPK12}.

We close by stating a result which is most useful when reducing from
a phase-space dynamical density functional theory to one in only
position space:
\begin{corollary} \label{C:epsilon^2}
    Terms proportional to Hermite polynomials of order 2 and higher in
$\bsp$ enter $f(\bsr,\bsp,t)$ at most with order $\epsilon^2$.
\end{corollary}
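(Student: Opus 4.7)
The plan is that this corollary follows directly from the explicit forms of $f_0$ and $f_1$ already established, together with the Hilbert expansion (\ref{fexpansion}). There is essentially no new work to do: the result amounts to inspecting the Hermite content of the leading two terms.

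First I would recall the expansion $f^{(1)}(\bsr,\bsp,t) = \sum_{n\geq 0} \epsilon^n f_n(\bsr,\bsp,t)$, so that terms of Hermite degree $\geq 2$ in $\bsp$ appear in $f^{(1)}$ with the lowest power of $\epsilon$ equal to the smallest $n$ for which the expansion of $f_n$ in the basis $\{\e{-|\bsp|^2/2} P_{k,\bsa}(\bsp)\}$ contains a nonzero coefficient with $k\geq 2$. It therefore suffices to show that $f_0$ and $f_1$ have no such contributions.

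Next I would invoke Corollary \ref{C:f0}: $f_0(\bsr,\bsp,t) = Z^{-1}\e{-|\bsp|^2/2}\rho_0(\bsr,t)$, which is of the form $\rho_0(\bsr,t)\,\e{-|\bsp|^2/2} P_{0,\mathbf{0}}(\bsp)$, i.e.\ purely degree $0$ in the Hermite basis. Then I would invoke Lemma \ref{L:f1explicit}: $f_1(\bsr,\bsp,t) = [\bsa(\bsr,t)\cdot\bsp + \psi(\bsr,t)]\,Z^{-1}\e{-|\bsp|^2/2}$. Since $H_0(x)=1$ and $H_1(x)=x$, one has
\[
\bsa(\bsr,t)\cdot\bsp + \psi(\bsr,t) = \psi(\bsr,t)\, P_{0,\mathbf{0}}(\bsp) + \sum_{j=1}^3 a_j(\bsr,t)\, P_{1,\bse_j}(\bsp),
\]
so $f_1$ is a linear combination of basis elements of Hermite degree $0$ and $1$ only. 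Hence $f_0$ and $f_1$ contain no $P_{k,\bsa}$ with $k\geq 2$, and the earliest order at which such a term can appear in $f^{(1)}$ is $\epsilon^2$, as claimed.

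Since both inputs are already proved, there is no real obstacle; the only thing to be careful about is phrasing—namely that the corollary bounds the order at which such Hermite components first \emph{can} appear, not that $f_2$ necessarily does contain them. In fact, the explicit formula (\ref{f2L-1}) together with Lemma \ref{L:eps0terms} shows that $f_2$ generically does contain degree-$2$ Hermite contributions (coming, for example, from the $\bsp\cdot\grad{\bsr}\bsa(\bsr,t)\bsp$ term in $\cL_1 f_1$), so the bound is sharp; I would mention this parenthetically but not prove it, since it is not needed for the stated result.
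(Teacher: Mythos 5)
Your argument is correct and is precisely the reasoning the paper relies on: the corollary is an immediate consequence of Corollary \ref{C:f0} and Lemma \ref{L:f1explicit} showing that $f_0$ and $f_1$ contain only Hermite components of degree $0$ and $1$, so degree-$\geq 2$ terms first appear at order $\epsilon^2$. The parenthetical remark on sharpness via the $\bsp\cdot\grad{\bsr}\bsa\,\bsp$ term in $\cL_1 f_1$ is a nice, accurate addition but not required.
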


\section{Conclusions and open problems} \label{S:Conclusions}
Our main result is that, for suitable two-body potentials and
friction tensors, and using the Enskog approximation in the limit of
small $\epsilon$, the leading-order solution to (\ref{dtf1abstract})
is given by Theorem \ref{T:MainTheorem}.  This is a novel
Smoluchowski-type equation with a new definition of the one-body
diffusion tensor.  In addition, the Hilbert expansion studied in
Section \ref{S:Hilbert} allows us to show rigorously that a term
typically neglected by heuristic arguments in the derivation of DDFT
\cite{Archer06,Archer09} is indeed negligible in the overdamped
limit; see Corollary \ref{C:epsilon^2}.  However, these results have
only been shown to hold when the initial condition is independent of
$\epsilon$, along with assuming that $g$ is independent of
$\epsilon$ and $\bsp$.  We now discuss how removing these
assumptions should be tackled.


The assumption that the initial condition is independent of
$\epsilon$ was made for convenience as it allows for analytical
progress. In general, however, the question of how the initial
condition for Kramers equation is related to the correct
corresponding initial condition for the Smoluchowski equation needs
to be addressed. As with the Boltzmann equation (cf.
\cite{EspositoLebowitzMarra99}) and also noted in the discussion
following (\ref{fexpansion}), we would expect a boundary layer in
time (much shorter than the macroscopic timescale discussed in this
work), over which the given initial condition is attracted to one
with Gaussian momentum distribution. However, this introduces
additional complications, that can be studied by modifying
appropriately the Hilbert expansion, introducing terms that account
for the boundary layer and decay exponentially in time, see e.g.\
\cite{Banasiak06}. We note that even if the non-negativity
preserving assumption of Theorem \ref{T:MainTheorem} did not hold,
then the evolution equations given would be accurate to
$\cO(\epsilon)$. Whilst it would be ideal to have a proof that the
evolutions in Section \ref{S:epsilon1} do preserve the
non-negativity of $\psi$, as mentioned previously, this leads to
significant additional technical difficulties and may well require
further assumptions on the potentials and hydrodynamic interactions.


In addition, as mentioned in Section \ref{S:Introduction}, according
to DFT, $g$ is also a functional of $\rho$. If $\rho$ is independent
of $\epsilon$ (i.e.\ depends only on $f_0$) then the analysis is
unaffected. This is the case if the initial condition is independent
of $\epsilon$ and the terms $\psi_i$ (the part of $f_i$ in the null
space of $\tilde \cL$) are uniformly zero for all time and all $i$.
As mentioned earlier, such a result would rely on the dissipativity
of the determining equations, or equivalently on the equations being
non-negativity preserving.

If $\rho(\bsr,t)$ depends on $\epsilon$, then the nonlinear
operators are no longer quadratic in $f$ as $\rho$ depends on
higher-order, $\epsilon$-dependent parts of $f$.  Furthermore, we do
not know the precise dependence of $g$ on $\rho$.  However, if we
expand $g$ as a power series in $\epsilon$, $g=g_0 + \epsilon g_1 +
\dots$, then the equations in $\epsilon^{-2}$ and $\epsilon^{-1}$
change only by replacing $g$ with $g_0$.  This is because there are
no extra terms in the $\epsilon^{-2}$ equation, and only the extra
term $\cN(f_0,f_0,g_1)=0$ enters the $\epsilon^{-1}$ equation. The
main point is that $g_1$ enters only through non-linear terms,
namely by the addition of the terms $\cN_0(f_1,f_0,g_1)$,
$\cN_0(f_0,f_1,g_1)$ and $\cN_1(f_0,f_0,g_1)$ to the right hand side
of (\ref{eps0a}) (where we have now shown the explicit dependence of
the non-linear terms on the $g_j$). Thus the conclusion that the
dynamics of $f_0$ are governed by the solvability condition $\int
\dd \bsp (\cL_1 f_1 - \partial_t f_0)=0$ still holds.

The first difference comes when determining $f_2$, or more precisely $P_1
f_2$, which gains additional $g_1$-dependent terms.  The evolution equation for
$\psi$ looks superficially similar, but results in a new definition of $\tilde
\bsa_2$ and hence also of $\bsa_2$.


We note here that a similar argument applies if $g$ were chosen to depend
explicitly on time.  In particular, there are no further difficulties if $g$
depends only on the slowest timescale, i.e.\ if it is independent of $\epsilon$.
However, how one would choose this explicit dependence is unclear. The standard
approach is to choose $g$ to be either a functional of a suitably averaged
distribution $\bar\rho(t)$, or to satisfy the generalized Ornstein-Zernike
equation \cite{RexLowen09}.  In both cases, the time-dependence of $g$ is due
only to the time-dependence of $\rho$ and is not prescribed explicitly.


Allowing $g$ to depend (symmetrically) on $\bsp_1$ and $\bsp_2$ introduces many
additional complications in the analysis.  If the $\bsp$-dependence is
introduced at leading order it significantly changes the analysis of the
non-linear terms. Whilst, by (\ref{fnDefn}), it still holds that $\int \dd \bsr'
\dd \bsp' f^{(1)}(\bsr,\bsp,\tau) g(\bsr,\bsr',\bsp,\bsp')=N-1$, we actually
require an
expression for $\int \dd \bsp' f^{(1)}(\bsr,\bsp,\tau)
g(\bsr,\bsr',\bsp,\bsp')$.  For example, the $\cN_1(f_0,f_0)$ term in Lemma
\ref{L:L1N1} is significantly more complicated.


In this work we have solved the dynamics of the one-body
distribution up to errors of $\cO(\epsilon^2)$.  However, an
interesting question is whether the true solution and the solution
given by Theorem \ref{T:MainTheorem} are also close in some suitable
norm. Ideally, one would like to prove a result analogous to
Theorem 3.1 of \cite{EspositoLebowitzMarra99}, which states that,
under suitable assumptions, in a suitable norm, and for a fixed
macroscopic time period $(0,t_0]$, the solution to the Boltzmann
equation is $\mathcal{O}(\epsilon)$ close to the local Maxwellian
whose parameters vary according to the hydrodynamic equations. Note
that in the case where $t_0 \to \infty$, the constant in the
$\mathcal{O}(\epsilon)$ bound may diverge.

In order to prove such result, one must truncate the Hilbert series
at a finite order and add a remainder term.  One then determines
bounds on each of these terms, which require sufficiently good
estimates on the collision term (the hydrodynamic interactions and
$V_2$ in our case).  This truncation is necessary as the Hilbert
expansion does not converge uniformly in the small parameter.  Since
such estimates on the collision operator depend on its precise form
(in particular, it is assumed that the kernel of the Boltzmann
collision operator has finite range; not true for hydrodynamic
interactions), and a specific choice of norms, we have restricted
our analysis to determining the leading order terms in such an
expansion.

We close by discussing some open problems.  The first area concerns
confined fluids and the effects of boundaries.  Although the
external potential $V_1$ may be used to model boundaries which are
impermeable to the colloid particles but permeable to the fluid, a
truly confined fluid cannot be modelled in this way.  Extension to
such systems would require a treatment of the hydrodynamic
interactions caused by the boundaries.  Such effects break the
symmetry of the bath, as well as changing the mobility of the
colloid particles near the boundaries. Additional complications
would result from the presence of heterogeneities at boundaries,
which is indeed the case in practice. Heterogeneous boundaries,
either chemical or topographical, can have a significant effect on
the behavior of fluids both at both the micro-scale (e.g. they can
influence the thickness of the wetting layer in the immediate
vicinity of the boundary and corresponding wetting transitions) and
macro-scale (they can affect the shape of the gas-liquid interface
away from the
boundaries)~\cite{Elley98,Quere90,Parry08,Savva09,Savva10,Nold11}.
It would also be of interest to study mixtures of colloid particles,
e.g.\ a system with two types of particle which differ in their
sizes, masses, or interparticle potentials $V_2$. As mentioned
above, a full treatment of the problem would involve analysis of
boundary layer effects, including how the initial condition for the
Smoluchowski equation should be determined by that for the Kramers
equation.  These and related issues are currently under
investigation.

\section*{Acknowledgements}

We are grateful to Alexandr Malijevsk\'y, Andreas Nold and Peter
Yatsyshin for stimulating discussions on density functional theory.

\appendix
\section{Proofs of some lemmas of Section \ref{S:Hilbert}} \label{A:Proofs}

\subsection{Proof of Lemma \ref{L:IntegralPD}} \label{A:IntegralPD}
Let $\bsw=\big(\bsv(\bsr_1,\bsp_1,t), \dots \bsv(\bsr_N,\bsp_N,t)\big)^T$.
Since $\bGa(\bsr^N)$ is positive-definite, we have, for some $\delta>0$, $\bsw
\cdot \bGa \bsw \geq \delta \bsw \cdot \bsw$.  Hence
\begin{align*}
\bsw \cdot \bGa \bsw &= \sum_{i,j=1}^N \bsv(\bsr_i,\bsp_i,t)
\cdot \bGa_{ij}(\bsr^N) \bsv(\bsr_i,\bsp_i,t) \\
&= \sum_{i=1}^N \bsv(\bsr_i,\bsp_i,t) \cdot \Big[ \bsone + \sum_{j\neq i}
\bsZ_1(\bsr_i,\bsr_j) \Big] \bsv(\bsr_i,\bsp_i,t)  \\
&\qquad + \sum_{i \neq j}
\bsv(\bsr_i,\bsp_i,t) \cdot \bsZ_2(\bsr_i,\bsr_j) \bsv(\bsr_j,\bsp_j,t)\geq
\delta \sum_{i=1}^N |\bsv(\bsr_i,\bsp_i,t)|^2.
\end{align*}
Since $f^{(N)}(\bsr^N,\bsp^N,t)$ is non-negative, and positive on a set of
non-zero measure (as by definition $f^{(N)} \geq 0$ and $\int \dd \bsr^N \dd \bsp^N f^{(N)} = N$), we have
\begin{align*}
    & \int \dd \bsr^N \dd \bsp^N f^{(N)}(\bsr^N,\bsp^N,t) \Big( \sum_{i=1}^N
\bsv(\bsr_i,\bsp_i,t) \cdot \Big[ \bsone + \sum_{j\neq i}
\bsZ_1(\bsr_i,\bsr_j) \Big] \bsv(\bsr_i,\bsp_i,t)  \\
& \qquad \qquad \qquad \qquad \qquad \qquad \qquad + \sum_{i \neq j}
\bsv(\bsr_i,\bsp_i,t) \cdot \bsZ_2(\bsr_i,\bsr_j) \bsv(\bsr_j,\bsp_j,t) \Big)
\\
&\geq \delta \int \dd \bsr^N \dd \bsp^N f^{(N)}(\bsr^N,\bsp^N,t)
\sum_{i=1}^N |\bsv(\bsr_i,\bsp_i,t)|^2.
\end{align*}
By the symmetry of $f^{(N)}$, interchanging dummy variables of integration
gives
\begin{align*}
    & \int \dd \bsr^N \dd \bsp^N f^{(N)}(\bsr^N,\bsp^N,t) \Big( N
\bsv(\bsr_1,\bsp_1,t) \cdot \Big[ \bsone +
(N-1)\bsZ_1(\bsr_1,\bsr_2) \Big] \bsv(\bsr_1,\bsp_1,t)  \\
& \qquad \qquad \qquad \qquad \qquad \qquad \qquad + N(N-1)
\bsv(\bsr_1,\bsp_1,t) \cdot \bsZ_2(\bsr_1,\bsr_2) \bsv(\bsr_2,\bsp_2,t) \Big)\\
&\geq \delta N \int \dd \bsr^N \dd \bsp^N f^{(N)}(\bsr^N,\bsp^N,t)
|\bsv(\bsr_1,\bsp_1,t)|^2.
\end{align*}
Using (\ref{fnDefn}) for the cases with $n=1$ and $n=2$, i.e.\
\begin{align*}
f^{(2)}(\bsr_1,\bsp_1,\bsr_2,\bsp_2,t) &= N(N-1)\int \dd \bsr^{N-2} \dd
\bsp^{N-2} f^{(N)}(\bsr^N,\bsp^N,t) \mbox{ and}\\
f^{(1)}(\bsr_1,\bsp_1,t) &= N\int \dd \bsr^{N-1} \dd
\bsp^{N-1} f^{(N)}(\bsr^N,\bsp^N,t)
\end{align*}
gives
\begin{align*}
    & \int \dd \bsr_1 \dd \bsp_1 \dd \bsr_2 \dd \bsp_2
f^{(2)}(\bsr_1,\bsp_1,\bsr_2,\bsp_2,t) \Big( \bsv(\bsr_1,\bsp_1,t) \cdot \Big[
\tfrac{1}{N-1} \bsone +
\bsZ_1(\bsr_1,\bsr_2) \Big] \bsv(\bsr_1,\bsp_1,t)  \\
& \qquad \qquad \qquad \qquad \qquad \qquad \qquad \qquad \qquad \qquad +
\bsv(\bsr_1,\bsp_1,t) \cdot \bsZ_2(\bsr_1,\bsr_2) \bsv(\bsr_2,\bsp_2,t) \Big)
\\
&\geq \delta \int \dd \bsr_1 \dd \bsp_1 f(\bsr_1,\bsp_1,t)
|\bsv(\bsr_1,\bsp_1,t)|^2.
\end{align*}
Inserting the definition
$f^{(2)}(\bsr_1,\bsp_1,\bsr_2,\bsp_2,t)=g(\bsr_1,\bsr_2)f(\bsr_1,\bsp_1,
t)f(\bsr_2,\bsp_2,t)$ and renaming the dummy variables gives the result.

The fact that $f$ may be chosen as $f^{(1)}$ is trivial.  To see that the result
holds when $f$ is replaced by $f_0$, we insert the expansion (\ref{fexpansion})
and note $\int \dd \bsr' \dd \bsp' g(\bsr,\bsr') f^{(1)}(\bsr',\bsp',t) = N-1$
holds for all $\epsilon$, in particular for $\epsilon = 0$.
$\endproof$

\subsection{Proof of Lemma \ref{L:SelfAdjoint}} \label{A:SelfAdjoint}
We consider each of the three operators in $\tilde \cL$ individually, starting
with $\cL_0$.  For arbitrary $f$, $\tilde f$, and using Corollary \ref{C:f0},
in particular that $\grad{\bsp} (f_0^{-1}) = \bsp f_0^{-1}$
\begin{align*}
    \langle f, \cL_0 \tilde f \rangle_{f_0^{-1}} &= \int \dd \bsr \dd \bsp
f_0^{-1}(\bsr,\bsp,t) f(\bsr,\bsp,t)
\div{\bsp}[(\bsp + \grad{\bsp}) \tilde f(\bsr,\bsp,t) ] \\
&= - \int \dd \bsr \dd \bsp f_0^{-1}(\bsr,\bsp,t) (\bsp + \grad{\bsp})
f(\bsr,\bsp,t)  \cdot
[(\bsp + \grad{\bsp}) \tilde f(\bsr,\bsp,t) ] \\
& = \int \dd \bsr \dd \bsp f_0^{-1}(\bsr,\bsp,t) \tilde f(\bsr,\bsp,t)
\div{\bsp}[ (\bsp + \grad{\bsp}) f(\bsr,\bsp,t)] ,
\end{align*}
where the second and third lines both follow via integration by parts.  Hence
$\cL_0$ is self-adjoint.

For $\cN_0(f_0,\tilde f)$, by Lemma \ref{L:N0} we have
\begin{align*}
    &\langle f, \cN_0(f_0,\tilde f)\rangle_{f_0^{-1}} = \int \dd \bsr \dd \bsp
f_0^{-1}(\bsr,\bsp,t)f(
\bsr,\bsp,t) \\
&\qquad\qquad\qquad\qquad \times\div{\bsp} \Big[ \int \dd \bsr' \dd \bsp'
g(\bsr,\bsr') f_0
(\bsr',\bsp',t)
\bsZ_1(\bsr,\bsr')  \times (\bsp + \grad{\bsp}) \tilde
f(\bsr,\bsp,t) \Big]\\
&=-\int \dd \bsr \dd \bsp f_0^{-1}(\bsr,\bsp,t) (\bsp+\grad{\bsp})
f(\bsr,\bsp,t) \\
&\qquad\qquad\qquad\qquad\cdot \Big[ \int \dd \bsr'
\dd \bsp' g(\bsr,\bsr')
 f_0 (\bsr',\bsp',t) \bsZ_1(\bsr,\bsr') \times (\bsp + \grad{\bsp}) \tilde
f(\bsr,\bsp,t) \Big]\\
&= -\int \dd \bsr \dd \bsp \dd \bsr' \dd \bsp' f_0^{-1}(\bsr,\bsp,t)
g(\bsr,\bsr') f_0(\bsr',\bsp',t) \\
&\qquad\qquad\qquad\qquad \times (\bsp + \grad{\bsp}) f(\bsr,\bsp,t) \cdot
\bsZ_1(\bsr,\bsr')(\bsp +
\grad{\bsp})
\tilde f(\bsr,\bsp,t),
\end{align*}
where we have used integration by parts and Fubini's theorem.  Now, since
$\bsZ_1$ is a symmetric matrix, $f$ and $\tilde f$ can be interchanged and the
argument reversed, showing that $\cN_0(f_0,\tilde f)$ is self-adjoint.

It remains to calculate the adjoint of $\cN_0(\tilde f,f_0)$.  Using Lemma
\ref{L:N0} gives
\begin{align*}
    &\langle f, \cN_0(\tilde f,f_0)\rangle_{f_0^{-1}} = - \int \dd \bsr \dd \bsp
f_0^{-1}(\bsr,\bsp,t)  f(\bsr,\bsp,t) f_0(\bsr,\bsp,t)\\
&\qquad \qquad\qquad\qquad\qquad \times  \int \dd \bsr' \dd
\bsp' g(\bsr,\bsr') \bsZ_2(\bsr,\bsr') (\bsp' + \grad{\bsp'}) \tilde
f(\bsr',\bsp',t) \cdot \bsp\\
&= - \int \dd \bsr \dd \bsp \dd \bsr' \dd \bsp'
g(\bsr,\bsr') f(\bsr,\bsp,t)   \tilde f(\bsr',\bsp',t) \bsp
\cdot \bsZ_2(\bsr,\bsr') \bsp' ,
\end{align*}
where we have used the divergence theorem, Fubini's theorem and the identity
(for symmetric matrices) $\bsZ_2 \bsp' \cdot \bsp = \bsp \cdot \bsZ_2 \bsp'$.
Since this final term, along with the rest of the integral is symmetric under
interchanging the pairs of dummy variables $(\bsr,\bsp) \leftrightarrow
(\bsr',\bsp')$ we see that  $\cN_0(\tilde f,f_0)$ is also self-adjoint.
The overall result now follows from linearity of the integral, and
hence of the adjoint.
$\endproof$

\subsection{Proof of Lemma \ref{L:resolvent}} \label{A:resolvent}
We prove the equivalent statement (for self adjoint operators, as $\tilde \cL$
is by Lemma \ref{L:SelfAdjoint}) that there exists an orthonormal basis
$(\xi_j)_{j=1}^\infty$ of $L^2_{f_0^{-1}}$ such that $\tilde \cL \xi_j =
\lambda_j \xi_j$, $\lambda_j \in \R$ such that $\lim_{j\to \infty} |\lambda_j| =
\infty$ \cite[Theorem 11.3.13]{deOliveria09}.  We make use of Lemma
\ref{L:IntegralPD}, which allows us to compare the eigenvalues of $\tilde L$ to
those where $\bsZ_i\equiv 0$, and the fact that the eigenfunctions and
eigenvalues of
the resulting operator can be constructed explicitly.  We note that the
$P_{n,\bsa}$ form a basis of $L^2(\R^3, \e{-|\bsp|^2/2})$ and so the functions
$\e{-|\bsp|^2/2}P_{n,\bsa}$ form a basis of $L^2(\R^3, \e{|\bsp|^2/2})$.

First note, by Lemma \ref{L:preservesN}, that the spaces
\[
\mbox{Span}\{ \e{-|\bsp|^2/2} P_{n,\bsa}(\bsp) | n \mbox{ fixed},
|\bsa|=n\}
\]
(where the coefficients may be functions of $\bsr$, $t$) are invariant under
$\tilde \cL$.  Thus all eigenfunctions may be written in
the form
\be
    \label{efnForm}
    \psi_{n,j}(\bsr,\bsp,t)=\sum_\bsa \beta_{\bsa,j}(\bsr,t) \e{-|\bsp|^2/2}
P_{n,\bsa}(\bsp),
\ee
where $j=1, \dots, T(n+1)$, with $T(n)$ the $n$-th triangular number (which
corresponds the the number of solutions to $a_1+a_2+a_3=n-1$).
Also, as noted in the proof of Lemma \ref{L:preservesN}, $\tilde
\cN(f,f_0)$ contributes only for $n=1$, and as such we may ignore it when
calculating the eigenvalues.  It therefore suffices to consider the eigenvalues
of
\begin{align*}
    \bar \cL f(\bsr,\bsp,t) &= \div{\bsp} \Big[ \int \dd \bsr' \dd \bsp'
g(\bsr,\bsr') f_0(\bsr',\bsp',t) \big( \tfrac{1}{N-1}\bsone + \bsZ_1(\bsr,\bsr')
\big) (\bsp+\grad{\bsp}) f(\bsr,\bsp,t) \Big] \\
&=: \div{\bsp} \bar\bsZ(\bsr,t)(\bsp+\grad{\bsp})f(\bsr,\bsp,t).
\end{align*}

Suppose $-\bar \cL \psi_{n,j} = \lambda_{n,j} \psi_{n,j}$, then
\begin{align*}
&\lambda_{n,j} \int \dd \bsr \dd \bsp f_0^{-1}(\bsr,\bsp,t)
|\psi_{n,j}(\bsr,\bsp,t)|^2\\
&\qquad = -\int \dd \bsr \dd \bsp f_0^{-1}(\bsr,\bsp,t) \psi_{n,j}(\bsr,\bsp,t)
\div{\bsp} \bar\bsZ(\bsr,t)(\bsp+\grad{\bsp}) \psi_{n,j}(\bsr,\bsp,t)\\
&\qquad = \int \dd \bsr \dd \bsp f_0^{-1}(\bsr,\bsp,t)
(\bsp+\grad{\bsp})\psi_{n,j}(\bsr,\bsp,t) \cdot
\bar\bsZ(\bsr,t)(\bsp+\grad{\bsp}) \psi_{n,j}(\bsr,\bsp,t)\\
&\qquad=\int \dd \bsr \dd \bsp \dd \bsr' \dd \bsp' f_0^{-1}(\bsr,\bsp,t)
g(\bsr,\bsr') f_0(\bsr',\bsp',t)\\
& \qquad \qquad \qquad
\times (\bsp+\grad{\bsp})\psi_{n,j}(\bsr,\bsp,t) \cdot
\big( \tfrac{1}{N-1}\bsone + \bsZ_1(\bsr,\bsr')
\big)(\bsp+\grad{\bsp}) \psi_{n,j}(\bsr,\bsp,t)\\
&\qquad =\int \dd \bsr \dd \bsp \dd \bsr' \dd \bsp'
f_0(\bsr,\bsp,t) f_0(\bsr',\bsp',t) g(\bsr,\bsr')
\\
& \qquad \qquad \qquad
\times \bsv(\bsr,\bsp,t) \cdot
\big( \tfrac{1}{N-1}\bsone + \bsZ_1(\bsr,\bsr')
\big)\bsv(\bsr,\bsp,t),
\end{align*}
where the second equality follows via integration by
parts and that
$f_0^{-1}(\bsr,\bsp,t)=\rho_0^{-1}(\bsr,t)Z\exp(|\bsp|^2/2)$, and we denote
$\bsv(\bsr,\bsp,t) = (\bsp+\grad{\bsp})\psi_{n,j}(\bsr,\bsp,t)
f_0^{-1}(\bsr,\bsp,t)$. Now
note that Lemma \ref{L:IntegralPD} holds when $\bsZ_2$ is set to zero since it
requires only that $\bGa$ is positive definite with the correct symmetry.  Since
$\bGa_{11}$ is a principal minor of $\bGa$, it is positive definite, and by
symmetry so are all $\bGa_{jj}$.  It therefore follows that the block diagonal
matrix with entries $\bGa_{jj}$ is also positive definite, with the same required
symmetry as $\bGa$ and we have
\begin{align*}
&\lambda_{n,j} \int \dd \bsr \dd \bsp  f_0^{-1}(\bsr,\bsp,t)
|\psi_{n,j}(\bsr,\bsp,t)|^2 \\
&\qquad \geq \delta \int \dd \bsr \dd \bsp
f_0(\bsr,\bsp,t)|\bsv(\bsr,\bsp,t)|^2 = \delta \int \dd \bsr \dd \bsp
f_0^{-1}(\bsr,\bsp,t) |(\bsp+\grad{\bsp})\psi_{n,j}(\bsr,\bsp,t)|^2\\
&\qquad= -\delta \int \dd \bsr \dd \bsp f_0^{-1}(\bsr,\bsp,t)
\psi_{n,j}(\bsr,\bsp,t) \div{\bsp} (\bsp+\grad{\bsp})\psi_{n,j}(\bsr,\bsp,t).
\end{align*}

We now compute $\div{\bsp} (\bsp+\grad{\bsp})\psi_{n,j}(\bsr,\bsp,t)$:
\begin{align*}
    &\div{\bsp} (\bsp + \grad{\bsp}) \e{-|\bsp|^2/2} P_{n,\bsa}(\bsp)  =
\sum_{j=1}^3 \partial_{p_j}(p_j + \partial_{p_j})\e{-|\bsp|^2/2}
P_{n,\bsa}(\bsp) \\
&\qquad = \sum_{j=1}^3 \partial_{p_j} \big( \e{|\bsp|^2/2} \partial_{p_j} P_{n,
\bsa}(\bsp)\big)
= \sum_{j=1}^3 \partial_{p_j} \big( \e{|\bsp|^2/2} a_j P_{n,
\bsa-\bse_j}(\bsp)\big) \\
&\qquad = \sum_{j=1}^3 \e{|\bsp|^2/2} a_j (-p_j + \partial_{p_j}) P_{n,
\bsa-\bse_j}(\bsp) =- \e{|\bsp|^2/2}P_{n, \bsa}(\bsp) \sum_{j=1}^3 a_j
\\ & \qquad = -n \e{|\bsp|^2/2}P_{n, \bsa}(\bsp).
\end{align*}
The required identities for operators on $P_{n, \bsa}$ follow from its product
form and the equivalent 1-dimensional identities.  Note $\bse_j$ is the $j$th
unit vector.  Thus we have
\[
\lambda_{n,j} \int \dd \bsr \dd \bsp  f_0^{-1}(\bsr,\bsp,t)
|\psi_{n,j}(\bsr,\bsp,t)|^2 \geq n \delta \int \dd \bsr \dd \bsp
f_0^{-1}(\bsr,\bsp,t) |\psi_{n,j}(\bsr,\bsp,t)|^2
\]
and the result follows.
$\endproof$

\subsection{Proof of Lemma \ref{L:preservesN}} \label{A:preservesN}
From (\ref{L0}) and Lemma \ref{L:N0}, for $\cL_0 f$ and $\cN_0(f_0,f)$, it is
sufficient to consider a general operator
\[
    \overline \cL:=\grad{\bsp} \cdot \bsZ(\bsr) (\bsp+ \grad{\bsp}) =
\sum_{i,j} Z_{ij} \partial_{p_i}(p_j + \partial_{p_j}).
\]
We have, using the standard identity $\partial_x H_n(x)=n H_{n-1}(x)$,
\begin{align}
    (p + \partial_{p}) [H_{a}(p)\e{-p^2/2}] &=
    \e{-p^2/2} \partial_{p} H_a(p) = \e{-p^2/2} a H_{a-1}(p)
    \label{p+dp}
\end{align}
and using $H_{n+1}(x)=xH_n(x)-\partial_x H_n(x)$, we find
\begin{align*}
    \partial_{p}[H_{a}(p)\e{-p^2/2}] & = \e{-p^2/2}(-p+\partial_{p})H_{a}(p)
    = - H_{a+1}(p)\e{-p^2/2}
\end{align*}
It is therefore clear that $\overline \cL$ preserves $|\bsa|=n$, with the
possibility of the new coefficients all being zero.

It remains to consider $\cN_0(f,f_0)$, which by Lemma \ref{L:N0} is given by
\[
\cN_0(f,f_0)= - \tfrac{1}{mk_BT} f_0(\bsr,\bsp,t) \int \dd \bsr'
\dd \bsp' g(\bsr,\bsr') \bsZ_2(\bsr,\bsr') (\bsp' + \grad{\bsp'})
f(\bsr',\bsp',t) \cdot \bsp
\]
Note that $P_0=1$.  Using (\ref{p+dp}), $f_0 = Z^{-1} \e{-p^2/(2mk_BT)}
\rho_0(\bsr,t)$, and that the $P_{n,\bsa}$ are orthogonal, it is clear that the
integral gives zero for any terms not proportional to $p_i=H_{1,i}$, and in
this case returns something of the form $\alpha(\bsr)\cdot \bsp$.

Hence $\tilde \cL$ preserves $n$ and it remains to show that $\tilde \cL
\sum_{|\bsa|=n} \alpha_{n,\bsa}(\bsr,p)P_{n,\bsa}(\bsp) = 0$ if and only if
$\alpha_{n,\bsa}=0$ for all $\bsa$.  This follows from the null space of
$\tilde \cL$ being  $\e{-|\bsp|^2/2} \phi(\bsr,t)$ (see Lemma
\ref{L:nullLtildestar}), and thus containing only $P_0$, and the orthogonality
of the $P_{n,\bsa}$.
$\endproof$

\bibliographystyle{siam}
\bibliography{84465}
\end{document}